\providecommand{\U}[1]{\protect\rule{.1in}{.1in}}
\newcommand{\ahat}{\overline{A}}
\providecommand{\U}[1]{\protect\rule{.1in}{.1in}}
\newtheorem{theorem}{Theorem}
\newtheorem{definition}[theorem]{Definition}
\newtheorem{notation}[theorem]{Notation}
\newtheorem{proposition}[theorem]{Proposition}
\newenvironment{proof}[1][Proof]{\noindent\textbf{#1.} }{\ \rule{0.5em}{0.5em}}
\begin{document}
\preprint{ }
\title{Conditional Mutual Information and Quantum Steering}
\author{Eneet Kaur}
\email{ekaur1@lsu.edu}
\affiliation{Hearne Institute for Theoretical Physics, Department of Physics and Astronomy,
Baton Rouge, Louisiana 70803, USA}
\author{Xiaoting Wang}
\email{xiaoting@lsu.edu}
\affiliation{Hearne Institute for Theoretical Physics, Department of Physics and Astronomy,
Baton Rouge, Louisiana 70803, USA}
\author{Mark M. Wilde}
\email{mwilde@lsu.edu }
\affiliation{Hearne Institute for Theoretical Physics, Department of Physics and Astronomy,
Baton Rouge, Louisiana 70803, USA}
\affiliation{Center for Computation and Technology, Louisiana State University, Baton
Rouge, Louisiana 70803, USA}
\keywords{}
\pacs{PACS number}

\begin{abstract}
Quantum steering has recently been formalized in the framework of a resource
theory of steering, and several quantifiers have already been introduced.
Here, we propose an information-theoretic quantifier for steering called
\textit{intrinsic steerability}, which uses conditional mutual information to
measure the deviation of a given assemblage from one having a local
hidden-state model. We thus relate conditional mutual information to quantum
steering and introduce monotones that satisfy certain desirable properties.
The idea behind the quantifier is to suppress the correlations that can be
explained by an inaccessible quantum system and then quantify the remaining
intrinsic correlations. A variant of the intrinsic steerability finds
operational meaning as the classical communication cost of sending the
measurement choice and outcome to an eavesdropper who possesses a purifying
system of the underlying bipartite quantum state that is being measured.

\end{abstract}
\volumeyear{ }
\volumenumber{ }
\issuenumber{ }
\eid{ }
\date{\today}
\startpage{1}
\endpage{102}
\maketitle

\section{Introduction}Quantum steering was first introduced by
Schr\"{o}dinger in 1935 \cite{Schroedinger1935} in order to formalize an
argument made by Einstein, Podolsky, and Rosen in \cite{Einstein1935}. It
refers to the following scenario: two parties called Alice and Bob share a bipartite quantum
state. Alice measures her system, which can have the effect of steering the
reduced state on Bob's system, depending on the measurement that she performs.
She thus can influence Bob's subsystem without having access to it. However,
Bob does not have any knowledge about the influence, nor can he detect it
unless Alice communicates the measurement that she performed and the outcome
of the measurement. For example, consider a maximally entangled singlet shared
by Alice and Bob. Alice can measure her system in either the Pauli $\sigma
_{Z}$ basis or the Pauli $\sigma_{X}$ basis. If she measures in the Pauli
$\sigma_{Z}$ basis, the resulting state of Bob's subsystem is represented as
the ensemble $\left\{ (\frac{1}{2},\op{1}{1}),(\frac{1}{2}%
,\op{0}{0})\right\}$. Alternatively, if she measures in the Pauli
$\sigma_{X}$ basis, the state of Bob's subsystem is represented as the
ensemble $\left\{ (\frac{1}{2},\op{+}{+}),(\frac{1}{2},\op{-}{-})\right\}$.

The notion of steering was formalized in \cite{Wiseman2006}, which defines it
in the context of an entanglement certification task, with Alice having access
to an untrusted device and Bob to a trusted quantum system. Alice's %untrusted
device can be thought of as a black box, which accepts a classical input $X$
and outputs a classical system $\overline{A}$. 
%This device is allowed to have classical correlations with Bob's reduced state.
The mathematical description of the
relation between Alice's classical input $X$, her output $\overline{A}$, and Bob's
quantum system is called an \textit{assemblage}, whose formal definition we
recall later. %Alice has to convince a referee that the underlying state that
%they share is an entangled state. She first announces the ensembles that she
%can prepare on Bob's side. The referee then asks Alice to prepare a particular
%ensemble from the announced set. Alice inputs the choice $x$ to her black box,
%receives the output $a$, communicates the output to the referee, and Bob sends
%his quantum system to the referee. After sufficient runs of the experiment,
%the referee can verify if correlations produced from this protocol are a
%genuine signature of entanglement.

The fact that Alice's system is classical and Bob's system is quantum in the
scenario of steering makes it natural to study in the context of one-sided
device-independent tasks such as quantum key distribution \cite{Branciard2012}
and randomness certification \cite{Law2014,Passaro2015}. Apart from this,
Ref.~\cite{Piani2014} demonstrated the usefulness of steering in a task called
sub-channel discrimination, which deals with determining the direction of the
evolution of a system. Consider a state $\rho$ that evolves according to a
channel $\mathcal{N}=\sum_{z}p_{Z}(z)\mathcal{N}_{z}$, which is equal to a
random selection of a channel $\mathcal{N}_{z}$ according to the probability
distribution $p_{Z}$. Then the information regarding which path the system
takes is known as sub-channel discrimination.

A framework for a resource theory of steering was introduced in
\cite{Gallego2015}, in which one-way classical communication from Bob to Alice
and local operations (1W-LOCC) are taken as free operations. In
this framework, Bob is also allowed to measure his system and communicate the
classical measurement outcome prior to the measurement choice by Alice
\cite[Definition~1]{Gallego2015}. Thus, he can influence the input to her
black box. See Figure \ref{1W-LOCCf} for a schematic representation. In the resource theory of steering, any steering monotone should
be non-increasing under 1W-LOCC and equal to zero if a given assemblage is
unsteerable. It is also desirable for the quantity to be convex. Several
steering quantifiers, including robustness of steering \cite{Skrzypczyk2013},
steerable weight \cite{Piani2014}, and relative entropy of steering
\cite{Gallego2015,Kaur2016}, have been defined and proven to be a
steering monotone.

One contribution of our paper is to introduce \textit{intrinsic
steerability} as a measure of steering. Intrinsic steerability uses
conditional mutual information to measure the deviation of a given assemblage
from one having a local hidden-state model. The idea behind the quantifier is
to suppress the correlations that can be explained by an inaccessible quantum
system and then quantify the remaining intrinsic correlations. We prove that
intrinsic steerability is monotone with respect to 1W-LOCC and also that it is
convex and superadditive in general.
%, and continuous.
%which

We also consider a simpler, restricted class of free operations in which Bob
cannot influence Alice's input to her black box. In considering this
restricted class, we are motivated by practical, relativistic constraints that
can potentially limit the performance of Alice and Bob's quantum devices in
any quantum steering protocol. Typically, in any such protocol, Alice, Bob,
and the source of their systems are spatially separated, and furthermore,
their quantum devices typically have a finite coherence time. If Alice were to
wait to receive a signal from Bob before taking any action on her system, the
performance of her device could potentially get much worse than it would be if
she were simply instead to input to her system as soon as she receives it from
the source. This perspective motivates a restricted class of 1W-LOCC
operations in which any classical communication from Bob reaches Alice only
after she has received the output $\overline{A}$ from her black box. We refer to
these free operations as \textit{restricted 1W-LOCC}.

We define the \textit{restricted intrinsic steerability} as a steering
quantifier, which is relevant for the aforementioned restricted class of
1W-LOCC operations. We prove that, along with it being a monotone with respect
to restricted 1W-LOCC and satisfying the properties mentioned above, it also
satisfies additivity and monogamy. To our knowledge, this is the first measure shown to be monogamous and additive with respect to tensor products of assemblages. %(these others to be defined later).

Our approach to defining these steering quantifiers is inspired by the
approach of \cite{Tucci2002} to quantifying non-Markovianity in Bayesian
networks, which in turn bears connections to the squashed-entanglement measure
\cite{Christandl2003} and the intrinsic-information quantifier from classical
information theory \cite{MW99}. To see this, consider that correlations in any
unsteerable assemblage can be explained by a hidden variable, which implies
that such an assemblage has a Markov-chain structure. Assemblages with this
structure thus have zero conditional mutual information when conditioning on
the shared variable \cite{Hayden2003}, where we recall that the conditional
mutual information of a tripartite quantum state $\sigma_{KLM}$ is defined as
\begin{multline}
I(K;L|M)_{\sigma}:=H(KM)_{\sigma}+H(LM)_{\sigma}\\-H(KLM)_{\sigma} 
-H(M)_{\sigma}
\end{multline}
and $H(G)_{\omega}:=-\operatorname{Tr}(\omega_{G}\log
_{2}\omega_{G})$ denotes the quantum entropy of the state $\omega_{G}$ defined
on system $G$ (note that throughout this paper, we use the binary logarithm in
the definition of entropy). So our primary idea is to take a non-signaling
extension of an assemblage, remove the correlations which can be explained by
a shared variable (by conditioning), and then quantify the remaining intrinsic
correlations. %The intrinsic steerability is thus an information-theoretic
%measure representing one way of quantifying the deviation of the assemblage
%from one having a local-hidden-state description.

\section{Preliminaries}

We begin by reviewing the framework of quantum steering as discussed in \cite{Gallego2015}. Let $\rho_{AB}$ be a bipartite quantum state shared
by Alice and Bob. Suppose that Alice performs a measurement labeled by
$x\in\mathcal{X}$, with $\mathcal{X}$ denoting a finite set of quantum
measurements, and she gets a classical output $a \in\mathcal{A}$, with
$\mathcal{A}$ denoting a finite set of measurement outcomes. An
\textit{assemblage} consists of the state of Bob's subsystem and the
conditional probability of Alice's outcome $a$ (correlated with Bob's state)
given the measurement choice $x$. This is specified as
\begin{equation}
\{ p_{\overline{A}
|X}(a|x),\rho_{B}^{a,x}\} _{a\in\mathcal{A},x\in\mathcal{X}}.
\end{equation}
The
sub-normalized state possessed by Bob is
\begin{equation}
\hat{\rho}_{B}^{a,x}:=p_{\overline{A}%
|X}(a|x)\rho_{B}^{a,x}.
\end{equation}
 Taking
$p_{X}(x)$ as a probability distribution over measurement choices, we can then
embed the assemblage $\{ \hat{\rho}_{B}^{a,x}\}_{a,x}$ in a classical-quantum
state as follows:
\begin{equation}
\rho_{X\overline{A}B}:=\sum_{a,x}p_{X}(x) \op{x}{x}_{X}\otimes\op{a}{a}
_{\overline{A}} \otimes\hat{\rho}_{B}^{a,x},
\end{equation}
where $\{\ket{x}_{X}\}_{x}$ and $\{\ket{a}_{\overline{A}}\}_{a}$ are orthonormal bases.
Following the approach of \cite{Gallego2015}, 
we work directly with an assemblage in what follows, such that the device on Alice's side is considered as a black box, accepting a classical input $x$ and outputting a classical variable $a$ with probability $p_{\overline{A}%
|X}(a|x)$, while the quantum state of Bob's system is $\rho_{B}^{a,x}$.

%We have described an assemblage as resulting from Alice performing a
%measurement, labeled by $x$, on her share~$A$ of a bipartite state $\rho_{AB}%
%$. However, the perspective taken in prior works on quantum steering is to
%study the assemblage itself, without regard to any underlying bipartite state
%and measurement. One can think of an assemblage at some level as a quantum
%broadcast channel \cite{YHD2006}, accepting a classical input $x$ and
%outputting a classical system with value $a$ according to $p_{\overline{A}|X}(a|x)$
%and a quantum system in the state $\rho_{B}^{a,x}$. However, one should be
%careful with this perspective, because it masks the fact that Bob's quantum
%system $B$ is available before Alice inputs $x$ to her black box, which is
%rather different from the typical, intuitive way that one thinks of a quantum
%broadcast channel.

Assemblages are restricted by the no-signaling principle. That is, the reduced
state of Bob's system should not depend on the input $x$ to Alice's black box
if the measurement output $a$ is not available to him: 
\begin{equation}
\sum_{a}\hat{\rho}_{B}^{a,x}=\sum_{a}\hat{\rho}_{B}^{a,x^{\prime}}
\quad\forall x,x^{\prime}\in\mathcal{X}.
\end{equation}
This is equivalent to $I(X;B)_{\rho}=0$, where 
\begin{equation}
I(X;B)_{\rho}:=H(X)_{\rho
}+H(B)_{\rho}-H(XB)_{\rho}
\end{equation}
is the mutual information of the reduced state
$\rho_{XB}=\operatorname{Tr}_{\overline{A}}(\rho_{X\overline{A}B})$.

An assemblage is \textit{unsteerable} if arises from a classical, shared
random variable $\Lambda$ in the following sense~\cite{Wiseman2006}:
\begin{equation}
\hat{\rho}_{B}^{a,x}:= \sum_{\lambda}p_{\Lambda}(\lambda)\ p_{\overline{A}|X\Lambda
}(a|x,\lambda)\ \rho_{B}^{\lambda},
\end{equation}
where $p_{\Lambda}(\lambda)$ is a probability distribution for $\Lambda$. The
above structure indicates that the correlations observed can be explained by a
classical random variable $\Lambda$, a copy of which is sent to both Alice and
Bob, who then take actions conditioned on the particular realization $\lambda$
of $\Lambda$. The set of all unsteerable assemblages is referred to as
$\operatorname{LHS}$ (short for assemblages having a ``local-hidden-state model").

We point out that the setting considered in the resource theory of steering \cite{Gallego2015},  reviewed above, is somewhat different from that in \cite{Wiseman2006}. In the original paper \cite{Wiseman2006}, steering is considered as a property of a quantum state. That is, a quantum state is considered steerable if there exists a local measurement on Alice's system that leads to correlations that cannot be explained in terms of a local-hidden-state model. The definition considered in \cite{Gallego2015} (and that which we consider here) is to work directly with an assemblage, i.e., such that Alice's share of the bipartite quantum state is embedded in the untrusted measurement device and  the entire embedding is treated as a black box with unknown internal functioning.

As discussed above, the most general free operations allowed in the context of quantum steering are 1W-LOCC. Starting with a given assemblage $\{\hat{\rho
}_{B}^{a,x}\}_{a,x}$, it is possible for Bob to perform a quantum instrument on his system, specified as the following measurement channel
acting on an input state $\sigma_{B}$:
\begin{align}
\mathcal{M}_{B\to B^{\prime}Y}(\sigma_{B}) & := \sum_y \mathcal{K}_y(\sigma_B)\otimes\op{y}{y}_{Y},\\ \mathcal{K}_y(\sigma_B) & := \sum_t K_{y,t} \sigma_{B}K_{y,t}^{\dagger}.
\end{align}
 The sum map $\sum_y \mathcal{K}_y$ is trace preserving, i.e., $\sum_{y,t} K^{\dag}_{y,t} K_{y,t} = I_{B}$ and each $K_{y,t}$ is a Kraus
operator, taking a vector in $\mathcal{H}_{B}$ to a vector in $\mathcal{H}%
_{B^{\prime}}$. Bob can then communicate the classical result $y$ to Alice,
who chooses the input $x$ to her black box according to a classical channel
$p_{X|Y}(x|y)$. %Bob does not communicate $t$ to Alice.
The state after these operations is
\begin{multline}
\rho_{X\overline{A}B^{\prime}Y}:=\sum_{a,x,y}p_{X|Y}(x|y)\op{x}{x}_{X}%
\otimes\op{a}{a}_{\overline{A}}\\
\otimes \mathcal{K}_y(\hat{\rho}_B^{a,x}) \otimes\op{y}{y}_{Y}.%
\end{multline}
Figure~\ref{1W-LOCCf} depicts a 1W-LOCC operation acting on an assemblage realized by an underlying quantum state $\rho_{AB}$ and measurement apparatus $\{M_x^a\}_a$.

\begin{figure}
\begin{center}
\includegraphics[
width=\linewidth
]{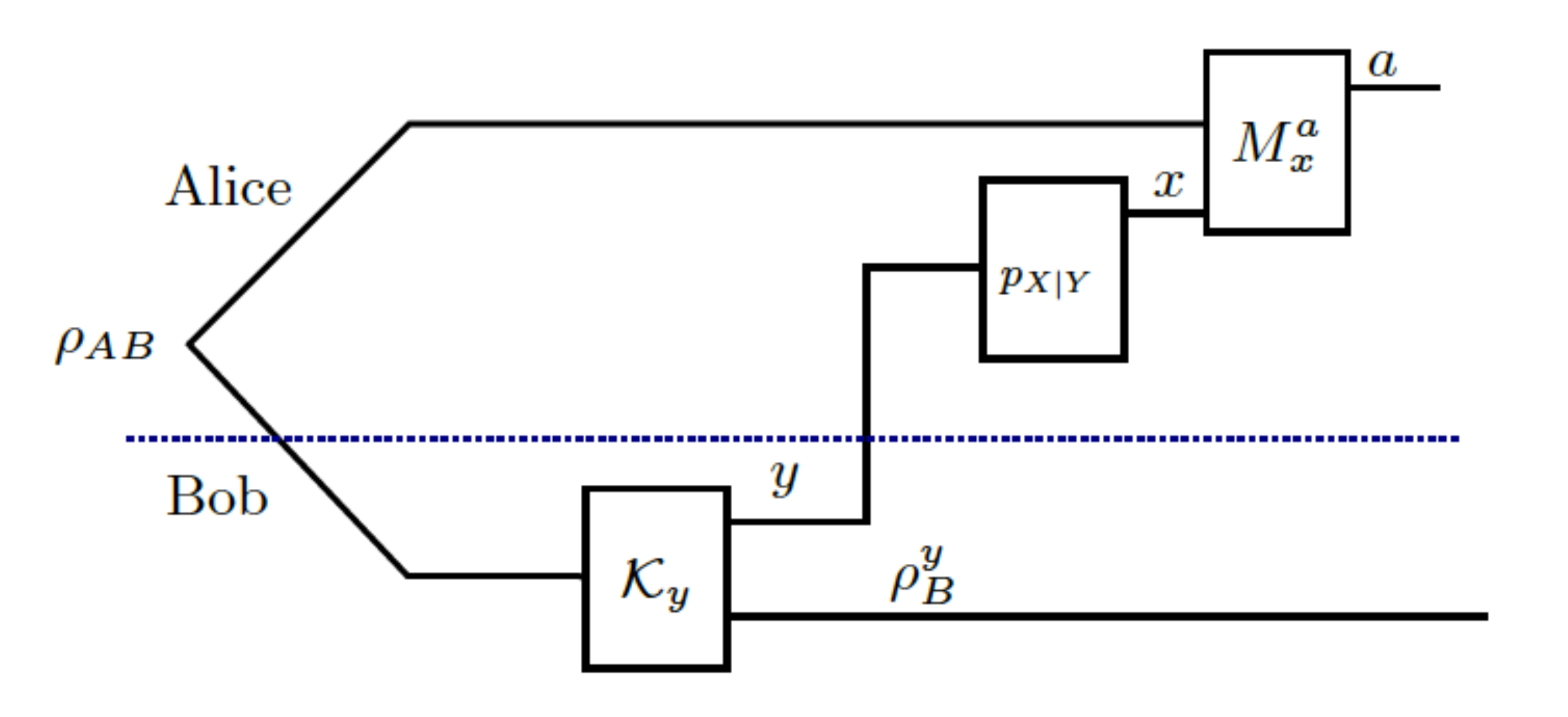}
\end{center}
\caption{This figure represents a 1W-LOCC operation acting on an assemblage realized by an underlying quantum state $\rho_{AB}$ and measurement apparatus $\{M_x^a\}_a$. Bob is allowed to send classical information $y$ to Alice, who chooses the input $x$ to her black box according to $p_{X|Y}$.}
\label{1W-LOCCf}
\end{figure}

\section{Definitions and Summary of Results}

\subsection{Intrinsic Steerability}

We allow Alice and Bob to operate on the assemblage $\{\hat{\rho}_{B}%
^{a,x}\}_{a,x}$ to maximize their correlations, resulting in the
following definition for intrinsic steerability:
\begin{definition}
[Intrinsic Steerability]\label{def:steering-CMI} Let $\{\hat{\rho}_{B}%
^{a,x}\}_{a,x}$ denote an assemblage, and let $\rho_{X\overline{A}B^{\prime}Y}$ be
a state resulting from a 1W-LOCC operation as described above. Consider a
non-signaling extension $\rho_{X\overline{A}B^{\prime}EY}$ of $\rho_{X\overline
{A}B^{\prime}Y}$ of the following form:
\begin{multline}
\rho_{X\overline{A}B^{\prime}EY}:=\sum_{x,a,y}p_{X|Y}(x|y)\op{ x}{x}_{X}%
\otimes\op{a}{ a}_{\overline{A}}\\
\otimes\hat{\rho}_{B^{\prime}E}^{a,x,y}\otimes\op{y}{y}_{Y},
\label{eq:ext-form}%
\end{multline}
where $\hat{\rho}_{B^{\prime}E}^{a,x,y}$ satisfies
\begin{equation}
\operatorname{Tr}_{E}
 (\hat{\rho}_{B^{\prime}E}^{a,x,y})= \mathcal{K}_y (\hat{\rho}_B^{a,x} )
 \end{equation}
  and
the following no-signaling constraints: %
\begin{equation}
\sum_{a}\hat{\rho}_{B^{\prime}E}^{a,x,y}=\sum_{a}\hat{\rho}_{B^{\prime}
E}^{a,x^{\prime},y} \ \forall x,x^{\prime}\in\mathcal{X},\ y\in
\mathcal{Y}. %
\end{equation}
 We define the intrinsic steerability of a given assemblage as follows:%
%\begin{equation}
%S(\overline{A};B)_{\hat{\rho}}:=\sup_{\left\{  p_{X|Y},\left\{  K_{y}\right\}
%_{y}\right\}  }\inf_{\rho_{X\overline{A}B^{\prime}EY}}I(X\overline{A};B^{\prime
%}|EY)_{\rho},
%\end{equation}
\begin{equation}
S(\overline{A};B)_{\hat{\rho}}:=\sup_{\left\{  p_{X|Y},\left\{  \mathcal{K}_y\right\}
_{y}\right\}  }\inf_{\rho_{X\overline{A}B^{\prime}EY}}I(X\overline{A};B^{\prime
}|EY)_{\rho},
\end{equation}
where the supremum is with respect to all quantum instruments, consisting of trace non-increasing maps $\{ \mathcal{K}_y\}_{y}$ such that the sum map $\sum_y \mathcal{K}_y$ is trace preserving and all classical channels $p_{X|Y}$ leading to Alice's input
choice~$x$. The infimum is with respect to all non-signaling extensions of
$\rho_{X\overline{A}B^{\prime}Y}$. % of the form in \eqref{eq:ext-form}--\eqref{eq:no-sig-extension-IS}. 
Using the no-signaling constraints, which imply that $I(X;B'|EY)_{\rho}=0$, we can write
\begin{equation}
S(\overline{A};B)_{\hat{\rho}}:=\sup_{\left\{  p_{X|Y},\left\{   \mathcal{K}_y\right\}
_{y}\right\}  }\inf_{\rho_{X\overline{A}B^{\prime}EY}}I(\overline{A};B^{\prime
}|EXY)_{\rho}.\label{eq:alt-IS}
\end{equation}
\end{definition}

The idea behind the intrinsic steerability is to measure the correlations between
Alice and Bob's systems after conditioning on all of the systems that an
eavesdropper could have, with the worst possible scenario being that the
eavesdropper possesses an arbitrary non-signaling extension of $\mathcal{K}_y(\hat{\rho}_B^{a,x})$. %of the form in
%\eqref{eq:ext-form}--\eqref{eq:no-sig-extension-IS}. 
We take the order of
optimizations to be similar to the order given for the squashed entanglement
of a quantum channel \cite{TGW14IEEE}: Alice and Bob first pick a 1W-LOCC
strategy to maximize their correlations, and Eve is allowed to react to this
strategy, with the goal of minimizing their correlations. Here the only restriction on Eve's system is that it has to be no-signaling. It is possible to have other restrictions on Eve's system and have modifications of the measure accordingly. Our
most fundamental result is the following theorem about intrinsic steerability.

%Note that the no-signaling constraint on Bob's system implies a
%related no-signaling constraint $I(X;E|Y)_{\rho}=0$ for the eavesdropper, as
%proved in Supplementary Material~\ref{sec:extstate}.
\begin{theorem}
The intrinsic steerability $S(\overline{A};B)_{\hat{\rho}}$ is a convex steering monotone.
That is, it does not increase on average under deterministic 1W-LOCC, it
vanishes for an assemblage having a local-hidden-state model, and it is
convex. \label{monotone}
\end{theorem}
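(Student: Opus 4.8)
The plan is to establish the three properties in increasing order of difficulty, using throughout the nonnegativity of conditional mutual information (strong subadditivity) and its chain rule. I would first dispose of the vanishing on $\operatorname{LHS}$ assemblages. Given $\hat{\rho}_B^{a,x}=\sum_\lambda p_\Lambda(\lambda)\,p_{\overline{A}|X\Lambda}(a|x,\lambda)\,\rho_B^\lambda$ and an arbitrary strategy $\{p_{X|Y},\{\mathcal{K}_y\}_y\}$, let Eve hold the hidden variable by taking the extension $\hat{\rho}_{B'E}^{a,x,y}:=\sum_\lambda p_\Lambda(\lambda)\,p_{\overline{A}|X\Lambda}(a|x,\lambda)\,\mathcal{K}_y(\rho_B^\lambda)\otimes\op{\lambda}{\lambda}_E$. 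Linearity of $\mathcal{K}_y$ gives $\operatorname{Tr}_E(\hat{\rho}_{B'E}^{a,x,y})=\mathcal{K}_y(\hat{\rho}_B^{a,x})$, and summing over $a$ with $\sum_a p_{\overline{A}|X\Lambda}(a|x,\lambda)=1$ removes all $x$-dependence, so the extension is non-signaling. Conditioned on $(\Lambda,X,Y)=(\lambda,x,y)$ the state of $\overline{A}B'$ is a product, whence $I(\overline{A};B'|EXY)=0$; the infimum is therefore $0$ for every strategy, and $S(\overline{A};B)_{\hat{\rho}}=0$.

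For convexity, write $\hat{\rho}=\sum_j q_j\,\hat{\rho}^{(j)}$, fix a strategy, and for each $j$ pick a near-optimal non-signaling extension with system $E'$. Letting Eve additionally hold the classical label $J$ produces a non-signaling extension with the correct marginal, and since $J$ is classical and independent of $X$, conditioning on it gives $I(\overline{A};B'|E'JXY)=\sum_j q_j\,I(\overline{A};B'|E'XY)_{\rho^{(j)}}$. Bounding the left-hand infimum by this particular extension, optimizing each component's extension separately, and then taking the supremum over the shared strategy yields $S(\overline{A};B)_{\hat{\rho}}\le\sum_j q_j\,S(\overline{A};B)_{\hat{\rho}^{(j)}}$.

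Monotonicity is the substantive part, and I would reduce a general deterministic 1W-LOCC to a composition of elementary free operations. A local channel applied by Bob is absorbed into the supremum over $\{\mathcal{K}_y\}_y$, while classical pre- and post-processing of Alice's input and output are handled by data processing of conditional mutual information on the $X\overline{A}$ system; in each case non-signaling and the marginal constraint are readily seen to be preserved. The genuinely new ingredient is Bob's measurement with classical outcome $w$, communicated to Alice, which produces branch assemblages $\hat{\sigma}_w$ with probabilities $p(w)$, and the goal is $\sum_w p(w)\,S(\overline{A};B)_{\hat{\sigma}_w}\le S(\overline{A};B)_{\hat{\rho}}$. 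I would exhibit a single strategy for $\hat{\rho}$ in which Bob performs the measurement, records $w$, applies the branch-optimal $\mathcal{K}_y^w$, and communicates $(w,y)$, while Alice uses $p_{X|Y}^w$. Because $w$ lies in the conditioning register, classicality of $W$ and the chain rule give, for any non-signaling extension of the combined state, $I(X\overline{A};B'|EWY)=\sum_w p(w)\,I(X\overline{A};B'|EY)_{\rho^{|w}}$, and I would verify that fixing $(w,y)$ turns the restriction of a global non-signaling extension into a valid non-signaling extension of $\hat{\sigma}_w$. Hence each conditional term is at least the corresponding branch infimum, and taking the infimum over the joint extension together with near-optimal branch strategies yields the claim.

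The main obstacle is exactly this last point: one must rule out that Eve's adversarial joint extension of the combined state can beat the branch-wise optimal extensions. This is possible only because Bob's outcome $w$ is classical and resides in the conditioning, so the chain rule reorganizes the single conditional mutual information into the correct convex combination over branches; the remaining work is the bookkeeping that restricting a global non-signaling extension to a fixed branch preserves both the no-signaling constraints and the marginal condition $\operatorname{Tr}_E(\cdot)=\mathcal{K}_y^w(\hat{\sigma}_w^{a,x})$.
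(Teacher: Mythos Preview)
Your proposal is correct and tracks the paper's argument closely. The treatment of the $\operatorname{LHS}$ case and of convexity is essentially identical to the paper's: for the former, put the hidden variable $\lambda$ into $E$ and observe the product structure; for the latter, append a classical flag to Eve's register and use that conditioning on a classical register splits conditional mutual information as a convex combination.

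For monotonicity the two approaches differ in organization rather than substance. The paper treats a general 1W-LOCC map $\{p(a_f|x_f,x,a,z),\,p(x|x_f,z),\,\{\mathcal{K}_z\}_z\}$ in one shot: it composes this map with the further strategy $\{p_{X_f|YZ=z},\{\mathcal{L}_y^{(z)}\}_y\}$ appearing in the definition of $S(\overline{A}_f;B_f)_{\hat{\rho}_z}$, builds a \emph{particular} non-signaling extension $\zeta^z$ of the output from an \emph{arbitrary} non-signaling extension $\hat{\omega}^{a,x,y,z}_{B_f'E}$ of the composed strategy on $\hat{\rho}$, and then bounds $\sum_z p(z)\,I(X_f\overline{A}_f;B_f'|EY)_{\zeta^z}\le I(X\overline{A};B_f'|EYZ)_\zeta$ via data processing, the chain rule, and a Markov-chain observation. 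Your decomposition into elementary free operations reaches the same conclusion through a sequence of such steps; in particular, your ``restriction of a global extension to a branch $w$'' is precisely the paper's ``particular extension $\zeta^z$ built from $\hat{\omega}$'', and the bookkeeping you flag (preservation of the marginal and the no-signaling constraints under restriction) is exactly what the paper verifies implicitly.

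One point deserves tightening: the claim that ``classical pre- and post-processing of Alice's input and output are handled by data processing'' is accurate for post-processing $a\mapsto a_f$, but not quite for pre-processing $x_f\mapsto x$. There, plain data processing on the first slot of $I(\,\cdot\,;B'|EY)$ goes the wrong way; the paper instead enlarges the first register to $X_fX\overline{A}$ (data processing in the correct direction), applies the chain rule $I(X_fX\overline{A};B'|EY)=I(X\overline{A};B'|EY)+I(X_f;B'|EYX\overline{A})$, and uses the Markov structure $X_f-(X,\overline{A},Y)-B'E$ of the joint state to kill the second term. Your decomposition will need this same two-line argument at the pre-processing step.
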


Our proof of Theorem \ref{monotone} is given in Section \ref{sec:intrinsic}.

\subsection{Restricted Intrinsic Steerability}

Definition~\ref{def:steering-CMI} might seem rather complicated with the
number of systems involved and the number of objects involved in the
optimizations. While undesirable, we note that other steering quantifiers,
such as the relative entropy of steering \cite{Gallego2015,Kaur2016}, feature similar
complications, and this seems unavoidable, having to do with the structure of
assemblages and 1W-LOCC operations.

We are thus motivated to find simpler definitions, and we can do so by
considering restricted 1W-LOCC operations as discussed above.

\begin{definition}
[Restricted Intrinsic Steerability]\label{def:reducedsteering-CMI} Let $\{\hat{\rho}_{B}^{a,x}\}_{a,x}$ denote an assemblage, and let $\rho_{X\overline
{A}B}$ denote a corresponding classical--quantum state. %of the form
%in~\eqref{eq:cq-state-assemblage}. 
Consider a non-signaling extension
$\rho_{X\overline{A}BE}$ of $\rho_{X\overline{A}B}$ of the following form: %
\begin{equation}
\rho_{X\overline{A}B^{\prime}E}:=\sum_{a,x}p_{X}(x)\op{ x}{x}_{X}\otimes
\op{ a}{ a}_{\overline{A}}\otimes\hat{\rho}_{BE}^{a,x}, \label{eq:rext-form}%
\end{equation}
where $\hat{\rho}_{BE}^{a,x}$ satisfies $\operatorname{Tr}_{E}(\hat{\rho}%
_{BE}^{a,x})=\hat{\rho}_{B}^{a,x}$ and the following no-signaling constraints: %
\begin{equation}
\sum_{a}\hat{\rho}_{BE}^{a,x}=\sum_{a}\hat{\rho}_{BE}^{a,x^{\prime}%
} \ \forall x,x^{\prime}\in\mathcal{X} . \label{eq:no-sig-extension-RIS}%
\end{equation}
We define the restricted intrinsic steerability of $\{\hat{\rho}_{B}%
^{a,x}\}_{a,x}$ as follows:%
\begin{equation}
S^{R}(\overline{A};B)_{\hat{\rho}}:=\sup_{p_{X}}\inf_{\rho_{X\overline{A}BE}}I(X\overline
{A};B|E)_{\rho},%\label{eq:restr-IS-def}%
\end{equation}
where the supremum is with respect to all probability distributions $p_{X}$
and the infimum is with respect to all non-signaling extensions of
$\rho_{X\overline{A}B}$. % of the form in \eqref{eq:rext-form}--\eqref{eq:no-sig-extension-RIS}.
Using the no-signaling constraints, which imply that $I(X;B|E)_{\rho}=0$, it follows that 
\begin{equation}
S^{R}(\overline{A};B)_{\hat{\rho}}:=\sup_{p_{X}}\inf_{\rho_{X\overline{A}BE}}I(\overline
{A};B|EX)_{\rho}.\label{eq:alt-RIS}
\end{equation}
\end{definition}

%Any extension of the form in \eqref{eq:rext-form} is constrained by
%$I(X;E)_{\rho}=0$, equivalent to the no-signaling principle, as proved in
%Supplementary Material~\ref{sec:extstate}.
We prove
that the restricted intrinsic steerability is a steering monotone with respect
to restricted 1W-LOCC and that it is
%faithful \footnote{It is equal to zero if
%and only if the assemblage has a local-hidden-state %model} and
convex. 

\begin{theorem}
\label{thm:RIS-monotone} The restricted intrinsic steerability $S^{R}(\overline
{A};B)_{\hat{\rho}}$ is a convex steering monotone with respect to restricted
1W-LOCC. That is, it does not increase under restricted
deterministic 1W-LOCC, it vanishes for assemblages having a local-hidden-state
model, and it is
%faithful and
convex.
\end{theorem}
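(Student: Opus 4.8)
The plan is to verify, in turn, the three defining properties of a convex steering monotone, working from the alternative expression~\eqref{eq:alt-RIS} and relying on two tools throughout: the data-processing inequality for conditional mutual information, and the freedom of the minimizing party (Eve) to engineer a favorable non-signaling extension. The local-hidden-state property is the quickest. Given a decomposition $\hat{\rho}_{B}^{a,x}=\sum_{\lambda}p_{\Lambda}(\lambda)\,p_{\overline{A}|X\Lambda}(a|x,\lambda)\,\rho_{B}^{\lambda}$, I would let Eve hold a copy of the hidden variable, setting $\hat{\rho}_{BE}^{a,x}:=\sum_{\lambda}p_{\Lambda}(\lambda)\,p_{\overline{A}|X\Lambda}(a|x,\lambda)\,\rho_{B}^{\lambda}\otimes\op{\lambda}{\lambda}_{E}$. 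Summing over $a$ removes the $x$-dependence (each $p_{\overline{A}|X\Lambda}$ is normalized), so this is a legitimate non-signaling extension, and conditioned on $X=x$ together with Eve's value $\lambda$ the registers $\overline{A}$ and $B$ decouple into a product. Hence $I(\overline{A};B|EX)=0$ for this extension, and since conditional mutual information is nonnegative the infimum vanishes for every $p_{X}$, giving $\Sr(\overline{A};B)_{\hat{\rho}}=0$.

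For convexity I would introduce a classical flag on Eve's side. Writing the mixed assemblage as $\hat{\omega}_{B}^{a,x}=p\,\hat{\rho}_{B}^{a,x}+(1-p)\,\hat{\tau}_{B}^{a,x}$ and taking arbitrary extensions $\hat{\rho}_{BE'}^{a,x}$, $\hat{\tau}_{BE'}^{a,x}$ of the two components, I would form the extension $p\,\hat{\rho}_{BE'}^{a,x}\otimes\op{0}{0}_{F}+(1-p)\,\hat{\tau}_{BE'}^{a,x}\otimes\op{1}{1}_{F}$ of $\hat{\omega}$, with Eve system $E=E'F$. This is again non-signaling, and because $F$ records the branch, conditioning on it linearizes the conditional mutual information into $p\,I(\overline{A};B|E'X)_{\hat{\rho}}+(1-p)\,I(\overline{A};B|E'X)_{\hat{\tau}}$. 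Since these flagged extensions form a subfamily of all extensions of $\hat{\omega}$, the infimum over the latter is dominated by the infimum over the two components; a final appeal to subadditivity of the supremum over $p_{X}$ then yields $\Sr(\overline{A};B)_{\hat{\omega}}\le p\,\Sr(\overline{A};B)_{\hat{\rho}}+(1-p)\,\Sr(\overline{A};B)_{\hat{\tau}}$.

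Monotonicity is the substantive part. A restricted deterministic 1W-LOCC is built from a quantum instrument $\{\mathcal{K}_y\}_y$ that Bob applies to $B$ (yielding $B'$ and a classical outcome $y$ that he communicates), together with a classical relabeling $\overline{A}\to\overline{A}'$ of Alice's output that may use $y$ but an input distribution $p_{X}$ that, by the restriction, may not. The sup--inf simulation runs as follows. For the supremum, I would feed the output's optimizing $p_X$ directly back to the input side; this is legitimate precisely because $p_X$ is outcome-independent in the restricted setting, and any outcome-independent relabeling of inputs is absorbed into the supremum over $p_X$. For the infimum, given any non-signaling extension $\hat{\rho}_{BE}^{a,x}$ of the input, I would build the output extension
\[
\hat{\sigma}_{B'E'}^{a',x}:=\sum_{a,y}p_{\overline{A}'|\overline{A}XY}(a'|a,x,y)\,\mathcal{K}_y(\hat{\rho}_{BE}^{a,x})\otimes\op{y}{y}_{Y},\quad E'=EY,
\]
that is, I would let Eve absorb the communicated register $Y$. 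One checks this is a valid non-signaling extension of the outcome-averaged output assemblage, using that $\mathcal{K}_y$ leaves the $x$-label untouched and that $p_{\overline{A}'|\overline{A}XY}$ is normalized for each $x$.

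The crux, and the main obstacle, is then to prove $I(\overline{A}';B'|EYX)_{\hat{\sigma}}\le I(\overline{A};B|EX)_{\hat{\rho}}$. I would pass through the intermediate post-measurement state $\xi_{X\overline{A}B'YE}$ in three moves: data processing under Bob's channel $B\to B'Y$ gives $I(\overline{A};B'Y|EX)_{\xi}\le I(\overline{A};B|EX)_{\hat{\rho}}$; the chain rule $I(\overline{A};B'Y|EX)=I(\overline{A};Y|EX)+I(\overline{A};B'|EYX)$ with nonnegativity lets me reassign $Y$ from Bob to the conditioning (Eve's) system at no cost, giving $I(\overline{A};B'|EYX)_{\xi}\le I(\overline{A};B'Y|EX)_{\xi}$; and finally, conditioned on the classical registers $X,Y$, Alice's relabeling $\overline{A}\to\overline{A}'$ is an ordinary channel on the first argument alone, so a last data-processing step gives $I(\overline{A}';B'|EYX)_{\hat{\sigma}}\le I(\overline{A};B'|EYX)_{\xi}$. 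Chaining these three inequalities proves the claim, and combining it with the supremum step delivers $\Sr(\overline{A};B)_{\mathrm{out}}\le\Sr(\overline{A};B)_{\mathrm{in}}$. The delicate point I expect to dwell on is exactly this placement of the communicated outcome $Y$: retaining it on Alice's side would spuriously inflate the correlations, whereas assigning it to Eve is both admissible (she needs only a non-signaling extension) and is where the restriction, namely that $p_X$ cannot depend on $Y$, makes the whole simulation consistent, distinguishing this argument from the more involved proof of Theorem~\ref{monotone}.
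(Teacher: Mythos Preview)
Your proposal is correct and follows the same strategy as the paper: the hidden-variable extension for the LHS case, the flagged extension for convexity, and for monotonicity a chain of data-processing and chain-rule steps in which Eve absorbs Bob's communicated outcome. The order of your chain differs cosmetically from the paper's---the paper works with $I(X\overline{A};B'|EZ)$ and carries both $X$ and $X_f$ registers, while you work with $I(\overline{A};B'|EYX)$ and suppress $X_f$---but the substance is the same.

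The one place your argument is thinner than the paper's is the treatment of Alice's input preprocessing. A general restricted 1W-LOCC includes a channel $p_{X|X_f}$ mapping a \emph{new} input $x_f$ to the original input $x$, and the output relabeling may depend on $x_f$ as well as $x$ (the paper writes $p_{\overline{A}_f|\overline{A}XX_fZ}$). Your sentence that this is ``absorbed into the supremum over $p_X$'' is not quite enough on its own: the output's conditional mutual information conditions on $X_f$, not on $X$, and these are different registers. The paper closes this by keeping both registers in the state and invoking the Markov structure $I(X_f;B'|EZX\overline{A})_\omega=0$, which holds because $\hat{\rho}_{BE}^{a,x}$ depends only on $(a,x)$. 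You could equivalently treat input preprocessing as a separate first step and prove monotonicity under it via the same Markov observation; either way, a short additional argument is required.
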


Our proof for Theorem \ref{thm:RIS-monotone} is given in Section \ref{sec:rest}.
\bigskip 

 By inspecting definitions, we can conclude that intrinsic steerability is
never smaller than restricted intrinsic steerability: %
\begin{equation}
S(\overline{A};B)_{\hat{\rho}}\geq S^{R}(\overline{A};B)_{\hat{\rho}}%
. \label{eq:IS-to-RIS}%
\end{equation}
This follows because the restricted intrinsic steerability involves a
supremization over particular 1W-LOCC\ strategies that are included in the
supremization in the definition of the intrinsic steerability.

% \begin{remark}
% The intrinsic steerability can alternatively be written as%
% \begin{equation}
% S(\overline{A};B)_{\hat{\rho}}:=\sup_{\left\{  p_{X|Y},\left\{  K_{y}\right\}
% _{y}\right\}  }\inf_{\rho_{X\overline{A}B^{\prime}EY}}I(\overline{A};B^{\prime
% }|EXY)_{\rho},\label{eq:alt-IS}
% \end{equation}
% and the restricted intrinsic steerability as%
% \begin{equation}
% S^{R}(\overline{A};B)_{\hat{\rho}}:=\sup_{p_{X}}\inf_{\rho_{X\overline{A}BE}}I(\hat
% {A};B|EX)_{\rho}.\label{eq:alt-RIS}
% \end{equation}
% The no-signaling constraints in \eqref{eq:no-sig-extension-IS} imply that
% $I(X;B^{\prime}|EY)_{\rho}=0$ for the state in \eqref{eq:ext-form}, which in
% turn implies via the chain rule that $I(X\overline{A};B^{\prime}|EY)_{\rho}%
% =I(\overline{A};B^{\prime}|EXY)_{\rho}$. Similarly, the no-signaling constraints in
% \eqref{eq:no-sig-extension-RIS} imply that $I(X;B|E)_{\rho}=0$ for the state
% in \eqref{eq:rext-form}, which in turn implies  that
% $I(X\overline{A};B|E)_{\rho}=I(\overline{A};B|EX)_{\rho}$. However, in what follows, we
% use the expressions given in Definitions~\ref{def:steering-CMI} and~\ref{def:reducedsteering-CMI}.
% \end{remark}

By using known bounds on conditional mutual information, the expression in \eqref{eq:alt-IS}, and the fact that taking an infimum over classical extensions
$E$ does not decrease $S(\overline{A};B)_{\hat{\rho}}$, we can conclude that
\begin{equation}
0\leq S(\overline{A};B)_{\hat{\rho}}\leq\log_{2}|\overline{A}|.
\end{equation}
 The lower bound
follows from the strong subadditivity of quantum entropy \cite{LR73}\ and the
upper bound follows from a dimension bound (see, e.g., \cite{W15book}). Similarly, using known bounds on conditional mutual information, the expression in \eqref{eq:alt-RIS}, and the fact that taking an infimum over classical extensions
$E$ does not decrease $S^{R}(\overline{A};B)_{\hat{\rho}}$, we find that
\begin{equation}
0\leq
S^{R}(\overline{A};B)_{\hat{\rho}}\leq\min\{\log_{2}|\overline{A}|,\log_{2}|B|\}.
\end{equation}

\section{Examples\label{sec:example}}

As an example, consider the following ``BB84 assemblage'' resulting from Pauli
$\sigma_{Z}$ or $\sigma_{X}$ measurements on one share of a maximally entangled state
\begin{equation}
|\Phi\rangle_{AB}:=(|00\rangle_{AB}+|11\rangle_{AB})/\sqrt{2},
\end{equation}
consisting of
the following four subnormalized states: %
\begin{align}
\hat{\rho}_{B}^{a=0,x=0} &    =\tfrac{1}{2}|0\rangle\langle0|_{B},  \\
\hat{\rho
}_{B}^{a=1,x=0} &  =\tfrac{1}{2}|1\rangle\langle1|_{B},\\
\hat{\rho}_{B}^{a=0,x=1} &   =\tfrac{1}{2}|+\rangle\langle+|_{B}, \\
 \hat{\rho
}_{B}^{a=1,x=1} & =\tfrac{1}{2}|-\rangle\langle-|_{B}.
\end{align}
As we show in the proof of Proposition~\ref{ex:simple}, the non-signaling constraint for this case imposes that any non-signaling
extension of the above assemblage has the form $\hat{\rho}_{B}^{a,x}%
\otimes\omega_{E}$ for all $a,x\in\{0,1\}$ and for some state $\omega_{E}$. \textbf{\textit{Thus, in
this sense, the BB84 assemblage is unextendible and features a certain kind of monogamy against non-signaling adversaries}}. As a consequence, we find that
this assemblage has exactly one bit of intrinsic steerability.

In Proposition~\ref{ex:gen}, we generalize the above result to an assemblage resulting from an arbitrary pure bipartite state being measured in the Schmidt basis and the basis Fourier conjugate to this one. We find that this assemblage has the same kind of monogamy against non-signaling adversaries and that it has restricted intrinsic steerability equal to the entropy of entanglement \cite{BBPS96} of the state being measured.

%We start out with Proposition~\ref{ex:simple}, which consists of an assemblage resulting from
%measurements of $\sigma_Z$ and $\sigma_X$ on one share of a maximally entangled state of two qubits. We find that this assemblage has one bit of intrinsic steerability. We subsequently generalize this in Proposition~\ref{ex:gen}, which consists of an assemblage resulting from  an arbitrary pure bipartite state being measured in the Schmidt basis and the basis Fourier conjugate to this one. We find that this assemblage has restricted intrinsic steerability equal to the entropy of entanglement \cite{BBPS96} of the state being measured.

\begin{proposition}
\label{ex:simple}
Consider a maximally entangled state%
\begin{equation}
|\Phi\rangle_{AB}:=\frac{1}{\sqrt{2}}(|00\rangle_{AB}+|11\rangle_{AB}).
\end{equation}
Let measurement $x=0$ be Pauli $\sigma_{Z}$ on system $A$, with outcomes $a=0$
and $a=1$. Let measurement $x=1$ be Pauli $\sigma_{X}$ on system $A$, with
outcomes $a=0$ and $a=1$. This leads to the following assemblage:%
\begin{align}
\left\{\begin{array}{lr}\hat{\rho}_{B}^{a=0,x=0}=\frac{1}{2}|0\rangle\langle0|_{B}
,\ &\hat{\rho}_{B}^{a=1,x=0}=\frac{1}{2}|1\rangle\langle1|_{B},\\ \hat{\rho}%
_{B}^{a=0,x=1}=\frac{1}{2}|+\rangle\langle+|_{B},\ &\hat{\rho}_{B}%
^{a=1,x=1}=\frac{1}{2}|-\rangle\langle-|_{B}\end{array}\right\},
\end{align}
which has one bit of intrinsic steerability and restricted intrinsic steerability:
\begin{equation}
S(\overline
{A};B)_{\hat{\rho}} = S^{R}(\overline
{A};B)_{\hat{\rho}} = 1.
\end{equation}
\end{proposition}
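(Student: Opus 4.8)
The plan is to establish the claim in two pieces: an upper bound $S(\overline{A};B)_{\hat\rho}\le 1$ coming from the general dimension bound, and a matching lower bound $S^{R}(\overline{A};B)_{\hat\rho}\ge 1$, which together with the inequality $S(\overline{A};B)_{\hat\rho}\ge S^{R}(\overline{A};B)_{\hat\rho}$ from \eqref{eq:IS-to-RIS} pins both quantities to exactly $1$. The upper bound is immediate: since $|\overline{A}|=2$, the bound $S(\overline{A};B)_{\hat\rho}\le\log_2|\overline{A}|=1$ holds. So the real content is the lower bound, and because $S^{R}\le S$, it suffices to prove the lower bound for the \emph{restricted} intrinsic steerability, which involves fewer systems.

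The key structural step is to characterize all non-signaling extensions $\hat\rho_{BE}^{a,x}$ of this particular assemblage. I would write the no-signaling constraint $\sum_a \hat\rho_{BE}^{a,x}=\sum_a\hat\rho_{BE}^{a,x'}$ for all $x,x'$, and show it forces the extension to factorize as $\hat\rho_{BE}^{a,x}=\hat\rho_B^{a,x}\otimes\omega_E$ for some fixed state $\omega_E$ independent of $a,x$. The reason this works here is that the four Bob-states $\{|0\rangle,|1\rangle,|+\rangle,|-\rangle\}$ are so constrained that the marginal $\sum_a\hat\rho_B^{a,x}=\tfrac12 I_B$ is the maximally mixed state for both $x$; any extension must have a $B$-marginal equal to this, and the combination of positivity and the no-signaling equalities across the mutually unbiased bases leaves no room for nontrivial correlation between $B$ and $E$. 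I would verify this by taking an arbitrary extension, imposing the constraints, and arguing (via the rigidity of the decomposition of $\tfrac12 I_B$ in two MUBs) that $E$ must decouple.

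Once the extension is forced to be a product $\hat\rho_B^{a,x}\otimes\omega_E$, the infimum over extensions becomes trivial: the system $E$ is uncorrelated with everything, so conditioning on $E$ does nothing, and using the form \eqref{eq:alt-RIS} we get $I(\overline{A};B|EX)_{\rho}=I(\overline{A};B|X)_{\rho}$ for every extension. The final step is then to evaluate $\sup_{p_X} I(\overline{A};B|X)_{\rho}$ for the embedded classical-quantum state. Choosing the uniform distribution $p_X(0)=p_X(1)=\tfrac12$, for each fixed $x$ the conditional state of $\overline{A}B$ is a perfectly correlated classical-quantum state (e.g., for $x=0$, $\overline{A}=a$ is perfectly correlated with Bob's orthogonal states $|0\rangle,|1\rangle$), giving $I(\overline{A};B|X=x)=1$, hence $I(\overline{A};B|X)=1$ and $S^{R}(\overline{A};B)_{\hat\rho}\ge 1$.

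The main obstacle I anticipate is the rigidity argument that every non-signaling extension factorizes as a product with $\omega_E$. This is the step doing all the work—it is where the specific MUB structure of the BB84 assemblage enters—and one must be careful to use positivity of the full operators $\hat\rho_{BE}^{a,x}$ together with both no-signaling equalities, rather than just matching marginals, since a priori correlations with $E$ that are invisible to Bob's marginal could survive. Establishing that the two mutually unbiased basis constraints jointly kill all such correlations is the crux; everything downstream (the collapse of the infimum and the entropy computation) is then routine.
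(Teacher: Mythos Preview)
Your proposal is correct and follows essentially the same route as the paper: upper bound from $\log_2|\overline{A}|=1$, lower bound for $S^R$ by showing every non-signaling extension is a product $\hat\rho_B^{a,x}\otimes\omega_E$, then a direct entropy computation. The ``rigidity'' step you flag as the main obstacle is actually simpler than you anticipate: since each $\hat\rho_B^{a,x}$ is (half of) a \emph{pure} state, any extension is automatically of the form $\hat\rho_B^{a,x}\otimes\omega_E^{a,x}$, and the single no-signaling equation $\sum_a\hat\rho_{BE}^{a,0}=\sum_a\hat\rho_{BE}^{a,1}$, written as a $2\times 2$ block matrix over $B$, forces $\omega_E^{01}=\omega_E^{11}$ (off-diagonal blocks) and then $\omega_E^{00}=\omega_E^{10}=\omega_E^{01}$ (diagonal blocks).
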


\begin{proof}
Arbitrary extensions of each of the above subnormalized states are as follows:%
\begin{align}
\hat{\rho}_{BE}^{a=0,x=0}  &  =\frac{1}{2}|0\rangle\langle0|_{B}\otimes
\omega_{E}^{00},\\
\hat{\rho}_{BE}^{a=1,x=0}
&=\frac{1}{2}|1\rangle
\langle1|_{B}\otimes\omega_{E}^{10},\\
\hat{\rho}_{BE}^{a=0,x=1}  &  =\frac{1}{2}|+\rangle\langle+|_{B}\otimes
\omega_{E}^{01},\\
\hat{\rho}_{BE}^{a=1,x=1} & =\frac{1}{2}|-\rangle
\langle-|_{B}\otimes\omega_{E}^{11},
\end{align}
where $\omega_{E}^{ij}\geq0$ and $\operatorname{Tr}(\omega_{E}^{ij})=1$ for
all $i,j\in\{0,1\}$. The no-signaling constraint is as follows:%
\begin{equation}
\hat{\rho}_{BE}^{a=0,x=0}+\hat{\rho}_{BE}^{a=1,x=0}=\hat{\rho}_{BE}%
^{a=0,x=1}+\hat{\rho}_{BE}^{a=1,x=1}. \label{eq:example-no-sig-constr}%
\end{equation}
Writing out the left-hand side of \eqref{eq:example-no-sig-constr}\ in matrix
form, we find that%
\begin{equation}
\frac{1}{2}|0\rangle\langle0|_{B}\otimes\omega_{E}^{00}+\frac{1}{2}%
|1\rangle\langle1|_{B}\otimes\omega_{E}^{10}=\frac{1}{2}%
\begin{bmatrix}
\omega_{E}^{00} & 0\\
0 & \omega_{E}^{10}%
\end{bmatrix}
.
\end{equation}
Writing out the right-hand side of \eqref{eq:example-no-sig-constr} in matrix
form, we find that
\begin{align}
&\frac{1}{2}|+\rangle\langle+|_{B}\otimes\omega_{E}^{01}+\frac{1}{2}%
|-\rangle\langle-|_{B}\otimes\omega_{E}^{11}  \\
&  =\frac{1}{4}\left[
|0\rangle\langle0|_{B}+|1\rangle\langle0|_{B}+|0\rangle\langle1|_{B}%
+|1\rangle\langle1|_{B}\right]  \otimes\omega_{E}^{01}\nonumber\\
& \quad +\frac{1}{4}\left[  |0\rangle\langle0|_{B}-|1\rangle\langle
0|_{B}-|0\rangle\langle1|_{B}+|1\rangle\langle1|_{B}\right]  \otimes\omega
_{E}^{11}\\
&  =\frac{1}{2}|0\rangle\langle0|_{B}\otimes\left(  \frac{\omega_{E}%
^{01}+\omega_{E}^{11}}{2}\right)  \nonumber\\
& \qquad +\frac{1}{2}|1\rangle\langle0|_{B}%
\otimes\left(  \frac{\omega_{E}^{01}-\omega_{E}^{11}}{2}\right) \nonumber\\
& \qquad+\frac{1}{2}|0\rangle\langle1|_{B}\otimes\left(  \frac{\omega
_{E}^{01}-\omega_{E}^{11}}{2}\right)  \nonumber\\
& \qquad+\frac{1}{2}|1\rangle\langle
1|_{B}\otimes\left(  \frac{\omega_{E}^{01}+\omega_{E}^{11}}{2}\right). \\
&  =\frac{1}{2}%
\begin{bmatrix}
\frac{\omega_{E}^{01}+\omega_{E}^{11}}{2} & \frac{\omega_{E}^{01}-\omega
_{E}^{11}}{2}\\
\frac{\omega_{E}^{01}-\omega_{E}^{11}}{2} & \frac{\omega_{E}^{01}+\omega
_{E}^{11}}{2}%
\end{bmatrix}.
\end{align}

So equating them, we find that the following equation (no-signaling
constraint)\ should be satisfied%
\begin{equation}%
\begin{bmatrix}
\omega_{E}^{00} & 0\\
0 & \omega_{E}^{10}%
\end{bmatrix}
=%
\begin{bmatrix}
\frac{\omega_{E}^{01}+\omega_{E}^{11}}{2} & \frac{\omega_{E}^{01}-\omega
_{E}^{11}}{2}\\
\frac{\omega_{E}^{01}-\omega_{E}^{11}}{2} & \frac{\omega_{E}^{01}+\omega
_{E}^{11}}{2}%
\end{bmatrix}
.
\end{equation}
This implies that $\omega_{E}^{01}=\omega_{E}^{11}$, which in turn implies
that $\omega_{E}^{10}=\omega_{E}^{01}=\omega_{E}^{11}=\omega_{E}^{00}$. Thus,
the only possible extension allowed in order to satisfy the no-signaling
constraint is a product extension independent of $a$ and $x$, meaning one of
the following form:%
\begin{align}
\hat{\rho}_{BE}^{a=0,x=0} & =\frac{1}{2}|0\rangle\langle0|_{B}\otimes
\omega_{E}, \nonumber \\
 \hat{\rho}_{BE}^{a=1,x=0}& =\frac{1}{2}|1\rangle\langle
1|_{B}\otimes\omega_{E},\nonumber\\
\hat{\rho}_{BE}^{a=0,x=1}  & =\frac{1}{2}|+\rangle\langle+|_{B}\otimes
\omega_{E},\nonumber \\
\hat{\rho}_{BE}^{a=1,x=1}& =\frac{1}{2}|-\rangle\langle
-|_{B}\otimes\omega_{E},
\end{align}
where $\omega_{E}\geq0$ and $\operatorname{Tr}(\omega_{E})=1$. We can then
evaluate the restricted intrinsic steerability in terms of the following
classical--quantum state:%
\begin{equation}
\left[
\begin{array}
[c]{c}%
\frac{1}{2}|0\rangle\langle0|_{X}\otimes|0\rangle\langle0|_{\overline{A}}%
\otimes\frac{1}{2}|0\rangle\langle0|_{B}
\\+\frac{1}{2}|0\rangle\langle
0|_{X}\otimes|1\rangle\langle1|_{\overline{A}}\otimes\frac{1}{2}|1\rangle
\langle1|_{B}\\
+\frac{1}{2}|1\rangle\langle1|_{X}\otimes|0\rangle\langle0|_{\overline{A}}%
\otimes\frac{1}{2}|
+\rangle\langle+|_{B}\\+\frac{1}{2}|1\rangle\langle
1|_{X}\otimes|1\rangle\langle1|_{\overline{A}}\otimes\frac{1}{2}|-\rangle
\langle-|_{B}%
\end{array}
\right]  \otimes\omega_{E}.
\end{equation}
The conditional mutual information of this state is as follows:
\begin{align}
I(X\overline{A};B|E)&=I(X\overline{A};B)\nonumber\\&=H(B)-H(B|X\overline{A})=H(B)=1,
\end{align}
so that this assemblage has \textit{one bit of restricted intrinsic
steerability}. The first equality follows because the system $E$ is product
regardless of the extension, due to the above analysis with the no-signaling
constraint. The second equality follows by expanding the mutual information.
The third equality follows because the state of the $B$ system is pure when
conditioned on systems $X\overline{A}$. The final equality follows because the
reduced state on the $B$ system is maximally mixed. Also, it is clear that
this is the maximum value of the restricted intrinsic steerability, given that
it is always bounded from above by $\log\dim(\mathcal{H}_{B})$
or $\log\dim(\mathcal{H}_{\overline{A}})$. By considering the upper bound $\log\dim(\mathcal{H}_{\overline{A}})$ for intrinsic steerability, we see that this assemblage achieves the upper bound on 
intrinsic steerability and thus has one bit of intrinsic steerability.
\end{proof}

\begin{proposition}
\label{ex:gen}
Consider a pure bipartite state $|\varphi\rangle_{AB}$ in its Schmidt basis:%
\begin{equation}
|\varphi\rangle_{AB}:=\sum_{j=0}^{d-1}\alpha_{j}|j\rangle_{A}\otimes
|j\rangle_{B},
\end{equation}
where $\left\vert \alpha_{j}\right\vert \neq0$ for all $j\in\{0,\ldots,d-1\}$.
Let measurement $x=0$ be a measurement $\{|j\rangle\langle j|_{A}\}_{j}$\ in
the Schmidt basis on system $A$, with outcomes $a=j\in\{0,\ldots,d-1\}$. Let
measurement $x=1$ be a measurement $\{|\widetilde{j}\rangle\langle
\widetilde{j}|_{A}\}_{j}$ in the Fourier conjugate basis, where%
\begin{equation}
|\widetilde{j}\rangle_{A}:=\frac{1}{\sqrt{d}}\sum_{k}e^{2\pi ijk/d}%
|k\rangle_{A},
\end{equation}
on system $A$, with outcomes $a=j\in\{0,\ldots,d-1\}$. This leads to the
following assemblage:
\begin{multline}
\Bigg\{  \left\{  \hat{\rho}_{B}^{a=j,x=0}=\left\vert \alpha_{j}\right\vert
^{2}|j\rangle\langle j|_{B}\right\}  _{j},\\ \left\{  \hat{\rho}_{B}%
^{a=j,x=1}=\frac{1}{d}Z^{\dag}(j)|\psi\rangle\langle\psi|_{B}Z(j)\right\}
_{j},\Bigg\}  ,
\end{multline}
where $|\psi\rangle_{B}:=\sum_{j}\alpha_{j}|j\rangle_{B}$. This assemblage has
\begin{equation}
H(\{\left\vert \alpha_{j}\right\vert ^{2}\}_{j})=H(A)_{\varphi}
\end{equation}
 bits of
restricted intrinsic steerability. Note that this is equal to the entropy of
entanglement of the state $|\varphi\rangle_{AB}$. If the state 
$|\varphi\rangle_{AB}$ is maximally entangled so that $\alpha_j = 1/\sqrt{d}$, then the resulting assemblage has $\log_2(d)$ bits of intrinsic steerability.
\end{proposition}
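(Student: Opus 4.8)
The plan is to follow the template established in the proof of Proposition~\ref{ex:simple}, replacing the $2\times2$ matrix computation there with a discrete-Fourier-transform argument in dimension $d$. First I would observe that, for both measurement settings, each conditional state on Bob's system is proportional to a pure state: for $x=0$ we have $\hat{\rho}_{B}^{a=j,x=0}\propto|j\rangle\langle j|_{B}$, while for $x=1$ we have $\hat{\rho}_{B}^{a=j,x=1}\propto|\psi_{j}\rangle\langle\psi_{j}|_{B}$, where $|\psi_{j}\rangle_{B}:=Z^{\dag}(j)|\psi\rangle_{B}=\sum_{k}e^{-2\pi ijk/d}\alpha_{k}|k\rangle_{B}$. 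Since any bipartite state with a pure marginal must factorize, every extension $\hat{\rho}_{BE}^{a,x}$ consistent with $\operatorname{Tr}_{E}(\hat{\rho}_{BE}^{a,x})=\hat{\rho}_{B}^{a,x}$ must take the form $\hat{\rho}_{B}^{a,x}\otimes\omega_{E}^{a,x}$ for some density operator $\omega_{E}^{a,x}$. This reduces the task to constraining the two families $\{\omega_{E}^{j,0}\}_{j}$ and $\{\omega_{E}^{j,1}\}_{j}$.

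The main work is the no-signaling constraint $\sum_{j}\hat{\rho}_{BE}^{j,0}=\sum_{j}\hat{\rho}_{BE}^{j,1}$, which I would analyze in the computational basis on $B$. The left-hand side $\sum_{j}|\alpha_{j}|^{2}|j\rangle\langle j|_{B}\otimes\omega_{E}^{j,0}$ is diagonal, whereas expanding $|\psi_{j}\rangle\langle\psi_{j}|_{B}$ shows the right-hand side equals $\frac{1}{d}\sum_{k,l}\alpha_{k}\alpha_{l}^{*}|k\rangle\langle l|_{B}\otimes\left(\sum_{j}e^{-2\pi ij(k-l)/d}\omega_{E}^{j,1}\right)$. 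The key step, and the main obstacle since it is the genuinely new ingredient relative to the $d=2$ case, is to force all off-diagonal blocks to vanish. Because $\alpha_{k}\alpha_{l}^{*}\neq0$ by hypothesis, this requires $\sum_{j}e^{-2\pi ijm/d}\omega_{E}^{j,1}=0$ for every nonzero frequency $m=1,\ldots,d-1$. Inverting the discrete Fourier transform, only the zero-frequency component can survive, so $\omega_{E}^{j,1}$ is independent of $j$; call the common value $\omega_{E}$. Substituting back collapses the right-hand side to $\sum_{k}|\alpha_{k}|^{2}|k\rangle\langle k|_{B}\otimes\omega_{E}$, and matching diagonal blocks against the left-hand side then forces $\omega_{E}^{j,0}=\omega_{E}$ for all $j$. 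Thus the only non-signaling extension is the product extension $\hat{\rho}_{BE}^{a,x}=\hat{\rho}_{B}^{a,x}\otimes\omega_{E}$, independent of $a$ and $x$, which establishes the claimed monogamy/unextendibility.

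With unextendibility in hand, the remaining computation is routine. Because $E$ decouples, the conditional mutual information reduces to $I(X\overline{A};B|E)_{\rho}=I(X\overline{A};B)_{\rho}=H(B)_{\rho}-H(B|X\overline{A})_{\rho}$. The conditional entropy $H(B|X\overline{A})_{\rho}$ vanishes because every conditional state $\rho_{B}^{a,x}$ is pure, and $H(B)_{\rho}=H(\rho_{B})$, where $\rho_{B}=\sum_{j}|\alpha_{j}|^{2}|j\rangle\langle j|_{B}$ is the reduced state of $|\varphi\rangle_{AB}$; by no-signaling this marginal is the same for both settings, hence independent of $p_{X}$. Therefore $I(X\overline{A};B)_{\rho}=H(\{|\alpha_{j}|^{2}\}_{j})=H(A)_{\varphi}$ for every $p_{X}$, so the supremum is trivially attained and $S^{R}(\overline{A};B)_{\hat{\rho}}=H(\{|\alpha_{j}|^{2}\}_{j})$, the entropy of entanglement.

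Finally, for the maximally entangled case $\alpha_{j}=1/\sqrt{d}$, I would invoke the sandwich $S(\overline{A};B)_{\hat{\rho}}\geq S^{R}(\overline{A};B)_{\hat{\rho}}=\log_{2}d$ from \eqref{eq:IS-to-RIS} together with the dimension upper bound $S(\overline{A};B)_{\hat{\rho}}\leq\log_{2}|\overline{A}|=\log_{2}d$, which pins the intrinsic steerability to exactly $\log_{2}d$ bits.
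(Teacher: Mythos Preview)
Your proposal is correct and follows essentially the same approach as the paper's proof: exploit purity of the conditional states to force product extensions, use the no-signaling constraint in the computational basis together with the discrete Fourier transform to conclude that the extension system is a fixed product state, and then evaluate the conditional mutual information directly. The only differences are presentational---you phrase the key step as DFT inversion on all off-diagonal blocks while the paper solves the linear system coming from one row, and you make the sandwich $S(\overline{A};B)_{\hat{\rho}}\geq S^{R}(\overline{A};B)_{\hat{\rho}}$ explicit in the maximally entangled case---but the argument is the same.
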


\begin{proof}
It is clear that the post-measurement state for Bob $\hat{\rho}_{B}^{a=j,x=0}$
is as above. For the other case, consider that%
\begin{align}
& \!\!\!\!\langle\widetilde{j}|_{A}\otimes I_{B}|\varphi\rangle_{AB} \nonumber \\
&  =\frac{1}%
{\sqrt{d}}\sum_{k=0}^{d-1}e^{-2\pi ijk/d}\langle k|_{A}\sum_{l=0}^{d-1}%
\alpha_{l}|l\rangle_{A}\otimes|l\rangle_{B}\\
&  =\frac{1}{\sqrt{d}}\sum_{k,l=0}^{d-1}\alpha_{k}e^{-2\pi ijk/d}\langle
k|l\rangle_{A}\otimes|l\rangle_{B}\\
&  =\frac{1}{\sqrt{d}}\sum_{k=0}^{d-1}\alpha_{k}e^{-2\pi ijk/d}|k\rangle_{B}.
\end{align}
Now defining the unitary operator $Z(j)$ by $Z(j)|k\rangle=e^{2\pi
ijk/d}|k\rangle$ for $j,k\in\{0,\ldots,d-1\}$, we can write%
\begin{equation}
\langle\widetilde{j}|_{A}\otimes I_{B}|\varphi\rangle_{AB}=\frac{1}{\sqrt{d}%
}Z^{\dag}(j)|\psi\rangle_{B},
\end{equation}
confirming the post-measurement subnormalized states $\hat{\rho}_{B}%
^{a=j,x=1}$. Arbitrary extensions of each of the above subnormalized states
are as follows:%
\begin{align}
\hat{\rho}_{BE}^{a=j,x=0} &  =\left\vert \alpha_{j}\right\vert ^{2}%
|j\rangle\langle j|_{B}\otimes\omega_{E}^{j},\\
\hat{\rho}_{BE}^{a=j,x=1} &  =\frac{1}{d}Z^{\dag}(j)|\psi\rangle\langle
\psi|_{B}Z(j)\otimes\tau_{E}^{j},
\end{align}
where $\omega_{E}^{j},\tau_{E}^{j}\geq0$ and $\operatorname{Tr}(\omega_{E}%
^{j})=\operatorname{Tr}(\tau_{E}^{j})=1$ for all $j\in\{0,\ldots,d-1\}$. The
no-signaling constraint is as follows:%
\begin{equation}
\sum_{j=0}^{d-1}\hat{\rho}_{BE}^{a=j,x=0}=\sum_{j=0}^{d-1}\hat{\rho}%
_{BE}^{a=j,x=1},
\end{equation}
which is the same as%
\begin{align}
&\sum_{k=0}^{d-1}|k\rangle\langle k|_{B}\otimes\left\vert \alpha_{k}\right\vert
^{2}\omega_{E}^{k} \nonumber 
\\ &  =\sum_{j=0}^{d-1}\frac{1}{d}Z^{\dag}(j)|\psi
\rangle\langle\psi|_{B}Z(j)\otimes\tau_{E}^{j}\label{eq:no-sig-gen-constr}\\
&  =\sum_{j,k,k^{\prime}=0}^{d-1}\frac{1}{d}\alpha_{k}\alpha_{k^{\prime}%
}^{\ast}e^{-2\pi ij(k-k^{\prime})/d}|k\rangle\langle k^{\prime}|_{B}%
\otimes\tau_{E}^{j}\\
&  =\sum_{k,k^{\prime}=0}^{d-1}|k\rangle\langle k^{\prime}|_{B}\otimes\frac
{1}{d}\alpha_{k}\alpha_{k^{\prime}}^{\ast}\sum_{j=0}^{d-1}e^{-2\pi
ij(k-k^{\prime})/d}\tau_{E}^{j}.
\end{align}
Set $k^{\prime}=0$. For $k\in\{0,1,\ldots,d-1\}$, we get the following
constraints from the no-signaling condition:%
\begin{align}
\omega_{E}^{0} & =\frac{1}{d}\sum_{j=0}^{d-1}\tau_{E}^{j}%
,\label{eq:tau-equations-NS}\\
0 & = \sum_{j=0}^{d-1}e^{-2\pi ijk/d}\tau_{E}^{j}  
.\label{eq:tau-equations-NS-2}%
\end{align}
We can conclude that $\tau_{E}^{j}$ is independent of $j$, so that $\tau
_{E}^{j}=\omega_{E}^{0}$ for all $j\in\{0,\ldots,d-1\}$. To see this, let us
solve the above equations, thinking of $\omega_{E}^{0}$ as fixed and $\tau
_{E}^{j}$ as free for all $j\in\{0,\ldots,d-1\}$. Consider that%
\begin{equation}
\sum_{j=0}^{d-1}e^{-2\pi ijk/d}=0\ \ \ \forall k\in\{1,\ldots,d-1\}.
\end{equation}
Then we can see that $\tau_{E}^{0}=\tau_{E}^{1}=\cdots=\tau_{E}^{d-1}%
=\omega_{E}^{0}$ is one of the solutions of the equations in
\eqref{eq:tau-equations-NS}--\eqref{eq:tau-equations-NS-2}. Since the
equations are linearly independent, it is a unique solution. Now considering
the other blocks in \eqref{eq:no-sig-gen-constr}\ (i.e., for $k=k^{\prime
}=1,\ldots,d-1$), we find that $\omega_{E}^{1}=\cdots=\omega_{E}^{d-1}%
=\omega_{E}^{0}$. Thus, the only possible extension allowed in order to
satisfy the no-signaling constraint is a product extension independent of $a$
and $x$, meaning one of the following form:%
\begin{align}
\hat{\rho}_{BE}^{a=j,x=0} &  =\left\vert \alpha_{j}\right\vert ^{2}%
|j\rangle\langle j|_{B}\otimes\omega_{E},\\
\hat{\rho}_{BE}^{a=j,x=1} &  =\frac{1}{d}Z^{\dag}(j)|\psi\rangle\langle
\psi|_{B}Z(j)\otimes\omega_{E},
\end{align}
where $\omega_{E}\geq0$ and $\operatorname{Tr}(\omega_{E})=1$. We can then
evaluate the restricted intrinsic steerability in terms of the following
classical--quantum state:%
\begin{multline}
\Bigg[  p|0\rangle\langle0|_{X}\otimes\sum_{j}|j\rangle\langle j|_{\overline{A}%
}\otimes\left\vert \alpha_{j}\right\vert ^{2}|j\rangle\langle j|_{B}\\
+\left(
1-p\right)  |1\rangle\langle1|_{X}\otimes\sum_{j}|j\rangle\langle j|_{\overline{A}%
}\\\otimes\frac{1}{d}Z^{\dag}(j)|\psi\rangle\langle\psi|_{B}Z(j)\Bigg]
\otimes\omega_{E},
\end{multline}
where $(p,1-p)$ is a probability distribution for the input $x$. The
conditional mutual information of this state is as follows:%
\begin{align}
I(X\overline{A};B|E)   &=I(X\overline{A};B)
  =H(B)-H(B|X\overline{A})\\
  &=H(B)
  =H(\{\left\vert \alpha_{j}\right\vert ^{2}\}),
\end{align}
so that this assemblage has
$H(\{\left\vert \alpha_{j}\right\vert ^{2}%
\})$
\textit{ bits of restricted intrinsic steerability}. The first step
follows because the system $E$ is product regardless of the extension, due to
the above analysis with the no-signaling constraint. The second step follows
by expanding the mutual information. The third step follows because the state
of the $B$ system is pure when conditioned on systems $X\overline{A}$. The final
step follows because the reduced state on the $B$ system is $\sum
_{j}\left\vert \alpha_{j}\right\vert ^{2}|j\rangle\langle j|_{B}$, which can
be seen from
\begin{align}
& \operatorname{Tr}_{X\overline{A}}\Bigg(  p|0\rangle\langle0|_{X}\otimes\sum
_{j}|j\rangle\langle j|_{\overline{A}}\otimes\left\vert \alpha_{j}\right\vert
^{2}|j\rangle\langle j|_{B}  \nonumber \\
& \qquad \qquad +
\left(  1-p\right)  |1\rangle\langle1|_{X}%
\otimes\sum_{j}|j\rangle\langle j|_{\overline{A}}\nonumber 
\\
& \qquad \qquad \otimes\frac{1}{d}Z^{\dag}%
(j)|\psi\rangle\langle\psi|_{B}Z(j)\Bigg)  \nonumber\\
& =p\sum_{j}\left\vert \alpha_{j}\right\vert ^{2}|j\rangle\langle
j|_{B}+\left(  1-p\right)  \sum_{j}\frac{1}{d}Z^{\dag}(j)|\psi\rangle
\langle\psi|_{B}Z(j)\\
& =p\sum_{j}\left\vert \alpha_{j}\right\vert ^{2}|j\rangle\langle
j|_{B}+\left(  1-p\right)  \sum_{j}\left\vert \alpha_{j}\right\vert
^{2}|j\rangle\langle j|_{B}\\
& =\sum_{j}\left\vert \alpha_{j}\right\vert ^{2}|j\rangle\langle j|_{B}.
\end{align}
This state is independent of the input probability distribution, so that the
maximum is achieved for any choice of $p\in(0,1)$.

If the state $|\varphi\rangle_{AB}$ is maximally entangled, then
\begin{equation}
H(\{\left\vert \alpha_{j}\right\vert ^{2}%
\}) = \log_2(d).
\end{equation}
Given the upper bound
$\log(\dim(\mathcal{H}_{\overline{A}})) = \log_2(d)$ on intrinsic steerability, we see that the upper bound is achieved in this case.
\end{proof}

%REWRITE: For simplicity, let us for now consider the converse relation when the
%extension system $E$ is classical, so that the overall state is of the form in
%\eqref{eq:classical-extension}. Starting from the condition $I(X\hat
%{A};B|E)_{\rho}=0$, again, we can write the conditional mutual information as
%an average of mutual informations. The non-negativity of mutual information
%and the initial hypothesis that $I(X\overline{A};B|E)_{\rho}=0$, implies that
%$I(X\overline{A};B)_{\rho^{\lambda}}=0$ for all $\lambda$ for which $p_{\Lambda
%}(\lambda) > 0 $. This in turn implies that the systems $X \overline{A}$ and $B$
%are independent given a shared variable $\lambda$, which in turn implies that
%the assemblage is unsteerable.

% The development in the previous paragraph is our preliminary intuition for why
% the conditional mutual information can take the role of a steering quantifier
% (similar to the intuition given in \cite{Tucci2002}).
%, but Definition~\ref{eq:ext-form}
%involves quantum extensions and general 1W-LOCC, which lead to a larger
%optimization space and are what we explore for the remainder of the paper.
\section{Intrinsic Steerability}
\label{sec:intrinsic}

We now give a proof for Theorem~\ref{monotone} and proofs for other properties of the intrinsic
steerability stated earlier.

\begin{proposition}
Intrinsic steerability vanishes for assemblages having an LHS model.
\end{proposition}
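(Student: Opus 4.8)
The plan is to reduce the statement to a single inequality and then verify it by an explicit choice of extension built from the hidden variable. The excerpt already records that $I(X\overline{A};B'|EY)_\rho \geq 0$ for every extension, by the strong subadditivity of quantum entropy, so that $S(\overline{A};B)_{\hat\rho}\geq 0$. Hence it suffices to prove the reverse bound $S(\overline{A};B)_{\hat\rho}\leq 0$. For this, I would fix an arbitrary 1W-LOCC strategy $\{p_{X|Y},\{\mathcal{K}_y\}_y\}$ and exhibit one non-signaling extension for which the conditional mutual information is exactly zero. The inner infimum over extensions is then squeezed to $0$, and taking the supremum over all strategies leaves the value at $0$.

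First I would observe that the local-hidden-state structure survives Bob's instrument. Applying $\mathcal{K}_y$ to the assumed decomposition $\hat\rho_B^{a,x}=\sum_\lambda p_\Lambda(\lambda)\,p_{\overline{A}|X\Lambda}(a|x,\lambda)\,\rho_B^\lambda$ yields $\mathcal{K}_y(\hat\rho_B^{a,x})=\sum_\lambda p_\Lambda(\lambda)\,p_{\overline{A}|X\Lambda}(a|x,\lambda)\,\mathcal{K}_y(\rho_B^\lambda)$, which is again of hidden-state form with the states $\rho_B^\lambda$ replaced by $\mathcal{K}_y(\rho_B^\lambda)$. I would then let the eavesdropper hold a classical copy of $\Lambda$, defining
\begin{equation}
\hat\rho_{B'E}^{a,x,y}:=\sum_\lambda p_\Lambda(\lambda)\,p_{\overline{A}|X\Lambda}(a|x,\lambda)\,\mathcal{K}_y(\rho_B^\lambda)\otimes\op{\lambda}{\lambda}_E .
\end{equation}

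Next I would check that this is a legitimate non-signaling extension and that it gives vanishing conditional mutual information. Positivity is immediate, as it is a nonnegative combination of positive semidefinite operators, and tracing out $E$ returns $\mathcal{K}_y(\hat\rho_B^{a,x})$, so the marginal constraint holds. For the no-signaling constraint, summing over $a$ and using $\sum_a p_{\overline{A}|X\Lambda}(a|x,\lambda)=1$ produces $\sum_\lambda p_\Lambda(\lambda)\,\mathcal{K}_y(\rho_B^\lambda)\otimes\op{\lambda}{\lambda}_E$, which is manifestly independent of $x$. Finally, to evaluate $I(X\overline{A};B'|EY)_\rho$, I would note that in the block where the classical registers take values $E=\lambda$ and $Y=y$, the operator on $B'$ is $\mathcal{K}_y(\rho_B^\lambda)$, identical for every $x$ and $a$; the conditional state therefore factorizes as a classical state on $X\overline{A}$ tensored with a fixed state on $B'$, whence the conditional mutual information is zero.

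The computations are routine; the only real content is recognizing that the hidden variable itself supplies the correct non-signaling extension, so that conditioning on $E=\Lambda$ together with $Y$ decouples $X\overline{A}$ from $B'$. The one point I expect to require care is confirming that Bob's instrument $\mathcal{K}_y$ does not spoil this structure, which is why I would record the invariance of the hidden-state form under $\mathcal{K}_y$ before writing down the extension.
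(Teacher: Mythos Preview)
Your proposal is correct and follows essentially the same approach as the paper: both exhibit the classical hidden variable $\Lambda$ as the extension system $E$, so that conditioned on $\lambda$ and $y$ the systems $X\overline{A}$ and $B'$ are in a product state and the conditional mutual information vanishes. Your write-up is in fact more careful than the paper's, which simply displays the extension and asserts the product structure without explicitly checking the marginal and no-signaling constraints as you do.
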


\begin{proof}
To prove this, consider the following particular non-signaling extension %of the form in
%\eqref{eq:ext-form} 
for an assemblage with a local-hidden-state model:
\begin{multline}
\sum_{x,a,\lambda,y}p_{X|Y}(x|y)\op{ x}{x}_{X}\otimes p_{\overline{A}|X\Lambda
}(a|x,\lambda)\op{a}{ a}_{\overline{A}}
\\
\otimes \sum_t K_{y,t}\hat{\rho}_{B}^{\lambda}K_{y,t}^{\dag}\otimes p_{\Lambda}%
(\lambda)\op{\lambda}{\lambda}_{E}\otimes\op{y}{y}_{Y}.
\end{multline}
For this non-signaling extension, conditioned on the values $\lambda$ and $y$,
systems $X\overline{A}$ and $B^{\prime}$ are in a product state, so that the
conditional mutual information $I(X\overline{A};B^{\prime}|EY)$ vanishes. The same
argument applies to all quantum instruments $\{\mathcal{K}_y\}_{y}$ and channels $p_{X|Y}$,
so that
\begin{equation}
S(\overline{A};B)_{\rho}=0
\end{equation}
in this case.
\end{proof}

%For details, please see Supplementary
%Material~\ref{sec:faithful}. (IS THERE A NEED FOR THIS IN THE SM?)
\begin{proposition}
[1W-LOCC monotone]Let $\{\hat{\rho}_{B}^{a,x}\}_{a,x}$ be an assemblage, and suppose that%
\begin{multline}
\Bigg\{  \hat{\rho}_{B_{f},z}^{a_{f},x_{f}}:=\\
\qquad\sum_{a,x}p(a_{f}|x_{f}%
,x,a,z)p(x|x_{f},z)\mathcal{K}_z(\hat{\rho}_{B}^{a,x})/p(z)\Bigg\}
_{a_{f},x_{f}},
\end{multline}
is an assemblage that arises from it by the action of a general
1W-LOCC\ operation, where%
\begin{equation}
p(z):=\operatorname{Tr}\!\left( \mathcal{K}_z\!\left(\sum_{a}\hat{\rho}_{B}%
^{a,x}\right)\right)  =\operatorname{Tr}(\mathcal{K}_z(\rho_{B})).
\end{equation}
Then the intrinsic steerability is monotone on average under deterministic 1W-LOCC, in the following sense:
\begin{equation}
\sum_{z}p(z)S(\overline{A}_{f};B_{f})_{\hat{\rho}_{z}}\leq S(\overline{A};B)_{\hat{\rho
}}. \label{eq:1W-LOCC-monotone}%
\end{equation}
\end{proposition}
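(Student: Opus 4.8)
The strategy is to prove the monotonicity inequality by showing that any extension strategy available for the original assemblage $\{\hat{\rho}_B^{a,x}\}_{a,x}$ can be used to construct a corresponding strategy for each of the post-LOCC assemblages $\{\hat{\rho}_{B_f,z}^{a_f,x_f}\}$, and that the conditional mutual information respects the averaging structure. Since intrinsic steerability is defined as a $\sup$-over-strategies of an $\inf$-over-extensions, I would work from the optimal (or near-optimal) strategy for the $z$-dependent assemblages and lift it to a single strategy for $\{\hat{\rho}_B^{a,x}\}_{a,x}$ that performs at least as well as the weighted average $\sum_z p(z) S(\overline{A}_f;B_f)_{\hat{\rho}_z}$. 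First I would fix, for each $z$, a near-optimal 1W-LOCC strategy $\{p_{X_f|Y_f}^{(z)}, \{\mathcal{K}_{y_f}^{(z)}\}_{y_f}\}$ achieving the supremum in $S(\overline{A}_f;B_f)_{\hat{\rho}_z}$ up to $\varepsilon$, and then compose the 1W-LOCC map that produces the $z$-assemblage with this strategy to obtain a single valid 1W-LOCC strategy acting on $\{\hat{\rho}_B^{a,x}\}_{a,x}$. The key structural fact is that composition of 1W-LOCC operations is again a 1W-LOCC operation, so the composite strategy is a legitimate element of the supremization defining $S(\overline{A};B)_{\hat{\rho}}$.

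The heart of the argument lies in handling the two optimizations in the opposite order from how they are nested. I would use the classical outcome $z$ of Bob's first instrument as part of the classical side information that Eve is permitted to hold: because the extension systems are only constrained by the no-signaling condition, one may always adjoin the classical register recording $z$ to Eve's system $E$. The crucial step is then the standard chain-rule manipulation: with $Z$ available in the conditioning system, one has
\begin{equation}
I(X\overline{A};B'|EZY)_{\rho}=\sum_z p(z)\, I(X\overline{A};B'|EY)_{\rho_z},
\end{equation}
which converts the single conditional mutual information into exactly the weighted average appearing on the left-hand side of \eqref{eq:1W-LOCC-monotone}. I would invoke that conditioning on additional classical systems, and in particular placing $Z$ with Eve, can only decrease the value of the infimum over extensions, since Eve is given strictly more to work with; this is the mechanism that produces an inequality rather than an equality.

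Assembling these pieces, I would argue: $S(\overline{A};B)_{\hat{\rho}}$ is at least the value of the composite strategy, which equals an infimum over extensions of $I(X\overline{A};B'|EY)$; one valid extension is the composite one carrying $Z$ in $E$, whose conditional mutual information is the weighted average $\sum_z p(z)\, I(X\overline{A}_f;B_f'|E_z Y_f)_{\rho_z}$; and each term is bounded below by the near-optimal extension for the $z$-assemblage, hence by $S(\overline{A}_f;B_f)_{\hat{\rho}_z}-\varepsilon$. Taking $\varepsilon\to 0$ yields the claim. The main obstacle I anticipate is the bookkeeping of the no-signaling constraints under composition: I must verify that an arbitrary non-signaling extension of the composite post-LOCC assemblage restricts correctly, and conversely that extensions built from the $z$-wise optimal ones genuinely satisfy the global no-signaling condition on the input register $X$. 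Getting the direction of the infimum/supremum interchange right, and confirming that adjoining $Z$ to Eve never violates the extension structure required in Definition~\ref{def:steering-CMI}, is where the care is needed; the entropic manipulation itself is routine once the registers are correctly identified.
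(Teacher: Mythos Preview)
Your overall strategy matches the paper's: compose the $z$-dependent 1W-LOCC strategies with the map $\{\mathcal{K}_z\}_z$ into a single 1W-LOCC on $\{\hat\rho_B^{a,x}\}_{a,x}$, condition on the classical $Z$, and relate the resulting conditional mutual information to the $z$-wise quantities. Two points, however, deserve correction.

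First, the direction of your infimum step is backward. You write that $S(\overline A;B)_{\hat\rho}\geq\inf_{\text{ext}}I(X\overline A;B'|EY)$ and then that ``one valid extension is the composite one carrying $Z$ in $E$''; but choosing one extension gives only an \emph{upper} bound on the infimum, so the chain $S\geq\inf\leq(\text{particular value})\geq\sum_z p(z)(S_z-\varepsilon)$ does not close. What you actually need, and what the paper does, is the opposite: take an \emph{arbitrary} non-signaling extension $\hat\omega^{a,x,y,z}_{B_f'E}$ of the composite and show that it induces, for each $z$, a valid non-signaling extension of the $z$-assemblage after its strategy. Then each $z$-term is bounded below by its own infimum, hence by $S(\overline A_f;B_f)_{\hat\rho_z}-\varepsilon$, and since this holds for \emph{every} $\hat\omega$ the infimum over composite extensions inherits the lower bound. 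Note also that in the composite strategy $Z$ is already part of Bob's classical outcome (the $Y$ register in Definition~\ref{def:steering-CMI} is $(Z,Y_f)$), so $Z$ is automatically in the conditioning; there is no need to ``adjoin $Z$ to Eve's system.''

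Second, you silently pass from the registers $(X,\overline A)$, which are the input and output of the \emph{original} black box and are what appear in $S(\overline A;B)_{\hat\rho}$, to the registers $(X_f,\overline A_f)$ appearing in $S(\overline A_f;B_f)_{\hat\rho_z}$. This is not automatic. The paper augments the composite state with the extra classical registers $X_f,\overline A_f$ and establishes two facts: (i) $I(X_f\overline A_f;B_f'|EYZ)\leq I(X_fX\overline A;B_f'|EYZ)$ by data processing, since given $Z$ the register $\overline A_f$ is obtained from $X_fX\overline A$; and (ii) $I(X_fX\overline A;B_f'|EYZ)=I(X\overline A;B_f'|EYZ)$ via the chain rule together with the Markov-chain observation that, conditioned on $YZX\overline A$, the register $X_f$ is independent of $B_f'E$. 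Without this step your weighted average carries the wrong first argument and cannot be compared to the quantities $S(\overline A_f;B_f)_{\hat\rho_z}$.
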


\begin{proof}
First, we give a proof sketch for the
monotonicity of intrinsic steerability on average under deterministic 1W-LOCC: %
\begin{equation}
S(\overline{A};B)_{\hat{\rho}}\geq\sum_{z}p_{Z}(z)S(\overline{A}_{f};B_{f})_{\hat{\rho
}_{z}},
\end{equation}
where $\hat{\rho}_{z}:=\{\hat{\rho}_{B_{f},z}^{a_{f},x_{f}}\}_{a_{f},x_{f}}$
is the assemblage resulting from a 1W-LOCC operation on the initial assemblage
$\{\hat{\rho}_{B}^{a,x}\}_{a,x}$ and is given as \cite{Gallego2015} %
%\begin{equation}
%$\hat{\rho}_{B_{f},z}^{a_{f},x_{f}}:=\sum_{a,x}p(a_{f}|a,x,x_{f},z)p(x|x_{f}%
%,z)K_{z}\hat{\rho}_{B}^{a,x}K_{z}^{\dag}$.
%\end{equation}
\begin{equation}
\hat{\rho}_{B_{f},z}^{a_{f},x_{f}}:=\sum_{a,x}p(a_{f}|a,x,x_{f},z)p(x|x_{f}%
,z)\mathcal{K}_z(\hat{\rho}_{B}^{a,x}).
\end{equation}
In the above, $p(a_{f}|a,x,x_{f},z)$ and $p(x|x_{f},z)$ are local classical
channels that Alice uses, respectively, to pick the output $a_{f}$ of the
final assemblage and the input $x$ to her initial assemblage. The set
$\{\mathcal{K}_z\}_{z}$ is such that the sum map $\sum_z\mathcal{K}_z$ is trace preserving and thus corresponds
to a measurement of Bob's system. The definition of the intrinsic steerability
involves a supremum over measurements of the system $B_{f}$ of the final
assemblage and classical channels for the input $X_{f}$ to the final
assemblage. Using data processing and when given $Z$, we can say that system
$\overline{A}_{f}$ was obtained by processing systems $XX_{f}\overline{A}$. Then, the
two successive measurements on Bob's system can be thought of as a single
measurement. Since the intrinsic steerability involves a supremum over all possible measurements, the result follows.

We now give a detailed proof. 
To see this, consider that, in accordance with the definition of $S(\overline
{A}_{f};B_{f})_{\hat{\rho}_{z}}$, the assemblages $\{\hat{\rho}_{B_{f}%
,z}^{a_{f},x_{f}}\}_{a_{f},x_{f}}$ can be further preprocessed by a
$z$-dependent 1W-LOCC $\{p_{X_{f}|YZ=z},\{\mathcal{L}_y^{(z)}\}_{y}\}$, resulting in
the following state:%
\begin{equation}
\sigma_{X_{f}\overline{A}_{f}B_{f}^{\prime}Y}^{z}:=\sum_{a_{f},x_{f},y}%
p(x_{f}|yz)[x_{f}]\otimes\lbrack a_{f}]\otimes \mathcal{L}_y^{(z)}(\hat{\rho}_{B_{f}%
,z}^{a_{f},x_{f}})\otimes\lbrack y].
\label{eq:smaller-state-def}%
\end{equation}

\begin{notation}
In the above and in what follows, we employ a shorthand $[x]\equiv
|x\rangle\langle x|_{X}$ or $[a]\equiv|a\rangle\langle a|_{\overline{A}}$, etc.
\end{notation}

\noindent The state in\ \eqref{eq:smaller-state-def}\ is extended by the
following one:%
\begin{multline}
\sigma_{X_{f}X\overline{A}_{f}\overline{A}B_{f}^{\prime}Y}^{z} :=\sum_{a_{f},a,x,x_{f}%
,y}p(x_{f}|yz)[x_{f}]\\
\otimes p(x|x_{f},z)[x]\otimes
 p(a_{f}|x_{f}%
,x,a,z)[a_{f}]\otimes\lbrack a]\\\otimes\frac{\mathcal{L}_y^{(z)}(\hat{\rho}%
_{B}^{a,x})}{p(z)}\otimes\lbrack y],
\end{multline}
which in turn are elements of the following classical--quantum state:%
\begin{equation}
\sigma_{X_{f}X\overline{A}_{f}\overline{A}B_{f}^{\prime}YZ}:=\sum_{z}\sigma_{X_{f}%
X\overline{A}_{f}\overline{A}B_{f}^{\prime}Y}^{z}\otimes p(z)[z].
\end{equation}
An \textit{arbitrary} non-signaling extension of the state in
\eqref{eq:smaller-state-def}, according to that needed in the definition of
$S(\overline{A}_{f};B_{f})_{\hat{\rho}_{z}}$,\ is as follows:%
\begin{multline}
\sigma_{X_{f}\overline{A}_{f}B_{f}^{\prime}EY}^{z}:=\sum_{a_{f},x_{f},y}%
p(x_{f}|yz)[x_{f}]\otimes\lbrack a_{f}]\\
\otimes\hat{\tau}_{B_{f}^{\prime}%
E}^{a_{f},x_{f},y,z}\otimes\lbrack y], \label{eq:generic-ext-lower-bnd}%
\end{multline}
where $\hat{\tau}_{B_{f}^{\prime}E}^{a_{f},x_{f},y,z}$\ satisfies%
\begin{align}
\operatorname{Tr}_{E}(\hat{\tau}_{B_{f}^{\prime}E}^{a_{f},x_{f},y,z})  &
=\mathcal{L}_y^{(z)}(\hat{\rho}_{B_{f},z}^{a_{f},x_{f}}),\\
\sum_{a_{f}}\hat{\tau}_{B_{f}^{\prime}E}^{a_{f},x_{f},y,z}  &  =\sum_{a_{f}%
}\hat{\tau}_{B_{f}^{\prime}E}^{a_{f},x_{f}^{\prime},y,z}\ \ \ \ \nonumber\\&\forall
x_{f},x_{f}^{\prime}\in\mathcal{X}_{f},\ y\in\mathcal{Y},\ z\in\mathcal{Z}.
\end{align}
A \textit{particular} non-signaling extension of the state in
\eqref{eq:smaller-state-def}, according to that needed in the definition of
$S(\overline{A}_{f};B_{f})_{\hat{\rho}_{z}}$,\ is as follows:%
\begin{multline}
\zeta_{X_{f}\overline{A}_{f}B_{f}^{\prime}EY}^{z}:=\sum_{a_{f},x_{f},y}%
p(x_{f}|yz)[x_{f}]\otimes\lbrack a_{f}]\\\otimes\sum_{a,x}p(a_{f}|x_{f}%
,x,a,z)p(x|x_{f},z)\hat{\omega}_{B_{f}^{\prime}E}^{a,x,y,z}\otimes\lbrack y],
\label{eq:particular-ext}%
\end{multline}
where $\hat{\omega}_{B_{f}^{\prime}E}^{a,x,y,z}$ satisfies%
\begin{align}
\operatorname{Tr}_{E}(\hat{\omega}_{B_{f}^{\prime}E}^{a,x,y,z})  &
=\frac{\mathcal{L}_y^{(z)}(\mathcal{K}_z(\hat{\rho}_{B}^{a,x}))%
}{p(z)},\\
\sum_{a}\hat{\omega}_{B_{f}^{\prime}E}^{a,x,y,z}  &  =\sum_{a}\hat{\omega
}_{B_{f}^{\prime}E}^{a,x,y,z}\ \ \ \ \forall x,x^{\prime}\in\mathcal{X}%
,\ y\in\mathcal{Y},\ z\in\mathcal{Z}.
\end{align}
The operator $\hat{\omega}_{B_{f}^{\prime}E}^{a,x,y,z}$ will serve as an
arbitrary non-signaling extension needed in the definition of $S(\overline
{A};B)_{\hat{\rho}}$. Let $\zeta_{X_{f}\overline{A}_{f}B_{f}^{\prime}EYZ}$ denote
the following state:%
\begin{equation}
\zeta_{X_{f}\overline{A}_{f}B_{f}^{\prime}EYZ}:=\sum_{z}\zeta_{X_{f}\overline{A}%
_{f}B_{f}^{\prime}EY}^{z}\otimes p(z)[z].
\end{equation}
This in turn is a marginal of the following state:
\begin{multline}
\zeta_{X_{f}X\overline{A}_{f}\overline{A}B_{f}^{\prime}EYZ}:=\sum_{a_{f},a,x_{f}%
,x,y}p(x_{f}|yz)[x_{f}]\\
\otimes p(x|x_{f},z)[x]
\otimes p(a_{f}|x_{f}%
,x,a,z)[a_{f}] \otimes\lbrack a]\\
\otimes\hat{\omega}_{B_{f}^{\prime}%
E}^{a,x,y,z}\otimes\lbrack y]\otimes p(z)[z].\label{eq:big-state}%
\end{multline}

Consider that%
\begin{align}
&\sum_{z}p(z)\inf_{\text{ext.~in \eqref{eq:generic-ext-lower-bnd}}}I(X_{f}%
\overline{A}_{f};B_{f}^{\prime}|EY)_{\sigma^{z}}  
\nonumber 
\\&  \leq\sum_{z}p(z)I(X_{f}\overline
{A}_{f};B_{f}^{\prime}|EY)_{\zeta^{z}}\\
&  =I(X_{f}\overline{A}_{f};B_{f}^{\prime}|EYZ)_{\zeta}\\
&  \leq I(X_{f}X\overline{A};B_{f}^{\prime}|EYZ)_{\zeta}\\
&  =I(X\overline{A};B_{f}^{\prime}|EYZ)_{\zeta}+I(X_{f};B_{f}^{\prime}|EYZX\overline
{A})_{\zeta}\\
&  =I(X\overline{A};B_{f}^{\prime}|EYZ)_{\zeta}.
\end{align}
The first inequality follows because the extension state $\zeta_{X_{f}\overline
{A}_{f}B_{f}^{\prime}EY}^{z}$ is a particular kind of non-signaling extension
required in the definition of $S(\overline{A}_{f};B_{f})_{\hat{\rho}_{z}}$. The
first equality follows because system $Z$ is classical and thus can be
incorporated as a conditioning system in the conditional mutual information.
The second inequality follows from local data processing for the conditional
mutual information:\ given $Z$, the system $\overline{A}_{f}$ arises from local
processing of systems $X_{f}X\overline{A}$. The second equality follows from the
chain rule for conditional mutual information. The final equality follows from
the fact that systems $B_{f}^{\prime}E$ are independent of $X_{f}$ when given
the classical systems $YZX\overline{A}$ (one can inspect the state in
\eqref{eq:big-state} to see this explicitly). Since the above chain of
inequalities holds for any non-signaling extension of the form in
\eqref{eq:particular-ext}, we can conclude that%
\begin{multline}
\sum_{z}p(z)\inf_{\text{ext.~in \eqref{eq:generic-ext-lower-bnd}}}I(X_{f}%
\overline{A}_{f};B_{f}^{\prime}|EY)_{\sigma^{z}}\\\leq\inf_{\text{ext.~in
}\eqref{eq:particular-ext}}I(X\overline{A};B_{f}^{\prime}|EYZ)_{\zeta}.
\end{multline}
Now we can take the supremum of both sides with respect to 1W-LOCC\ operations
$\{p_{X_{f}|YZ=z},\{\mathcal{L}_{y}^{(z)}\}_{y}\}_{z}$ and we find that%
\begin{multline}
\sup_{\{p_{X_{f}|YZ=z},\{\mathcal{L}_{y}^{(z)}\}_{y}\}_{z}}\sum_{z}p(z)\inf
_{\text{ext.~in \eqref{eq:generic-ext-lower-bnd}}}I(X_{f}\overline{A}_{f}%
;B_{f}^{\prime}|EY)_{\sigma^{z}}\\\leq\sup_{\{p_{X_{f}|YZ=z},\{\mathcal{L}_{y}^{(z)}%
\}_{y}\}_{z}}\inf_{\text{ext.~in }\eqref{eq:particular-ext}}I(X\overline{A}%
;B_{f}^{\prime}|EYZ)_{\zeta}. \label{eq:almost-there-1W-LOCC}%
\end{multline}
Since the 1W-LOCC\ operation $\{p_{X_{f}|YZ=z},\{\mathcal{L}_{y}^{(z)}\}_{y}\}_{z}$ is a
particular 1W-LOCC\ operation that can be performed on the original assemblage
$\{\hat{\rho}_{B}^{a,x}\}_{a,x}$, we find that%
\begin{multline}
\sup_{\{p_{X_{f}|YZ=z},\{\mathcal{L}_{y}^{(z)}\}_{y}\}_{z}}\inf_{\text{ext.~in
}\eqref{eq:particular-ext}}I(X\overline{A};B_{f}^{\prime}|EYZ)_{\zeta}\\
\leq
S(\overline{A};B)_{\hat{\rho}}.
\end{multline}
Since each $z$-dependent 1W-LOCC\ operation $\{p_{X_{f}|YZ=z},\{\mathcal{L}_{y}%
^{(z)}\}_{y}\}$ depends only on a particular value of $z$, we can then
exchange the supremum and the sum over $z$ in \eqref{eq:almost-there-1W-LOCC}
to conclude that%
\begin{align}
&  \sup_{\{p_{X_{f}|YZ=z},\{\mathcal{L}_{y}^{(z)}\}_{y}\}_{z}}\sum_{z}p(z)\inf
_{\text{ext.~in \eqref{eq:generic-ext-lower-bnd}}}I(X_{f}\overline{A}_{f}%
;B_{f}^{\prime}|EY)_{\sigma^{z}}\nonumber\\
&  =\sum_{z}p(z)\sup_{\{p_{X_{f}|YZ=z},\{\mathcal{L}_{y}^{(z)}\}_{y}\}}\inf
_{\text{ext.~in \eqref{eq:generic-ext-lower-bnd}}}I(X_{f}\overline{A}_{f}%
;B_{f}^{\prime}|EY)_{\sigma^{z}}\\
&  =\sum_{z}p(z)S(\overline{A}_{f};B_{f})_{\hat{\rho}_{z}}.
\end{align}
Putting these last steps together, we conclude \eqref{eq:1W-LOCC-monotone}.%For a detailed proof, please refer
% to Supplementary Material \ref{sec:1W-LOCC}.
\end{proof}

\begin{proposition}[Convexity]
\label{prop:convexity-IS}Let $\{\hat{\rho}_{B}^{a,x}\}_{a,x}$ and
$\{\hat{\sigma}_{B}^{a,x}\}_{a,x}$ be assemblages, and let $\lambda\in
\lbrack0,1]$. Let $\{\hat{\tau}_{B}^{a,x}\}_{a,x}$ be a mixture of the two
assemblages, defined as
\begin{equation}
\hat{\tau}_{B}^{a,x}:=\lambda\hat{\rho}_{B}^{a,x}+(1-\lambda)\hat{\sigma}%
_{B}^{a,x}.
\end{equation}
Then
\begin{equation}
S(\overline{A};B)_{\hat{\tau}}\leq\lambda S(\overline{A};B)_{\hat{\rho}}+(1-\lambda
)S(\overline{A};B)_{\hat{\sigma}}.
\end{equation}
\end{proposition}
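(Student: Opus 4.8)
The plan is to exploit the sup--inf structure of the intrinsic steerability together with a classical flag placed on the eavesdropper's system, in the spirit of the standard convexity proof for squashed entanglement. Since the right-hand side does not depend on any 1W-LOCC strategy, it suffices to fix an arbitrary strategy $\{p_{X|Y},\{\mathcal{K}_y\}_y\}$ appearing in the supremum that defines $S(\overline{A};B)_{\hat{\tau}}$, and to bound the associated infimum over non-signaling extensions of the post-strategy state $\tau_{X\overline{A}B^{\prime}Y}$. Taking the supremum over strategies at the very end then yields the stated inequality.

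With the strategy fixed, I would apply the same instrument $\{\mathcal{K}_y\}_y$ and channel $p_{X|Y}$ to each of $\hat{\rho}$ and $\hat{\sigma}$. Linearity of $\mathcal{K}_y$ gives $\mathcal{K}_y(\hat{\tau}_B^{a,x})=\lambda\mathcal{K}_y(\hat{\rho}_B^{a,x})+(1-\lambda)\mathcal{K}_y(\hat{\sigma}_B^{a,x})$, so $\tau_{X\overline{A}B^{\prime}Y}$ is the same convex combination of $\rho_{X\overline{A}B^{\prime}Y}$ and $\sigma_{X\overline{A}B^{\prime}Y}$. Given any valid non-signaling extensions $\hat{\rho}_{B^{\prime}E}^{a,x,y}$ and $\hat{\sigma}_{B^{\prime}E}^{a,x,y}$, I construct a flagged extension of $\tau$ on systems $B^{\prime}EF$, with $F$ a classical register, by
\begin{equation}
\hat{\tau}_{B^{\prime}EF}^{a,x,y}:=\lambda\,\hat{\rho}_{B^{\prime}E}^{a,x,y}\otimes\op{0}{0}_{F}+(1-\lambda)\,\hat{\sigma}_{B^{\prime}E}^{a,x,y}\otimes\op{1}{1}_{F}.
\end{equation}
I would then check that this is admissible: tracing out $EF$ reproduces $\mathcal{K}_y(\hat{\tau}_B^{a,x})$ by the marginal conditions on the two given extensions, and the no-signaling constraint that $\sum_a\hat{\tau}_{B^{\prime}EF}^{a,x,y}$ be independent of $x$ follows termwise from the no-signaling property of each of the $\rho$- and $\sigma$-extensions.

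The crux is the behavior of the conditional mutual information under the flag $F$, which sits on the conditioning side. Since $F$ perfectly records which branch the state is in, conditioning on it makes the conditional mutual information linear,
\begin{equation}
I(X\overline{A};B^{\prime}|EFY)_{\tau}=\lambda\,I(X\overline{A};B^{\prime}|EY)_{\rho}+(1-\lambda)\,I(X\overline{A};B^{\prime}|EY)_{\sigma}.
\end{equation}
Because $B^{\prime}EF$ is a particular admissible extension for $\tau$, the infimum over all extensions of $\tau$ is bounded above by this value; and since the two input extensions enter independently, I can infimize over each separately. For the fixed strategy, each resulting infimum is no larger than the corresponding supremum over strategies, namely $S(\overline{A};B)_{\hat{\rho}}$ and $S(\overline{A};B)_{\hat{\sigma}}$, so that
\begin{equation}
\inf_{\text{ext.~of }\tau}I(X\overline{A};B^{\prime}|EFY)_{\tau}\leq\lambda S(\overline{A};B)_{\hat{\rho}}+(1-\lambda)S(\overline{A};B)_{\hat{\sigma}}.
\end{equation}
Taking the supremum over the 1W-LOCC strategy for $\hat{\tau}$ completes the argument.

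The main obstacle to watch is bookkeeping rather than any deep inequality: the extensions of $\hat{\rho}$ and $\hat{\sigma}$ may live on systems of different dimension, and the flag $F$ must be used to select the appropriate branch so that $E$ carries a genuine non-signaling extension in each case. One should also note that the infima need not be attained, so the argument is phrased entirely as an upper bound on the $\tau$-infimum via the restricted flagged family, avoiding any appeal to optimizers. Everything else---linearity of $\mathcal{K}_y$, the marginal and no-signaling checks, and the flag decomposition of the conditional mutual information---is routine.
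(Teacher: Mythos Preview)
Your proposal is correct and follows essentially the same approach as the paper: fix a 1W-LOCC strategy, build a flagged non-signaling extension of $\tau$ from arbitrary extensions of $\rho$ and $\sigma$ (the paper calls the flag $E'$ rather than $F$), use the classical flag to decompose $I(X\overline{A};B'|EFY)$ as the convex combination, then infimize over the two extensions and supremize over the strategy. Your ordering of the final steps (bounding each infimum by $S(\overline{A};B)_{\hat{\rho}}$, $S(\overline{A};B)_{\hat{\sigma}}$ before taking the supremum for $\hat{\tau}$) is a harmless variant of the paper's $\sup(f+g)\leq\sup f+\sup g$ step.
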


\begin{proof}
We first give a proof sketch for the convexity of intrinsic steerability. Let $\lambda\in\lbrack0,1]$. Let $\{\hat{\rho}_{B}%
^{a,x}\}_{a,x}$ and $\{\hat{\sigma}_{B}^{a,x}\}_{a,x}$ be two assemblages, and
consider an assemblage $\{\hat{\tau}_{B}^{a,x}:=\lambda\hat{\rho}_{B}%
^{a,x}+(1-\lambda)\hat{\sigma}_{B}^{a,x}\}_{a,x}$. Convexity of the intrinsic
steerability is the following statement:
\begin{equation}
S(\overline{A};B)_{\hat{\tau}}\leq\lambda S(\overline{A};B)_{\hat{\rho}}+(1-\lambda
)S(\overline{A};B)_{\hat{\sigma}}, \label{convexity}%
\end{equation}
whose physical interpretation is that steering cannot increase when mixing two
assemblages. A proof for convexity is similar to known proofs for the
convexity of squashed entanglement \cite{Christandl2003} and the squashed
entanglement of a channel \cite{GEW16}. To prove convexity, first consider
arbitrary non-signaling extensions of $\{\hat{\rho}_{B}^{a,x}\}_{a,x}$ and
$\{\hat{\sigma}_{B}^{a,x}\}_{a,x}$. %of the form in \eqref{eq:ext-form}.
Embedding these in a larger classical--quantum state with a label chosen
according to $\lambda$ gives a particular non-signaling extension of
$\hat{\tau}$. Convexity then follows from a property of conditional mutual
information and because the intrinsic steerability involves an infimum over
all non-signaling extensions. %For a detailed proof, please see Supplementary
%Material~\ref{sec:convexity}.

We now give a detailed proof. 
%As stated previously, our proof of convexity of intrinsic steerability bears
%some similarities to the proofs of convexity of the squashed entanglement
%\cite{Christandl2003} and convexity of the squashed entanglement of a channel
%with respect to the channel~\cite{GEW16}.
Let $\{p_{X|Y},\{\mathcal{K}_y\}_{y}\}$
denote an arbitrary 1W-LOCC operation, which leads to the following
classical--quantum state:
\begin{multline}
\tau_{X\overline{A}B^{\prime}Y}:=\sum_{a,x,y}p_{X|Y}(x|y)\op{x}{x}_{X}%
\otimes\op{a}{a}_{\overline{A}}\\
\otimes \mathcal{K}_y(\hat{\tau}_{B}^{a,x})%
\otimes\op{y}{y}_{Y}.
\end{multline}
An \textit{arbitrary} non-signaling extension of this state, is as follows:
\begin{multline}
\tau_{X\overline{A}B^{\prime}YE}:=\sum_{a,x,y}p_{X|Y}(x|y)\op{x}{x}_{X}%
\otimes\op{a}{a}_{\overline{A}}
\\
\otimes\hat{\tau}_{B^{\prime}E}^{a,x,y}%
\otimes\op{y}{y}_{Y}, \label{eq:general-tau-ext}%
\end{multline}
where%
\begin{align}
\Tr_{E}(\hat{\tau}_{B^{\prime}E}^{a,x,y})  &  =\mathcal{K}_y(\hat{\tau}_{B}^{a,x}),\\
\sum_{a}\hat{\tau}_{B^{\prime}E}^{a,x,y}  &  =\sum_{a}\hat{\tau}_{B^{\prime}%
E}^{a,x^{\prime},y}\ \ \ \ \forall x,x^{\prime}\in\mathcal{X},\ y\in
\mathcal{Y}.
\end{align}
Let $\hat{\rho}_{B^{\prime}E}^{a,x,y}$ and $\hat{\sigma}_{B^{\prime}E}%
^{a,x,y}$ be \textit{arbitrary} non-signaling extensions of $\mathcal{K}_y(\hat{\rho
}_{B}^{a,x})$ and $\mathcal{K}_y(\hat{\sigma}_{B}^{a,x})$,
satisfying%
\begin{align}
\Tr_{E}(\hat{\rho}_{B^{\prime}E}^{a,x,y})  &  =\mathcal{K}_y(\hat{\rho}_{B}^{a,x}),\\
\sum_{a}\hat{\rho}_{B^{\prime}E}^{a,x,y}  &  =\sum_{a}\hat{\rho}_{B^{\prime}%
E}^{a,x^{\prime},y}\ \ \ \ \forall x,x^{\prime}\in\mathcal{X},\ y\in
\mathcal{Y},\\
\Tr_{E}(\hat{\sigma}_{B^{\prime}E}^{a,x,y})  &  =\mathcal{K}_y(\hat{\sigma}_{B}%
^{a,x}),\\
\sum_{a}\hat{\sigma}_{B^{\prime}E}^{a,x,y}  &  =\sum_{a}\hat{\sigma
}_{B^{\prime}E}^{a,x^{\prime},y}\ \ \ \ \forall x,x^{\prime}\in\mathcal{X}%
,\ y\in\mathcal{Y}.
\end{align}
These lead to the following states:%
\begin{align}
\rho_{X\overline{A}B^{\prime}YE}  &  :=\sum_{a,x,y}p_{X|Y}(x|y)\op{x}{x}_{X}%
\otimes\op{a}{a}_{\overline{A}}\nonumber\\
& \qquad \otimes\hat{\rho}_{B^{\prime}E}^{a,x,y}%
\otimes\op{y}{y}_{Y},\label{eq:gen-rho-ext}\\
\sigma_{X\overline{A}B^{\prime}YE}  &  :=\sum_{a,x,y}p_{X|Y}(x|y)\op{x}{x}_{X}%
\otimes\op{a}{a}_{\overline{A}}\nonumber \\
& \qquad \otimes\hat{\sigma}_{B^{\prime}E}^{a,x,y}%
\otimes\op{y}{y}_{Y}. \label{eq:gen-sig-ext}%
\end{align}
\begin{widetext}
A \textit{particular} non-signaling extension $\tau_{X\overline{A}B^{\prime
}YEE^{\prime}}^{\prime}$ of $\tau_{\overline{A}B^{\prime}XY}$ is given by%
\begin{equation}
\tau_{X\overline{A}B^{\prime}YEE^{\prime}}^{\prime}:=\sum_{a,x,y}p_{X|Y}%
(x|y)\op{x}{x}_{X}\otimes\op{a}{a}_{\overline{A}}\otimes\left(  \lambda\hat{\rho
}_{B^{\prime}E}^{a,x,y}\otimes\op{0}{0}_{E^{\prime}}+(1-\lambda)\hat{\sigma
}_{B^{\prime}E}^{a,x,y}\otimes\op{1}{1}_{E^{\prime}}\right)  \otimes
\op{y}{y}_{Y}. \label{eq:particular-tau-ext}%
\end{equation}
Then consider that%
\begin{align}
\inf_{\text{ext. in \eqref{eq:general-tau-ext}}}I(X\overline{A};B^{\prime
}|EY)_{\tau}  &  \leq I(X\overline{A};B^{\prime}|EYE^{\prime})_{\tau^{\prime}}\\
&  =\lambda I(X\overline{A};B^{\prime}|EY)_{\rho}+(1-\lambda)I(X\overline{A};B^{\prime
}|EY)_{\sigma}.
\end{align}
Since the inequality above holds for all general non-signaling extensions of
the form in \eqref{eq:gen-rho-ext} and \eqref{eq:gen-sig-ext}, we conclude
that%
\begin{equation}
\inf_{\text{ext. in \eqref{eq:general-tau-ext}}}I(X\overline{A};B^{\prime
}|EY)_{\tau}\leq\lambda\inf_{\text{ext. in \eqref{eq:gen-rho-ext}}}I(X\overline
{A};B^{\prime}|EY)_{\rho}+(1-\lambda)\inf_{\text{ext. in
\eqref{eq:gen-sig-ext}}}I(X\overline{A};B^{\prime}|EY)_{\sigma}.
\end{equation}
Now taking a supremum over all 1W-LOCC operations, we find that%
\begin{align}
S(\overline{A};B)_{\hat{\tau}}  &  =\sup_{\left\{  p_{X|Y},\left\{  \mathcal{K}_y\right\}
_{y}\right\}  }\inf_{\text{ext. in \eqref{eq:general-tau-ext}}}I(X\overline
{A};B^{\prime}|EY)_{\tau}\\
&  \leq\sup_{\left\{  p_{X|Y},\left\{  \mathcal{K}_y\right\}  _{y}\right\}  }\left(
\lambda\inf_{\text{ext. in \eqref{eq:gen-rho-ext}}}I(X\overline{A};B^{\prime
}|EY)_{\rho}+(1-\lambda)\inf_{\text{ext. in \eqref{eq:gen-sig-ext}}}I(X\overline
{A};B^{\prime}|EY)_{\sigma}\right) \\
&  \leq\lambda\sup_{\left\{  p_{X|Y},\left\{  \mathcal{K}_y\right\}  _{y}\right\}
}\inf_{\text{ext. in \eqref{eq:gen-rho-ext}}}I(X\overline{A};B^{\prime}|EY)_{\rho
}+(1-\lambda)\sup_{\left\{  p_{X|Y},\left\{  \mathcal{K}_y\right\}  _{y}\right\}
}\inf_{\text{ext. in \eqref{eq:gen-sig-ext}}}I(X\overline{A};B^{\prime}%
|EY)_{\sigma}\\
&  =\lambda S(\overline{A};B)_{\hat{\rho}}+(1-\lambda)S(\overline{A};B)_{\hat{\sigma}}.
\end{align}
\end{widetext}
This concludes the proof.
\end{proof}

We now consider a superadditivity property of assemblages, which holds for
intrinsic steerability. Suppose that Alice has two quantum systems $A_{1}$ and
$A_{2}$ and suppose that Bob has two quantum systems $B_{1}$ and $B_{2}$.
Alice could perform a local measurement on $A_{1}$ chosen according to $x_{1}$
and with output $a_{1}$. Similarly, Alice could perform a local measurement on
$A_{2}$ chosen according to $x_{2}$ and with output $a_{2}$. This process
realizes a joint assemblage $\{\hat{\rho}_{B_{1}B_{2}}^{a_{1},a_{2}%
,x_{1},x_{2}}\}_{a_{1},a_{2},x_{1},x_{2}}$ obeying certain no-signaling
constraints, but it also realizes some local assemblages as well. One would
expect that the steering available in the joint assemblage should never be
smaller than the sum of the steering available in the local assemblages, and
this is what the following proposition addresses:
\begin{proposition}
[Superadditivity]\label{prop:superadd}Let $\{\hat{\rho}_{B_{1}B_{2}}%
^{a_{1},a_{2},x_{1},x_{2}}\}_{a_{1},a_{2},x_{1},x_{2}}$ be an assemblage for
which the following additional no-signaling constraints hold%
\begin{align*}
\sum_{a_{2}}\hat{\rho}_{B_{1}B_{2}}^{a_{1},a_{2},x_{1},x_{2}}  &  =\sum
_{a_{2}}\hat{\rho}_{B_{1}B_{2}}^{a_{1},a_{2},x_{1},x_{2}^{\prime}}%
:=\hat{\theta}_{B_{1}B_{2}}^{a_{1},x_{1}}\ \ \ \ \forall x_{2},x_{2}^{\prime
},\\
\sum_{a_{1}}\hat{\rho}_{B_{1}B_{2}}^{a_{1},a_{2},x_{1},x_{2}}  &  =\sum
_{a_{1}}\hat{\rho}_{B_{1}B_{2}}^{a_{1},a_{2},x_{1}^{\prime},x_{2}}%
:=\hat{\kappa}_{B_{1}B_{2}}^{a_{2},x_{2}}\ \ \ \ \forall x_{1},x_{1}^{\prime},
\end{align*}
Let $\{\operatorname{Tr}_{B_{2}}(\hat{\theta}_{B_{1}B_{2}}^{a_{1},x_{1}%
})\}_{a_{1},x_{1}}$ and $\{\operatorname{Tr}_{B_{1}}(\hat{\kappa}_{B_{1}B_{2}%
}^{a_{2},x_{2}})\}_{a_{2},x_{2}}$ be reduced, local assemblages arising from
the joint assemblage $\{\hat{\rho}_{B_{1}B_{2}}^{a_{1},a_{2},x_{1},x_{2}%
}\}_{a_{1},a_{2},x_{1},x_{2}}$. Then intrinsic steerability is superadditive
in the following sense:%
\begin{equation}
S(\overline{A}_{1}\overline{A}_{2};B_{1}B_{2})_{\hat{\rho}}\geq S(\overline{A}_{1}%
;B_{1})_{\hat{\theta}}+S(\overline{A}_{2};B_{2})_{\hat{\kappa}}.
\end{equation}

\end{proposition}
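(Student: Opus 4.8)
The plan is to follow the template used above for convexity and monotonicity, which in turn mirrors the standard superadditivity argument for squashed entanglement via the chain rule and data processing. First I would lower-bound the left-hand side by restricting the supremum defining $S(\overline{A}_1\overline{A}_2;B_1B_2)_{\hat\rho}$ to \emph{product} 1W-LOCC strategies: Bob applies an instrument $\{\mathcal{K}^{(1)}_{y_1}\}_{y_1}$ to $B_1$ and $\{\mathcal{K}^{(2)}_{y_2}\}_{y_2}$ to $B_2$, and Alice chooses $x_1,x_2$ according to $p_{X_1|Y_1}p_{X_2|Y_2}$. Since this is a legitimate joint strategy, $S(\overline{A}_1\overline{A}_2;B_1B_2)_{\hat\rho}$ is at least the infimum over all joint non-signaling extensions of $I(\overline{A}_1\overline{A}_2;B_1'B_2'|EX_1X_2Y_1Y_2)$, using the alternate form \eqref{eq:alt-IS} (the $X$-systems drop out because $I(X_1X_2;B_1'B_2'|EY_1Y_2)=0$).

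Next, writing $C:=EX_1X_2Y_1Y_2$, I would apply the chain rule to split the output registers and then discard the ``cross'' Bob systems by data processing:
\begin{align}
I(\overline{A}_1\overline{A}_2;B_1'B_2'|C) &= I(\overline{A}_1;B_1'B_2'|C) + I(\overline{A}_2;B_1'B_2'|C\overline{A}_1)\nonumber\\
&\geq I(\overline{A}_1;B_1'|C) + I(\overline{A}_2;B_2'|C\overline{A}_1).
\end{align}
I would then identify the first summand as a conditional mutual information appearing in $S(\overline{A}_1;B_1)_{\hat\theta}$ for the strategy $\{p_{X_1|Y_1},\{\mathcal{K}^{(1)}_{y_1}\}\}$ with Eve holding $(E,X_2,Y_2)$, and the second summand as one appearing in $S(\overline{A}_2;B_2)_{\hat\kappa}$ for $\{p_{X_2|Y_2},\{\mathcal{K}^{(2)}_{y_2}\}\}$ with Eve holding $(E,X_1,Y_1,\overline{A}_1)$. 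Taking the infimum over joint extensions and then suprema over the two local strategies \emph{independently} would give the claimed bound. The marginal checks needed here are routine but use two ingredients: that $\sum_y\mathcal{K}_y$ is trace preserving (so summing out $Y_2$ and tracing $B_2'$ returns $\operatorname{Tr}_{B_2}$, and likewise for the other branch), together with the \emph{reduced} forms of the extra hypotheses, $\sum_{a_2}\hat\rho^{a_1,a_2,x_1,x_2}_{B_1B_2}=\hat\theta^{a_1,x_1}_{B_1B_2}$ and $\sum_{a_1}\hat\rho^{a_1,a_2,x_1,x_2}_{B_1B_2}=\hat\kappa^{a_2,x_2}_{B_1B_2}$, which collapse the marginals to $\mathcal{K}^{(1)}_{y_1}(\operatorname{Tr}_{B_2}\hat\theta^{a_1,x_1}_{B_1B_2})$ and $\mathcal{K}^{(2)}_{y_2}(\operatorname{Tr}_{B_1}\hat\kappa^{a_2,x_2}_{B_1B_2})$, respectively.

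The step I expect to be the main obstacle is verifying the no-signaling condition for the second branch, i.e.\ that $(E,X_1,Y_1,\overline{A}_1)$ is a \emph{valid} non-signaling extension of the reduced local assemblage on $B_2$. This demands that $\sum_{a_2}\hat\rho^{a_1,a_2,x_1,x_2,y_1,y_2}_{B_1'B_2'E}$ be independent of $x_2$ for each fixed $a_1$, that is, the \emph{finer}, bipartite no-signaling must hold at the level of Eve's system, whereas the definition's infimum only guarantees $\sum_{a_1,a_2}(\cdots)$ to be independent of $(x_1,x_2)$. For the first branch the analogous check involves only $\sum_{a_1,a_2}$, since $\overline{A}_2$ has been traced out, so the joint non-signaling constraint alone suffices; the asymmetry is intrinsic to the chain rule, which forces the second term to condition on the other party's output register $\overline{A}_1$. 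This is exactly where the additional no-signaling hypotheses of the proposition are indispensable.

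To close this gap I would argue that the infimum may be taken over extensions respecting the full bipartite no-signaling structure of the joint assemblage, and not merely the coarse constraint. This restriction is natural and lossless for the physically relevant extensions: for any purification-type extension, in which $\hat\rho^{a_1,a_2,x_1,x_2}_{B_1B_2E}=\operatorname{Tr}_{A_1A_2}[(M^{a_1}_{x_1}\otimes N^{a_2}_{x_2}\otimes I)\psi]$, one has $\sum_{a_2}\hat\rho^{a_1,a_2,x_1,x_2}_{B_1B_2E}=\operatorname{Tr}_{A_1A_2}[(M^{a_1}_{x_1}\otimes I\otimes I)\psi]$, which is manifestly independent of $x_2$ because $\sum_{a_2}N^{a_2}_{x_2}=I$. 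Once the finer no-signaling is available at the extension level, both conditioning systems qualify as admissible non-signaling extensions of the respective reduced assemblages, each summand is bounded below by the corresponding local infimum, and the independent suprema over the two local strategies yield $S(\overline{A}_1\overline{A}_2;B_1B_2)_{\hat\rho}\geq S(\overline{A}_1;B_1)_{\hat\theta}+S(\overline{A}_2;B_2)_{\hat\kappa}$.
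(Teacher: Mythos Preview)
Your overall architecture---restrict to product 1W-LOCC, then chain rule, then identify each summand as an admissible extension of a reduced assemblage---matches the paper. The gap you yourself flagged, however, is real, and your proposed fix does not close it.

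The issue is the second summand. By working in the alternate form $I(\overline{A};B'|EXY)$ and splitting on $\overline{A}_1$ versus $\overline{A}_2$, your second term conditions on $\overline{A}_1$, so your candidate extension for $\hat\kappa$ is $(E,X_1,Y_1,\overline{A}_1)$. The no-signaling check then demands that $\sum_{a_2}\hat\omega^{a_1,a_2,x_1,x_2,y_1,y_2}_{B'_2E}$ be independent of $x_2$ \emph{for each fixed $a_1$}, which is strictly stronger than the joint constraint $\sum_{a_1,a_2}(\cdots)$ guaranteed by Definition~\ref{def:steering-CMI}. Your remedy---``restrict the infimum to extensions satisfying the finer bipartite no-signaling''---goes the wrong way: restricting the infimum can only make it \emph{larger}, so you would prove $\sup_{\text{product}}\inf_{\text{restricted}}I\geq S(\overline{A}_1;B_1)_{\hat\theta}+S(\overline{A}_2;B_2)_{\hat\kappa}$, which does not lower-bound $S(\overline{A}_1\overline{A}_2;B_1B_2)_{\hat\rho}=\sup\inf_{\text{all}}I$. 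The purification example only exhibits \emph{some} extensions with the finer structure; it does not show that the optimal one has it.

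The paper avoids this by a different choice of conditioning register. It works in the original form $I(X\overline{A};B'|EY)$ and splits first on the \emph{Bob} outputs $B_1',B_2'$, obtaining
\[
I(X_1\overline{A}_1;B_1'|EY_1Y_2)_\omega+I(X_2\overline{A}_2;B_2'|EB_1'Y_1Y_2)_\omega
\]
after dropping cross terms. The extension for $\hat\kappa$ is then $EB_1'Y_1$: one \emph{traces out} $\overline{A}_1$ and $X_1$ and \emph{keeps} $B_1'$. Now the no-signaling check for the $\hat\kappa$ extension requires only that $\sum_{a_2}\bigl(\sum_{a_1,x_1,y_1}p_{X_1|Y_1}(x_1|y_1)\hat\omega^{a_1,x_1,a_2,x_2,y_1,y_2}_{B_1'B_2'E}\otimes[y_1]\bigr)$ be independent of $x_2$; since $a_1$ is already summed, this reduces to the joint constraint $\sum_{a_1,a_2}(\cdots)$ and holds for an \emph{arbitrary} non-signaling extension $\hat\omega$. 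In short: keep $B_1'$, not $\overline{A}_1$, in the second conditioning system, and the gap disappears without any restriction on the infimum.
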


\begin{proof}
The core idea behind our proof of Proposition~\ref{prop:superadd} is to
exploit the chain rule for conditional mutual information. First, pick a
1W-LOCC strategy where Alice's inputs $X_{1}$ and $X_{2}$ depend only on
measurement outcomes $Y_{1}$ and $Y_{2}$ of $B_{1}$ and $B_{2}$, respectively.
The chain rule and non-negativity of conditional mutual information imply
that
\begin{multline}
I(X_{1}X_{2}\overline{A}_{1}\overline{A}_{2};B_{1}B_{2}|EY_{1}Y_{2})_{\rho}\geq\\
I(X_{1}\overline{A}_{1};B_{1}|EY_{1}Y_{2})_{\rho}
+I(X_{2}\overline{A}_{2};B_{2}|EB_{1}Y_{1}Y_{2})_{\rho}, \label{SP1}%
\end{multline}
 where system $E$ denotes a non-signaling extension system. The idea is then to
 take $EY_{2}$ as a non-signaling extension for $X_{1}\overline{A}_{1}B_{1}Y_{1}$,
 systems $EB_{1}Y_{1}$ as a non-signaling extension for $X_{2}\overline{A}_{2}%
 B_{2}Y_{2}$, and work from there.
%Here, we prove that intrinsic
%steerability is superadditive in general for joint assemblages of the form
%given in the statement of Proposition~\ref{prop:superadd}. To this end,

We now give  a detailed proof. Suppose that we apply to the assemblage $\{\hat{\rho}_{B_{1}B_{2}}%
^{a_{1},a_{2},x_{1},x_{2}}\}_{a_{1},a_{2},x_{1},x_{2}}$ a general 1W-LOCC
operation \thinspace$\{p_{X_{1}X_{2}|Y},\{\mathcal{K}_y\}_{y}\}$, resulting in the
following classical--quantum state:%
\begin{widetext}
\begin{equation}
\rho_{\overline{A}_{1}X_{1}\overline{A}_{2}X_{2}YB_{1}^{\prime}B_{2}^{\prime}}%
:=\sum_{a_{1},x_{1},a_{2},x_{2},y}p_{X_{1}X_{2}|Y}(x_{1},x_{2}|y)[a_{1}%
]\otimes\lbrack x_{1}]\otimes\lbrack a_{2}]\otimes\lbrack x_{2}]\otimes\lbrack
y]\otimes \mathcal{K}_y(\hat{\rho}_{B_{1}B_{2}}^{a_{1},x_{1},a_{2},x_{2}}).
\end{equation}

Let $\hat{\rho}_{B_{1}^{\prime}B_{2}^{\prime}E}^{a_{1},x_{1},a_{2},x_{2},y}$
be a non-signaling extension of $\mathcal{K}_y(\rho_{B_{1}B_{2}}^{a_{1},x_{1}%
,a_{2},x_{2}})$ and consider the following extension of the above
state:%
\begin{equation}
\rho_{\overline{A}_{1}X_{1}\overline{A}_{2}X_{2}YB_{1}^{\prime}B_{2}^{\prime}E}%
:=\sum_{a_{1},x_{1},a_{2},x_{2},y}p_{X_{1}X_{2}|Y}(x_{1},x_{2}|y)[a_{1}%
]\otimes\lbrack x_{1}]\otimes\lbrack a_{2}]\otimes\lbrack x_{2}]\otimes\lbrack
y]\otimes\hat{\rho}_{B_{1}^{\prime}B_{2}^{\prime}E}^{a_{1},x_{1},a_{2}%
,x_{2},y}.
\end{equation}
A particular \textquotedblleft product\textquotedblright\ 1W-LOCC operation
has the form $\{p_{X_{1}|Y_{1}}p_{X_{2}|Y_{2}},\{\mathcal{L}_{y_{1}}\otimes \mathcal{M}_{y_{2}%
}\}_{y_{1},y_{2}}\}$ and results in the following state:%
\begin{multline}
\omega_{\overline{A}_{1}X_{1}\overline{A}_{2}X_{2}Y_{1}Y_{2}B_{1}^{\prime}B_{2}^{\prime
}}:=\sum_{a_{1},x_{1},a_{2},x_{2},y}p_{X_{1}|Y_{1}}(x_{1}|y_{1})p_{X_{2}%
|Y_{2}}(x_{2}|y_{2})[a_{1}]\otimes\lbrack x_{1}]\otimes\lbrack a_{2}%
]\otimes\lbrack x_{2}]\otimes\lbrack y_{1}]\otimes\lbrack y_{2}]\\
\otimes\left(  \mathcal{L}_{y_{1}}\otimes \mathcal{M}_{y_{2}}\right)  (\hat{\rho}_{B_{1}B_{2}%
}^{a_{1},x_{1},a_{2},x_{2}}).
\end{multline}
Let $\hat{\omega}_{B_{1}^{\prime}B_{2}^{\prime}E}^{a_{1},x_{1},a_{2}%
,x_{2},y_{1},y_{2}}$ be a non-signaling extension of $\left(  \mathcal{L}_{y_{1}}\otimes
\mathcal{M}_{y_{2}}\right) ( \hat{\rho}_{B_{1}B_{2}}^{a_{1},x_{1},a_{2},x_{2}})$, and define the following state:
\begin{multline}
\omega_{\overline{A}_{1}X_{1}\overline{A}_{2}X_{2}Y_{1}Y_{2}B_{1}^{\prime}B_{2}^{\prime
}E}:=\sum_{a_{1},x_{1},a_{2},x_{2},y}p_{X_{1}|Y_{1}}(x_{1}|y_{1}%
)p_{X_{2}|Y_{2}}(x_{2}|y_{2})[a_{1}]\otimes\lbrack x_{1}]\otimes\lbrack
a_{2}]\otimes\lbrack x_{2}]\otimes\lbrack y_{1}]\otimes\lbrack y_{2}%
]\label{eq:arbitrary-omega-super}\\
\otimes\hat{\omega}_{B_{1}^{\prime}B_{2}^{\prime}E}^{a_{1},x_{1},a_{2}%
,x_{2},y_{1},y_{2}}.
\end{multline}
Let $\hat{\theta}_{B_{1}^{\prime}F}^{a_{1},x_{1},y_{1}}$ be a non-signaling
extension of $\mathcal{L}_{y_{1}}(\hat{\theta}_{B_{1}}^{a_{1},x_{1}})$ and
let $\hat{\kappa}_{B_{2}^{\prime}G}^{a_{2},x_{2},y_{2}}$ be a non-signaling
extension of $\mathcal{M}_{y_{2}}(\hat{\kappa}_{B_{2}}^{a_{2},x_{2}})$,
leading to the following classical--quantum states:%
\begin{align}
\theta_{X_{1}\overline{A}_{1}B_{1}^{\prime}FY_{1}}  &  :=\sum_{x_{1},a_{1}}%
p_{X_{1}|Y_{1}}(x_{1}|y_{1})[x_{1}]\otimes\lbrack a_{1}]\otimes\hat{\theta
}_{B_{1}^{\prime}F}^{a_{1},x_{1},y_{1}}\otimes\lbrack y_{1}%
],\label{eq:theta-ext-super}\\
\kappa_{X_{2}\overline{A}_{2}B_{2}^{\prime}GY_{2}}  &  :=\sum_{x_{2},a_{2}}%
p_{X_{2}|Y_{2}}(x_{2}|y_{2})[x_{2}]\otimes\lbrack a_{2}]\otimes\hat{\kappa
}_{B_{2}^{\prime}G}^{a_{2},x_{2},y_{2}}\otimes\lbrack y_{2}].
\label{eq:kappa-ext-super}%
\end{align}
\end{widetext}
Consider that%
\begin{align}
&  I(\overline{A}_{1}X_{1}\overline{A}_{2}X_{2};B_{1}^{\prime}B_{2}^{\prime}|EY_{1}%
Y_{2})_{\omega}\nonumber\\
&  =I(\overline{A}_{1}X_{1}\overline{A}_{2}X_{2};B_{1}^{\prime}|EY_{1}Y_{2})_{\omega
}\nonumber
\\&\quad+I(\overline{A}_{1}X_{1}\overline{A}_{2}X_{2};B_{2}^{\prime}|EB_{1}^{\prime}Y_{1}%
Y_{2})_{\omega}\\
&  =I(\overline{A}_{1}X_{1};B_{1}^{\prime}|EY_{1}Y_{2})_{\omega}+I(\overline{A}_{2}%
X_{2};B_{1}^{\prime}|EY_{1}Y_{2}\overline{A}_{1}X_{1})_{\omega}\nonumber\\
& \quad+(\overline{A}_{2}X_{2};B_{2}^{\prime}|EB_{1}^{\prime}Y_{1}Y_{2}%
)_{\omega}\nonumber\\ &\quad+I(\overline{A}_{1}X_{1};B_{2}^{\prime}|EB_{1}^{\prime}Y_{1}Y_{2}\overline
{A}_{2}X_{2})_{\omega}\\
&  \geq I(\overline{A}_{1}X_{1};B_{1}^{\prime}|EY_{1}Y_{2})_{\omega}+I(\overline{A}%
_{2}X_{2};B_{2}^{\prime}|EB_{1}^{\prime}Y_{1}Y_{2})_{\omega}\\
&  \geq\inf_{\text{ext. in \eqref{eq:theta-ext-super}}}I(\overline{A}_{1}%
X_{1};B_{1}^{\prime}|FY_{1})_{\theta}\nonumber\\&\quad+\inf_{\text{ext. in
\eqref{eq:kappa-ext-super}}}I(\overline{A}_{2}X_{2};B_{2}^{\prime}|GY_{2})_{\kappa
}.
\end{align}
The first two equalities follow from the chain rule for conditional mutual
information. The first inequality follows by dropping two of the terms and
from the fact that the conditional mutual information is non-negative. To see
the last inequality, consider that the state $\sum_{a_{2},x_{2},y_{2}}%
\hat{\omega}_{B_{1}^{\prime}E}^{a_{1},x_{1},a_{2},x_{2},y_{1},y_{2}}%
\otimes\lbrack y_{2}]$ is a particular non-signaling extension of $\mathcal{L}_{y_{1}%
}(\hat{\theta}_{B_{1}}^{a_{1},x_{1}})$ and the state
$\sum_{a_{1},x_{1},y_{1}}\hat{\omega}_{B_{1}^{\prime}B_{2}^{\prime}E}%
^{a_{1},x_{1},a_{2},x_{2},y_{1},y_{2}}\otimes\lbrack y_{1}]$ is a particular
non-signaling extension of $\mathcal{M}_{y_{2}}(\hat{\kappa}_{B_{2}}^{a_{2},x_{2}%
})$, such that an infimization over arbitrary respective
non-signaling extensions $\hat{\theta}_{B_{1}^{\prime}F}^{a_{1},x_{1},y_{1}}$
and $\hat{\kappa}_{B_{2}^{\prime}G}^{a_{2},x_{2},y_{2}}$ can never lead to
higher values of the conditional mutual informations. Since we have shown the
inequality above for an arbitrary non-signaling extension $\hat{\omega}%
_{B_{1}^{\prime}B_{2}^{\prime}E}^{a_{1},x_{1},a_{2},x_{2},y_{1},y_{2}}$, we
can conclude that%
\begin{multline}
\inf_{\text{ext. in \eqref{eq:arbitrary-omega-super}}}I(\overline{A}_{1}X_{1}%
\overline{A}_{2}X_{2};B_{1}^{\prime}B_{2}^{\prime}|EY_{1}Y_{2})_{\omega}\\
\geq
\inf_{\text{ext. in \eqref{eq:theta-ext-super}}}I(\overline{A}_{1}X_{1}%
;B_{1}^{\prime}|FY_{1})_{\theta}\\
+\inf_{\text{ext. in
\eqref{eq:kappa-ext-super}}}I(\overline{A}_{2}X_{2};B_{2}^{\prime}|GY_{2})_{\kappa
},
\end{multline}
which in turn implies that%
\begin{widetext}
\begin{multline}
\sup_{\{p_{X_{1}|Y_{1}}p_{X_{2}|Y_{2}},\{\mathcal{L}_{y_{1}}\otimes \mathcal{M}_{y_{2}}%
\}_{y_{1},y_{2}}\}}\inf_{\text{ext. in \eqref{eq:arbitrary-omega-super}}%
}I(\overline{A}_{1}X_{1}\overline{A}_{2}X_{2};B_{1}^{\prime}B_{2}^{\prime}|EY_{1}%
Y_{2})_{\omega}\\
\geq\inf_{\text{ext. in \eqref{eq:theta-ext-super}}}I(\overline{A}_{1}X_{1}%
;B_{1}^{\prime}|FY_{1})_{\theta}+\inf_{\text{ext. in
\eqref{eq:kappa-ext-super}}}I(\overline{A}_{2}X_{2};B_{2}^{\prime}|GY_{2})_{\kappa
}.
\end{multline}
The reduced 1W-LOCC\ operations $\{p_{X_{1}|Y_{1}},\{\mathcal{L}_{y_{1}}\}_{y_{1}}\}$
and $\{p_{X_{2}|Y_{2}},\{\mathcal{M}_{y_{2}}\}_{y_{2}}\}$ are arbitrary, and so we can
conclude that%
\begin{align}
&  \!\!\!\!\!\!\!\!\!\!\sup_{\{p_{X_{1}|Y_{1}}p_{X_{2}|Y_{2}},\{\mathcal{L}_{y_{1}}\otimes \mathcal{M}_{y_{2}%
}\}_{y_{1},y_{2}}\}}\inf_{\text{ext. in \eqref{eq:arbitrary-omega-super}}%
}I(\overline{A}_{1}X_{1}\overline{A}_{2}X_{2};B_{1}^{\prime}B_{2}^{\prime}|EY_{1}%
Y_{2})_{\omega}\nonumber\\
&  \geq\sup_{\{p_{X_{1}|Y_{1}},\{\mathcal{L}_{y_{1}}\}_{y_{1}}\}}\inf_{\text{ext. in
\eqref{eq:theta-ext-super}}}I(\overline{A}_{1}X_{1};B_{1}^{\prime}|FY_{1})_{\theta
}+\sup_{\{p_{X_{2}|Y_{2}},\{\mathcal{M}_{y_{2}}\}_{y_{2}}\}}\inf_{\text{ext. in
\eqref{eq:kappa-ext-super}}}I(\overline{A}_{2}X_{2};B_{2}^{\prime}|GY_{2})_{\kappa
}\\
&  =S(\overline{A}_{1};B_{1})_{\hat{\theta}}+S(\overline{A}_{2};B_{2})_{\hat{\kappa}}.
\end{align}
\end{widetext}
Finally, since the 1W-LOCC\ operation $\{p_{X_{1}|Y_{1}}p_{X_{2}|Y_{2}%
},\{\mathcal{L}_{y_{1}}\otimes \mathcal{M}_{y_{2}}\}_{y_{1},y_{2}}\}$ has a particular product
form, we could never achieve a lower value of the quantity on the LHS\ by
allowing for an arbitrary 1W-LOCC\ operation, implying the desired
superadditivity:%
\begin{equation}
S(\overline{A}_{1}\overline{A}_{2};B_{1}B_{2})_{\hat{\rho}}\geq S(\overline{A}_{1}%
;B_{1})_{\hat{\theta}}+S(\overline{A}_{2};B_{2})_{\hat{\kappa}}.
\end{equation}
This concludes the proof.
\end{proof}

\section{Restricted Instrinsic Steerability}

\label{sec:rest}

As stated above, we also consider
a steering quantifier relevant in the context of restricted 1W-LOCC. Here we give a proof of Theorem \ref{thm:RIS-monotone} and proofs of various other properties of restricted intrinsic steerability.

 \begin{proposition}
 The restricted intrinsic steerability vanishes for
 an assemblage having a local-hidden state model.
 \end{proposition}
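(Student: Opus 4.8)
The plan is to mimic the proof of the analogous statement for intrinsic steerability, exhibiting a single explicit non-signaling extension whose conditional mutual information vanishes; since $S^{R}(\overline{A};B)_{\hat{\rho}}$ involves an infimum over such extensions and the conditional mutual information is non-negative by strong subadditivity, this will force the quantity to zero for every fixed $p_{X}$, so that the supremum over $p_{X}$ is likewise zero.

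Concretely, I would suppose the assemblage has an LHS model, namely $\hat{\rho}_{B}^{a,x}=\sum_{\lambda}p_{\Lambda}(\lambda)\,p_{\overline{A}|X\Lambda}(a|x,\lambda)\,\rho_{B}^{\lambda}$, and take the eavesdropper's system $E$ to carry a copy of the hidden variable $\Lambda$. This suggests the candidate extension
\[
\hat{\rho}_{BE}^{a,x}:=\sum_{\lambda}p_{\Lambda}(\lambda)\,p_{\overline{A}|X\Lambda}(a|x,\lambda)\,\rho_{B}^{\lambda}\otimes\op{\lambda}{\lambda}_{E}.
\]
Tracing out $E$ recovers $\hat{\rho}_{B}^{a,x}$ by construction, so the marginal condition $\operatorname{Tr}_{E}(\hat{\rho}_{BE}^{a,x})=\hat{\rho}_{B}^{a,x}$ holds automatically.

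The one verification that genuinely requires attention is the no-signaling constraint \eqref{eq:no-sig-extension-RIS}. Summing over $a$ collapses the conditional distribution $p_{\overline{A}|X\Lambda}(a|x,\lambda)$, giving
\[
\sum_{a}\hat{\rho}_{BE}^{a,x}=\sum_{\lambda}p_{\Lambda}(\lambda)\,\rho_{B}^{\lambda}\otimes\op{\lambda}{\lambda}_{E},
\]
which is manifestly independent of $x$; hence the constraint is satisfied. Substituting this extension into the classical--quantum state of Definition~\ref{def:reducedsteering-CMI} and conditioning on $E=\lambda$ leaves Bob's system in the fixed state $\rho_{B}^{\lambda}$, in product with the classical registers $X\overline{A}$. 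Therefore $I(X\overline{A};B|E)_{\rho}=0$ for this particular extension, so the infimum over all non-signaling extensions is zero, and since the argument does not depend on $p_{X}$, we obtain $S^{R}(\overline{A};B)_{\hat{\rho}}=0$.

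I do not expect a genuine obstacle, as this is precisely the restricted-setting specialization of the extension already used to show that intrinsic steerability vanishes on LHS assemblages, simplified by the absence of any instrument $\{\mathcal{K}_{y}\}_{y}$ or message register $Y$. The only delicate point is confirming that embedding $\Lambda$ into $E$ respects no-signaling, which, as noted, follows immediately once the sum over the outcomes $a$ is carried out.
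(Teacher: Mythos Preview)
Your proposal is correct and follows essentially the same approach as the paper: both exhibit the classical extension in which $E$ carries a copy of the hidden variable $\Lambda$, observe that conditioned on $\lambda$ the systems $X\overline{A}$ and $B$ are in product, and conclude that the conditional mutual information vanishes for every $p_{X}$. Your version is slightly more explicit in checking the no-signaling constraint on the extension, but the argument is the same.
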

 \begin{proof}
 To prove this, consider the following non-signaling, classical extension of an
unsteerable assemblage:
\begin{multline}
\rho_{X\overline{A}BE}:=\sum_{a,x}p_{X}(x)\op{x}{ x}_{X}\otimes p_{\overline{A}%
|X\Lambda}(a|x,\lambda)\op{ a}{ a}_{\overline{A}}\\
\otimes\hat{\rho}_{B}^{\lambda}\otimes p_{\Lambda}(\lambda
)\op{\lambda}{\lambda}_{E}.\label{eq:classical-extension}%
\end{multline}
 Then 
$I(X\overline{A};B|E)_{\rho}=\sum_{\lambda}p_{\Lambda}(\lambda)I(X\overline
{A};B)_{\rho^{\lambda}}$, where 
\begin{equation}
\rho_{X\overline{A}B}^{\lambda}=\sum_{a,x}p_{X}(x)\op{ x}{ x}_{X}\otimes p_{\overline
{A}|X\Lambda}(a|x,\lambda)\op{ a}{ a}_{\overline{A}}\otimes\rho_{B}^{\lambda
},
\end{equation}
and we have used the fact that the conditional mutual information can be
written as a convex combination of mutual informations for a classical
conditioning system. By inspection, we see that systems $X\overline{A}$ and $B$ are
independent when given the shared variable $\Lambda=\lambda$. By choosing
system $E$ to contain the shared random variable $\Lambda$, the result is that
the systems form a Markov chain $X\overline{A}-E-B$, so that the conditional mutual
information $I(X\overline{A};B|E)_{\rho}$ is equal to zero. Since this argument
holds for any probability distribution $p_{X}$, we conclude that $S^{R}%
(\overline{A};B)_{\hat{\rho}}=0$.
\end{proof}

\begin{proposition}
[Restricted 1W-LOCC monotone]Let $\{\hat{\rho}_{B}^{a,x}\}_{a,x}$ be an
assemblage, and let 
\begin{equation}
\{p_{X|X_{f}},p_{\overline{A}_{f}|\overline{A}XX_{f}Z}%
,\{\mathcal{K}_{z}\}_{z}\}
\end{equation}
 denote a restricted 1W-LOCC\ operation that results in an
assemblage $\{\hat{\sigma}_{B^{\prime}}^{a_{f},x_{f}}\}_{a_{f},x_{f}}$,
defined as%
\begin{multline}
\hat{\sigma}_{B^{\prime}}^{a_{f},x_{f}}:=\\
\sum_{a,x,z}p_{X|X_{f}}%
(x|x_{f})p_{\overline{A}_{f}|\overline{A}XX_{f}Z}(a_{f}|a,x,x_{f},z)\mathcal{K}_z(\hat{\rho}%
_{B}^{a,x}).
\end{multline}
Then%
\begin{equation}
S^{R}(\overline{A};B)_{\hat{\rho}}\geq S^{R}(\overline{A}_{f};B^{\prime})_{\hat{\sigma}%
}.
\end{equation}
\end{proposition}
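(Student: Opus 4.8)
The plan is to mirror the structure of the preceding 1W-LOCC monotonicity proof, but to exploit the simpler structure of restricted operations, in which Alice's input $X$ depends only on her own choice $X_f$ (through $p_{X|X_f}$) and not on Bob's communication. First I would fix a distribution $p_{X_f}$ achieving a value close to the supremum in $S^{R}(\overline{A}_f;B')_{\hat{\sigma}}$, and use it together with the channel $p_{X|X_f}$ to induce an input distribution $p_X(x) := \sum_{x_f} p_{X_f}(x_f) p_{X|X_f}(x|x_f)$ for the original assemblage; this $p_X$ is an admissible choice in the supremization defining $S^{R}(\overline{A};B)_{\hat{\rho}}$, which is what ultimately lets the original quantity dominate.

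Next, starting from an arbitrary non-signaling extension $\hat{\rho}_{BE}^{a,x}$ of the original assemblage, I would build a particular non-signaling extension of the final assemblage whose extension system is $EZ$, where $Z$ is the classical register recording Bob's instrument outcome. Concretely, I set $\hat{\omega}_{B'E}^{a,x,z} := (\mathcal{K}_z \otimes \operatorname{id}_E)(\hat{\rho}_{BE}^{a,x})$ and define the candidate extension by applying the post-processing channels $p_{X|X_f}$ and $p_{\overline{A}_f|\overline{A}XX_fZ}$ and summing over $a,x,z$, recording $z$ in the register $Z$. The crucial check — and the step I expect to be the main obstacle — is verifying that this candidate satisfies the no-signaling constraint, i.e.\ that $\sum_{a_f}\hat{\sigma}_{B'(EZ)}^{a_f,x_f}$ is independent of $x_f$. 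This follows from three facts used in sequence: $\sum_{a_f} p_{\overline{A}_f|\overline{A}XX_fZ} = 1$ collapses the $a_f$ sum, the no-signaling property of the original extension makes $\sum_a \hat{\omega}_{B'E}^{a,x,z} = \mathcal{K}_z(\rho_{BE})$ independent of $x$ (where $\rho_{BE} := \sum_a \hat{\rho}_{BE}^{a,x}$), and $\sum_x p_{X|X_f}(x|x_f) = 1$ then removes the residual $x_f$-dependence.

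Finally I would establish the inequality chain on conditional mutual informations. Embedding everything in a single classical--quantum state $\zeta$ carrying $X_f,X,\overline{A}_f,\overline{A},B',E,Z$, I would (i) decompose over the classical register $Z$ and apply local data processing, since for each fixed $z$ the output $\overline{A}_f$ is a local stochastic map of $\overline{A},X,X_f$, giving $I(X_f\overline{A}_f;B'|EZ)_\zeta \le I(X_fX\overline{A};B'|EZ)_\zeta$; (ii) apply the chain rule and note that $B'E$ is independent of $X_f$ given $X\overline{A}Z$, because $\hat{\omega}_{B'E}^{a,x,z}$ does not depend on $x_f$, killing one term to obtain $I(X_f\overline{A}_f;B'|EZ)_\zeta \le I(X\overline{A};B'|EZ)_\zeta$; and (iii) recognize the marginal on $X\overline{A}B'EZ$ as the image of $\rho_{X\overline{A}BE}$ under the measurement channel mapping $B$ to $B'Z$, so that moving $Z$ from the conditioning into the second argument (valid since $I(X\overline{A};Z|E)_\zeta \ge 0$) and then invoking data processing for this channel gives $I(X\overline{A};B'|EZ)_\zeta \le I(X\overline{A};B'Z|E)_\zeta \le I(X\overline{A};B|E)_\rho$. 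Since $EZ$ is one admissible extension for $\hat{\sigma}$, taking the infimum over extensions $E$ of the original, then using that $p_X$ lies in the supremization for $S^{R}(\overline{A};B)_{\hat{\rho}}$, and finally taking the supremum over $p_{X_f}$, yields $S^{R}(\overline{A}_f;B')_{\hat{\sigma}} \le S^{R}(\overline{A};B)_{\hat{\rho}}$.
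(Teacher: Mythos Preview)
Your proposal is correct and follows essentially the same route as the paper's proof: construct a particular non-signaling extension of the final assemblage with extension register $EZ$ by pushing an arbitrary extension $\hat{\rho}_{BE}^{a,x}$ of the original assemblage through $\mathcal{K}_z\otimes\operatorname{id}_E$, then run the same chain of inequalities (data processing to replace $\overline{A}_f$ by $X_fX\overline{A}$, chain rule plus the Markov structure $X_f\!\perp\!B'E\,|\,X\overline{A}Z$ to drop $X_f$, move $Z$ across via $I(X\overline{A};Z|E)\ge 0$, and data processing for the instrument $B\to B'Z$), and finish by optimizing. Your explicit verification of the no-signaling constraint for the constructed extension is a detail the paper leaves implicit, but otherwise the two arguments coincide.
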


\begin{proof}
Taking a distribution $p_{X_{f}}$ over the black-box inputs of the final
assemblage, we can embed the state of the final assemblage into the following
classical--quantum state:%
\begin{equation}
\sigma_{X_{f}\overline{A}_{f}B^{\prime}}:=\sum_{x_{f},a_{f}}p_{X_{f}}(x_{f}%
)[x_{f}]\otimes\lbrack a_{f}]\otimes\hat{\sigma}_{B^{\prime}}^{a_{f},x_{f}},
\label{eq:cq-final-state-RIS}%
\end{equation}
which is a marginal of the following state:%
\begin{multline}
\sigma_{X_{f}X\overline{A}_{f}\overline{A}ZB^{\prime}}:=\sum_{x_{f},a_{f},a,x,z}%
p_{X_{f}}(x_{f})[x_{f}]\\
\otimes p_{X|X_{f}}(x|x_{f})[x]\otimes p_{\overline{A}%
_{f}|\overline{A}XX_{f}Z}(a_{f}|a,x,x_{f},z)[a_{f}]
\\\otimes\lbrack a]\otimes\lbrack
z]\otimes \mathcal{K}_z(\hat{\rho}_{B}^{a,x}).
\end{multline}
An \textit{arbitrary} non-signaling extension of the state in
\eqref{eq:cq-final-state-RIS} %of the form in \eqref{eq:rext-form} 
is as follows:%
\begin{equation}
\sigma_{X_{f}\overline{A}_{f}B^{\prime}E}:=\sum_{x_{f},a_{f}}p_{X_{f}}(x_{f}%
)[x_{f}]\otimes\lbrack a_{f}]\otimes\hat{\sigma}_{B^{\prime}E}^{a_{f},x_{f}},
\label{eq:gen-sig-ext-restr}%
\end{equation}
where%
\begin{align}
\operatorname{Tr}_{E}(\hat{\sigma}_{B^{\prime}E}^{a_{f},x_{f}})  &
=\hat{\sigma}_{B^{\prime}}^{a_{f},x_{f}},\\
\sum_{a_{f}}\hat{\sigma}_{B^{\prime}E}^{a_{f},x_{f}}  &  =\sum_{a_{f}}%
\hat{\sigma}_{B^{\prime}E}^{a_{f},x_{f}}\ \ \ \ \forall x_{f},x_{f}^{\prime
}\in\mathcal{X}_{f}.
\end{align}
\begin{widetext}
A \textit{particular} non-signaling extension of the state in
\eqref{eq:cq-final-state-RIS} %of the form in \eqref{eq:rext-form} 
is as follows:%
\begin{equation}
\omega_{X_{f}\overline{A}_{f}B^{\prime}EZ}:=\sum_{x_{f},a_{f}}p_{X_{f}}%
(x_{f})[x_{f}]\otimes\lbrack a_{f}]\otimes\sum_{x_{f},a_{f},a,x,z}p_{X|X_{f}%
}(x|x_{f})p_{\overline{A}_{f}|\overline{A}XX_{f}Z}(a_{f}|a,x,x_{f},z)\mathcal{K}_z(\hat{\rho
}_{BE}^{a,x})\otimes\lbrack z], \label{eq:RIS-mono-part-ext}%
\end{equation}
where%
\begin{equation}
\operatorname{Tr}_{E}(\hat{\rho}_{BE}^{a,x})    =\hat{\rho}_{B}^{a,x},\qquad \qquad 
\sum_{a}\hat{\rho}_{BE}^{a,x}    =\sum_{a}\hat{\rho}_{BE}^{a,x^{\prime}%
}\ \ \ \ \forall x,x^{\prime}\in\mathcal{X}.
\end{equation}
The state $\omega_{X_{f}\overline{A}_{f}B^{\prime}E}$ is a marginal of the
following state:%
\begin{equation}
\omega_{X_{f}X\overline{A}_{f}\overline{A}B^{\prime}EZ}:=\sum_{x_{f},a_{f}%
,a,x,z}p_{X_{f}}(x_{f})[x_{f}]\otimes p_{X|X_{f}}(x|x_{f})[x]\otimes
p_{\overline{A}_{f}|\overline{A}XX_{f}Z}(a_{f}|a,x,x_{f},z)[a_{f}]\otimes\lbrack
a]\otimes \mathcal{K}_z(\hat{\rho}_{BE}^{a,x})\otimes\lbrack z].
\label{eq:big-omega-state-restr}%
\end{equation}
\end{widetext}
Let $\rho_{X\overline{A}BE}$ be the following state:%
\begin{multline}
\rho_{X\overline{A}BE}:=\sum_{x_{f},a,x}p_{X_{f}}(x_{f})[x_{f}]\otimes p_{X|X_{f}%
}(x|x_{f})[x]\\
\otimes\lbrack a]\otimes\hat{\rho}_{BE}^{a,x}.
\label{eq:init-state-ext-restr}%
\end{multline}
Consider that%
\begin{align}
&\inf_{\text{ext. in \eqref{eq:gen-sig-ext-restr}}}I(X_{f}\overline{A}_{f}%
;B^{\prime}|E)_{\sigma}  \nonumber \\
&  \leq I(X_{f}\overline{A}_{f};B^{\prime}|EZ)_{\omega}\\
&  \leq I(X_{f}\overline{A}_{f}X\overline{A};B^{\prime}|EZ)_{\omega}\\
&  =I(X\overline{A};B^{\prime}|EZ)_{\omega}+I(X_{f};B^{\prime}|EZX\overline{A})_{\omega
}\nonumber \\
& \qquad +I(\overline{A}_{f};B^{\prime}|EZX_{f}X\overline{A})_{\omega}\\
&  =I(X\overline{A};B^{\prime}|EZ)_{\omega}\\
&  \leq I(X\overline{A};B^{\prime}Z|E)_{\omega}\\
&  \leq I(X\overline{A};B|E)_{\rho}.
\end{align}
The first inequality follows because the non-signaling extension in
\eqref{eq:RIS-mono-part-ext} is a particular kind of non-signaling extension.
The second inequality follows from data processing. The first equality follows
from the chain rule for conditional mutual information. The second equality
follows from various Markov-chain structures when inspecting
\eqref{eq:big-omega-state-restr}:\ $X_{f}$ is independent of $B^{\prime}E$
when given $ZX\overline{A}$, and $\overline{A}_{f}$ is independent of $B^{\prime}E$ when
given $ZX_{f}X\overline{A}$, so that $I(X_{f};B^{\prime}|EZX\overline{A})_{\omega
}=I(\overline{A}_{f};B^{\prime}|EZX_{f}X\overline{A})_{\omega}=0$. The third inequality
follows by applying the chain rule for and non-negativity of conditional
mutual information. The last inequality follows again from data processing.
Since the inequality holds for all non-signaling extensions of the form in
\eqref{eq:init-state-ext-restr}, we can conclude that%
\begin{align}
& \inf_{\text{ext. in \eqref{eq:gen-sig-ext-restr}}}I(X_{f}\overline{A}_{f}%
;B^{\prime}|E)_{\sigma}  \nonumber \\
  & \leq\inf_{\text{ext. in }%
\eqref{eq:init-state-ext-restr}}I(X\overline{A};B|E)_{\rho}\\
&  \leq\sup_{p_{X}}\inf_{\text{ext. in }\eqref{eq:init-state-ext-restr}}%
I(X\overline{A};B|E)_{\rho}.
\end{align}
Since the inequality above holds for an arbitrary choice of $p_{X_{f}}$, we
can finally conclude that%
\begin{multline}
\sup_{p_{X_{f}}}\inf_{\text{ext. in \eqref{eq:gen-sig-ext-restr}}}I(X_{f}%
\overline{A}_{f};B^{\prime}|E)_{\sigma}\\
\leq\sup_{p_{X}}\inf_{\text{ext. in
}\eqref{eq:init-state-ext-restr}}I(X\overline{A};B|E)_{\rho},
\end{multline}
which is equivalent to the statement of the proposition.
\end{proof}

\bigskip

The proof of convexity of the restricted intrinsic steerability is along the
same lines as that for intrinsic steerability, given already in the proof of
Proposition~\ref{prop:convexity-IS}. We summarize the result as the following proposition:

\begin{proposition}[Convexity]
Let $\{\hat{\rho}_{B}^{a,x}\}_{a,x}$ and $\{\hat{\sigma}_{B}^{a,x}\}_{a,x}$ be
assemblages, and let $\lambda\in[0,1]$. Let $\{\hat{\tau}_{B}^{a,x}\}_{a,x}$
be a mixture of the two assemblages, defined as
\begin{equation}
\hat{\tau}_{B}^{a,x}:=\lambda\hat{\rho}_{B}^{a,x}+(1-\lambda)\hat{\sigma}%
_{B}^{a,x}.
\end{equation}
Then
\begin{equation}
S^{R}(\overline{A};B)_{\hat{\tau}}\leq\lambda S^{R}(\overline{A};B)_{\hat{\rho}%
}+(1-\lambda)S^{R}(\overline{A};B)_{\hat{\sigma}}.
\end{equation}

\end{proposition}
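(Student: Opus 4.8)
The plan is to mirror the convexity argument for intrinsic steerability in Proposition~\ref{prop:convexity-IS}, specialized to the restricted setting where the outer optimization is only a supremum over input distributions $p_{X}$, with no instrument $\{\mathcal{K}_{y}\}_{y}$ on Bob's side and no communicated register $Y$. Fix $\lambda\in[0,1]$ and an arbitrary distribution $p_{X}$, which embeds the mixed assemblage $\{\hat{\tau}_{B}^{a,x}\}_{a,x}$ into a classical--quantum state $\tau_{X\overline{A}B}$. The goal is to bound $\inf_{E}I(X\overline{A};B|E)_{\tau}$ from above by the corresponding convex combination of the infima for $\hat{\rho}$ and $\hat{\sigma}$, and then to pass the supremum over $p_{X}$ through.

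First I would take \emph{arbitrary} non-signaling extensions $\hat{\rho}_{BE}^{a,x}$ and $\hat{\sigma}_{BE}^{a,x}$ of the two individual assemblages, each satisfying the marginal and no-signaling constraints of \eqref{eq:no-sig-extension-RIS}. I would then construct a \emph{particular} non-signaling extension of $\hat{\tau}$ by adjoining a classical flag register $E^{\prime}$ that records which assemblage is in play:
\begin{equation}
\hat{\tau}_{BEE^{\prime}}^{a,x}:=\lambda\,\hat{\rho}_{BE}^{a,x}\otimes\op{0}{0}_{E^{\prime}}+(1-\lambda)\,\hat{\sigma}_{BE}^{a,x}\otimes\op{1}{1}_{E^{\prime}}.
\end{equation}
A direct check shows that tracing out $EE^{\prime}$ returns $\hat{\tau}_{B}^{a,x}$ and that the no-signaling property is inherited from those of $\hat{\rho}_{BE}^{a,x}$ and $\hat{\sigma}_{BE}^{a,x}$, so $EE^{\prime}$ is a legitimate extension system for $\hat{\tau}$.

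The key step is the evaluation of the conditional mutual information on this particular extension. Because $E^{\prime}$ is a classical register flagging two disjoint branches, the conditional mutual information decomposes linearly over it, giving
\begin{equation}
I(X\overline{A};B|EE^{\prime})_{\tau^{\prime}}=\lambda\, I(X\overline{A};B|E)_{\rho}+(1-\lambda)\, I(X\overline{A};B|E)_{\sigma}.
\end{equation}
Since the infimum defining $S^{R}$ runs over \emph{all} non-signaling extensions of $\hat{\tau}$, it can only be smaller than the value on this particular one; and since $\hat{\rho}_{BE}^{a,x}$ and $\hat{\sigma}_{BE}^{a,x}$ were arbitrary, I would minimize each of them independently on the right-hand side to obtain
\begin{equation}
\inf_{E}I(X\overline{A};B|E)_{\tau}\leq\lambda\inf_{E}I(X\overline{A};B|E)_{\rho}+(1-\lambda)\inf_{E}I(X\overline{A};B|E)_{\sigma}.
\end{equation}

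The last step is to take the supremum over $p_{X}$ on both sides. Here the point requiring the most care is that a single shared distribution $p_{X}$ appears in all three terms, so one cannot split the supremum directly; instead I would use that for each fixed $p_{X}$ the right-hand side is bounded by $\lambda\sup_{p_{X}}\inf_{E}I(X\overline{A};B|E)_{\rho}+(1-\lambda)\sup_{p_{X}}\inf_{E}I(X\overline{A};B|E)_{\sigma}$, exactly as in the final chain of Proposition~\ref{prop:convexity-IS}, and then take the supremum of the left-hand side. This yields $S^{R}(\overline{A};B)_{\hat{\tau}}\leq\lambda S^{R}(\overline{A};B)_{\hat{\rho}}+(1-\lambda)S^{R}(\overline{A};B)_{\hat{\sigma}}$. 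I expect no genuine obstacle, since the restricted definition is structurally a simplification of the intrinsic steerability; the only things to verify are that the flag-register construction respects the no-signaling constraint \eqref{eq:no-sig-extension-RIS} and that linearity of conditional mutual information over a classical register carries over verbatim, both of which are immediate.
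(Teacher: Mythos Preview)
Your proposal is correct and follows essentially the same approach as the paper, which simply states that the proof is along the same lines as Proposition~\ref{prop:convexity-IS}. Your specialization to the restricted setting---dropping the instrument $\{\mathcal{K}_y\}_y$ and the register $Y$, constructing the flag-register extension $EE'$, and passing the supremum over $p_X$ through---is exactly what that reduction entails.
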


\begin{proposition}[Superadditivity and Additivity]
\label{superadditivityr}
Let $\{\hat{\rho}_{B_{1}B_{2}}^{a_{1},a_{2}%
,x_{1},x_{2}}\}_{a_{1},a_{2},x_{1},x_{2}}$ be an assemblage for which the
following additional no-signaling constraints hold%
\begin{align}
\sum_{a_{2}}\hat{\rho}_{B_{1}B_{2}}^{a_{1},a_{2},x_{1},x_{2}}  &  =\sum
_{a_{2}}\hat{\rho}_{B_{1}B_{2}}^{a_{1},a_{2},x_{1},x_{2}^{\prime}}%
:=\hat{\theta}_{B_{1}B_{2}}^{a_{1},x_{1}}\ \ \ \ \forall x_{2},x_{2}^{\prime
},\\
\sum_{a_{1}}\hat{\rho}_{B_{1}B_{2}}^{a_{1},a_{2},x_{1},x_{2}}  &  =\sum
_{a_{1}}\hat{\rho}_{B_{1}B_{2}}^{a_{1},a_{2},x_{1}^{\prime},x_{2}}%
:=\hat{\kappa}_{B_{1}B_{2}}^{a_{2},x_{2}}\ \ \ \ \forall x_{1},x_{1}^{\prime},
\end{align}
Let $\{\operatorname{Tr}_{B_{2}}(\hat{\theta}_{B_{1}B_{2}}^{a_{1},x_{1}%
})\}_{a_{1},x_{1}}$ and $\{\operatorname{Tr}_{B_{1}}(\hat{\kappa}_{B_{1}B_{2}%
}^{a_{2},x_{2}})\}_{a_{2},x_{2}}$ be reduced assemblages arising from the
joint assemblage $\{\hat{\rho}_{B_{1}B_{2}}^{a_{1},a_{2},x_{1},x_{2}}%
\}_{a_{1},a_{2},x_{1},x_{2}}$. Then the restricted intrinsic steerability is
superadditive in the following sense:%
\begin{equation}
S^{R}(\overline{A}_{1}\overline{A}_{2};B_{1}B_{2})_{\hat{\rho}}\geq S^{R}(\overline{A}%
_{1};B_{1})_{\hat{\theta}}+S^{R}(\overline{A}_{2};B_{2})_{\hat{\kappa}}.
\end{equation}
If the assemblage $\{\hat{\rho}_{B_{1}B_{2}}^{a_{1},a_{2},x_{1},x_{2}%
}\}_{a_{1},a_{2},x_{1},x_{2}}$ has a tensor-product form, so that $\hat{\rho
}_{B_{1}B_{2}}^{a_{1},a_{2},x_{1},x_{2}}=\hat{\theta}_{B_{1}}^{a_{1},x_{1}%
}\otimes\hat{\kappa}_{B_{2}}^{a_{2},x_{2}}$ for assemblages $\{\hat{\theta
}_{B_{1}}^{a_{1},x_{1}}\}_{a_{1},x_{1}}$ and $\{\hat{\kappa}_{B_{2}}%
^{a_{2},x_{2}}\}_{a_{2},x_{2}}$, then the restricted intrinsic steerability is
additive:%
\begin{equation}
S^{R}(\overline{A}_{1}\overline{A}_{2};B_{1}B_{2})_{\hat{\rho}}=S^{R}(\overline{A}_{1}%
;B_{1})_{\hat{\theta}}+S^{R}(\overline{A}_{2};B_{2})_{\hat{\kappa}}.
\end{equation}

\end{proposition}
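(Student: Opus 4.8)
The plan is to prove the two inequalities separately: superadditivity ($\geq$) by a chain-rule argument essentially identical to that of Proposition~\ref{prop:superadd}, and the matching upper bound ($\leq$) in the tensor-product case by exhibiting a product non-signaling extension. Additivity then follows by combining them.

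For superadditivity, I would proceed as in the proof of Proposition~\ref{prop:superadd}, noting that the restricted setting is simpler since there is no feedback system $Y$ and no supremum over Bob's instruments. Concretely, I would fix a \emph{product} input distribution $p_{X_1}p_{X_2}$, embed the joint assemblage into a classical--quantum state, and take an arbitrary non-signaling extension on a system $E$. Applying the chain rule twice to $B_1B_2$ and to the classical registers, and discarding two non-negative conditional-mutual-information terms, gives
\begin{multline}
I(X_1X_2\overline{A}_1\overline{A}_2;B_1B_2|E)_{\rho}\geq\\ I(X_1\overline{A}_1;B_1|E)_{\rho}+I(X_2\overline{A}_2;B_2|EB_1)_{\rho}.
\end{multline}
I would then check that $\sum_{a_2,x_2}p_{X_2}(x_2)\hat{\rho}_{B_1B_2E}^{a_1,a_2,x_1,x_2}$, traced over $B_2$, is a legitimate non-signaling extension of $\{\operatorname{Tr}_{B_2}(\hat{\theta}_{B_1B_2}^{a_1,x_1})\}$ with extension system $E$, and that $\sum_{a_1,x_1}p_{X_1}(x_1)\hat{\rho}_{B_1B_2E}^{a_1,a_2,x_1,x_2}$ is a non-signaling extension of $\{\operatorname{Tr}_{B_1}(\hat{\kappa}_{B_1B_2}^{a_2,x_2})\}$ with extension system $EB_1$. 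Both no-signaling checks reduce to the global constraint that $\sum_{a_1,a_2}\hat{\rho}_{B_1B_2E}^{a_1,a_2,x_1,x_2}$ is independent of $(x_1,x_2)$. Taking the infimum over $E$ and then the supremum over $p_{X_1}$ and $p_{X_2}$, which decouples, yields $S^{R}(\overline{A}_1\overline{A}_2;B_1B_2)_{\hat{\rho}}\geq S^{R}(\overline{A}_1;B_1)_{\hat{\theta}}+S^{R}(\overline{A}_2;B_2)_{\hat{\kappa}}$.

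For additivity, the new ingredient is the reverse bound. Given arbitrary non-signaling extensions $\hat{\theta}_{B_1E_1}^{a_1,x_1}$ and $\hat{\kappa}_{B_2E_2}^{a_2,x_2}$ of the two factor assemblages, I would use the \emph{product} extension $\hat{\rho}_{B_1B_2E_1E_2}^{a_1,a_2,x_1,x_2}:=\hat{\theta}_{B_1E_1}^{a_1,x_1}\otimes\hat{\kappa}_{B_2E_2}^{a_2,x_2}$, which is non-signaling because each factor is. The crux is the entropy inequality
\begin{multline}
I(X_1X_2\overline{A}_1\overline{A}_2;B_1B_2|E_1E_2)_{\rho}\leq\\ I(X_1\overline{A}_1;B_1|E_1)_{\theta}+I(X_2\overline{A}_2;B_2|E_2)_{\kappa},
\end{multline}
which I would establish even for a \emph{correlated} $p_{X_1X_2}$. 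Writing the left side as $H(B_1B_2|E_1E_2)-H(B_1B_2|X_1\overline{A}_1X_2\overline{A}_2E_1E_2)$, the conditioned term splits exactly as $H(B_1|X_1\overline{A}_1E_1)_{\theta}+H(B_2|X_2\overline{A}_2E_2)_{\kappa}$, since the state on $B_1E_1B_2E_2$ factorizes once all classical labels are fixed. For the first term, subadditivity of entropy ($I(B_1;B_2|E_1E_2)\geq0$) followed by two applications of ``conditioning does not increase entropy'' gives $H(B_1B_2|E_1E_2)\leq H(B_1|E_1)+H(B_2|E_2)$, and subtracting yields the claim.

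Finally I would take the infimum over $E_1$ and $E_2$ independently and then the supremum over $p_{X_1X_2}$. Since the right-hand side depends on $p_{X_1X_2}$ only through its marginals and any pair of marginals is attainable, this supremum decouples into $S^{R}(\overline{A}_1;B_1)_{\hat{\theta}}+S^{R}(\overline{A}_2;B_2)_{\hat{\kappa}}$, which combined with superadditivity gives additivity. I expect the main obstacle to be the bookkeeping in the additivity bound: one must confirm that the correlations allowed in $p_{X_1X_2}$ genuinely cannot help the adversary (precisely the content of the subadditivity step), and one must verify that the reduced states of the product extension match the factor assemblages so that $H(B_1|X_1\overline{A}_1E_1)$ and $H(B_1|E_1)$ are evaluated in the correct single-party states.
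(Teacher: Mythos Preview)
Your proposal is correct. The superadditivity argument matches the paper's, which simply refers back to Proposition~\ref{prop:superadd}. For the subadditivity bound in the tensor-product case, you take a slightly more elementary route than the paper: the paper purifies each factor extension to $|\hat{\theta}^{a_1,x_1}\rangle_{B_1E_1F_1}$ and $|\hat{\kappa}^{a_2,x_2}\rangle_{B_2E_2F_2}$, converts $-H(B_1B_2|E_1E_2\overline{A}_1X_1\overline{A}_2X_2)$ into $+H(B_1B_2|F_1F_2\overline{A}_1X_1\overline{A}_2X_2)$ via duality of conditional entropy, applies strong subadditivity to both resulting positive terms, and then uses duality once more (invoking the no-signaling constraint on $\{\hat{\kappa}^{a_2,x_2}_{B_2}\}$ to verify that the reduced state on $X_1\overline{A}_1B_1E_1F_1$ really is the single-party one). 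You instead observe directly that $H(B_1B_2|X_1\overline{A}_1X_2\overline{A}_2E_1E_2)$ splits exactly because the state on $B_1E_1B_2E_2$ is a product once all classical registers are fixed, and only need subadditivity/monotonicity for the unconditioned term $H(B_1B_2|E_1E_2)$. Your route avoids purification and duality altogether; the paper's route follows the template of the additivity proof for the squashed entanglement of a channel \cite{TGW14IEEE}, which may be more recognizable to readers of that literature but is no shorter here.
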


\begin{proof}
The superadditivity of restricted intrinsic steerability is similar to the
proof of Proposition~\ref{prop:superadd} for intrinsic steerability. Thus, to prove the additivity of
intrinsic steerability with respect to product assemblages, it is sufficient
to prove the following subadditivity inequality:%
\begin{equation}
S^{R}(\overline{A}_{1}\overline{A}_{2};B_{1}B_{2})_{\hat{\rho}}\leq S^{R}(\overline{A}%
_{1};B_{1})_{\hat{\theta}}+S^{R}(\overline{A}_{2};B_{2})_{\hat{\kappa}}.
\label{eq:sub-add-ineq-restr}%
\end{equation}
Our proof of the above inequality has some similarities to the proof of the
additivity of the squashed entanglement of a channel \cite{TGW14IEEE} (there
are, however, some key differences). Let $\hat{\theta}_{B_{1}E_{1}}%
^{a_{1},x_{1}}$ and $\hat{\kappa}_{B_{2}E_{2}}^{a_{2},x_{2}}$ be non-signaling
extensions of $\hat{\theta}_{B_{1}}^{a_{1},x_{1}}$ and $\hat{\kappa}_{B_{2}%
}^{a_{2},x_{2}}$, respectively, and suppose that $|\hat{\theta}^{a_{1},x_{1}%
}\rangle_{B_{1}E_{1}F_{1}}$ and $|\hat{\kappa}^{a_{2},x_{2}}\rangle
_{B_{2}E_{2}F_{2}}$ purify $\hat{\theta}_{B_{1}E_{1}}^{a_{1},x_{1}}$ and
$\hat{\kappa}_{B_{2}E_{2}}^{a_{2},x_{2}}$, respectively. Consider the
following states:%
\begin{widetext}
\begin{align}
\rho_{X_{1}X_{2}\overline{A}_{1}\overline{A}_{2}B_{1}B_{2}E}  &  :=\sum_{x_{1}%
,x_{2},a_{1},a_{2}}p_{X_{1}X_{2}}(x_{1},x_{2})[x_{1}]\otimes\lbrack
x_{2}]\otimes\lbrack a_{1}]\otimes\lbrack a_{2}]\otimes\hat{\rho}_{B_{1}%
B_{2}E}^{a_{1},a_{2},x_{1},x_{2}},\\
\omega_{X_{1}X_{2}\overline{A}_{1}\overline{A}_{2}B_{1}B_{2}E_{1}E_{2}F_{1}F_{2}}  &
:=\sum_{x_{1},x_{2},a_{1},a_{2}}p_{X_{1}X_{2}}(x_{1},x_{2})[x_{1}%
]\otimes\lbrack x_{2}]\otimes\lbrack a_{1}]\otimes\lbrack a_{2}]\otimes
\hat{\theta}_{B_{1}E_{1}F_{1}}^{a_{1},x_{1}}\otimes\hat{\kappa}_{B_{2}%
E_{2}F_{2}}^{a_{2},x_{2}}, \label{eq:add-ind-ext-omega}%
\end{align}
where $p_{X_{1}X_{2}}(x_{1},x_{2})$ is some probability distribution and
$\operatorname{Tr}_{E}(\hat{\rho}_{B_{1}B_{2}E}^{a_{1},a_{2},x_{1},x_{2}%
})=\hat{\theta}_{B_{1}}^{a_{1},x_{1}}\otimes\hat{\kappa}_{B_{2}}^{a_{2},x_{2}%
}$. Consider that%
\begin{align}
&  \!\!\!\!\!\!\inf_{\rho_{\overline{A}_{1}\overline{A}_{2}X_{1}X_{2}B_{1}B_{2}E}}%
I(\overline{A}_{1}\overline{A}_{2}X_{1}X_{2};B_{1}B_{2}|E)_{\rho}\nonumber\\
&  \leq I(\overline{A}_{1}\overline{A}_{2}X_{1}X_{2};B_{1}B_{2}|E_{1}E_{2})_{\omega}\\
&  =H(B_{1}B_{2}|E_{1}E_{2})_{\omega}-H(B_{1}B_{2}|E_{1}E_{2}\overline{A}_{1}%
X_{1}\overline{A}_{2}X_{2})_{\omega}\\
&  =H(B_{1}B_{2}|E_{1}E_{2})_{\omega}+H(B_{1}B_{2}|F_{1}F_{2}\overline{A}_{1}%
X_{1}\overline{A}_{2}X_{2})_{\omega}\\
&  \leq H(B_{1}|E_{1})_{\omega}+H(B_{2}|E_{2})_{\omega}+H(B_{1}|F_{1}\overline
{A}_{1}X_{1})_{\omega}+H(B_{2}|F_{2}\overline{A}_{2}X_{2})_{\omega}\\
&  =H(B_{1}|E_{1})_{\omega}+H(B_{2}|E_{2})_{\omega}-H(B_{1}|E_{1}\overline{A}%
_{1}X_{1})_{\omega}-H(B_{2}|E_{2}\overline{A}_{2}X_{2})_{\omega}\\
&  =I(X_{1}\overline{A}_{1};B_{1}|E_{1})_{\omega}+I(X_{2}\overline{A}_{2};B_{2}%
|E_{2})_{\omega}.
\end{align}
The first inequality follows because $\omega_{X_{1}X_{2}\overline{A}_{1}\overline{A}%
_{2}B_{1}B_{2}E_{1}E_{2}}$ is a particular non-signaling extension whereas
$\rho_{X_{1}X_{2}\overline{A}_{1}\overline{A}_{2}B_{1}B_{2}E}$ is an arbitrary
non-signaling extension. The first equality follows from the chain rule for
conditional mutual information. Conditioned on $\overline{A}_{1}\overline{A}_{2}%
X_{1}X_{2}$, the state on $B_{1}E_{1}B_{2}E_{2}F_{1}F_{2}$ is pure, and so the
second equality follows from the duality of conditional entropy. The first
inequality is a consequence of the strong subadditivity of quantum entropy
\cite{LR73}. The third equality follows again from the duality of conditional
entropy as well as the no-signaling condition. To see this for the entropy
$H(B_{1}|F_{1}\overline{A}_{1}X_{1})_{\omega}$, consider that this entropy is
evaluated with respect to the following reduced state:%
\begin{align}
&  \operatorname{Tr}_{X_{2}\overline{A}_{2}B_{2}E_{2}F_{2}}\!\left(  \sum
_{x_{1},x_{2},a_{1},a_{2}}p_{X_{1}X_{2}}(x_{1},x_{2})[x_{1}]\otimes\lbrack
x_{2}]\otimes\lbrack a_{1}]\otimes\lbrack a_{2}]\otimes\hat{\theta}%
_{B_{1}E_{1}F_{1}}^{a_{1},x_{1}}\otimes\hat{\kappa}_{B_{2}E_{2}F_{2}}%
^{a_{2},x_{2}}\right) \nonumber\\
&  =\sum_{x_{1},x_{2},a_{1},a_{2}}p_{X_{1}X_{2}}(x_{1},x_{2})[x_{1}%
]\otimes\lbrack a_{1}]\otimes\hat{\theta}_{B_{1}E_{1}F_{1}}^{a_{1},x_{1}%
}\otimes\operatorname{Tr}_{B_{2}E_{2}F_{2}}\{\hat{\kappa}_{B_{2}E_{2}F_{2}%
}^{a_{2},x_{2}}\}\\
&  =\sum_{x_{1},a_{1}}p_{X_{1}}(x_{1})[x_{1}]\otimes\lbrack a_{1}]\otimes
\hat{\theta}_{B_{1}E_{1}F_{1}}^{a_{1},x_{1}}\otimes\operatorname{Tr}_{B_{2}%
}\!\left(  \sum_{x_{2}}p_{X_{2}|X_{1}}(x_{2}|x_{1})\sum_{a_{2}}\hat{\kappa
}_{B_{2}}^{a_{2},x_{2}}\right) \\
&  =\sum_{x_{1},a_{1}}p_{X_{1}}(x_{1})[x_{1}]\otimes\lbrack a_{1}]\otimes
\hat{\theta}_{B_{1}E_{1}F_{1}}^{a_{1},x_{1}}\otimes\operatorname{Tr}_{B_{2}%
}\!\left(  \sum_{x_{2}}p_{X_{2}|X_{1}}(x_{2}|x_{1})\kappa_{B_{2}}\right) \\
&  =\sum_{x_{1},a_{1}}p_{X_{1}}(x_{1})[x_{1}]\otimes\lbrack a_{1}]\otimes
\hat{\theta}_{B_{1}E_{1}F_{1}}^{a_{1},x_{1}}\otimes\operatorname{Tr}_{B_{2}%
}(\kappa_{B_{2}})\\
&  =\sum_{x_{1},a_{1}}p_{X_{1}}(x_{1})[x_{1}]\otimes\lbrack a_{1}]\otimes
\hat{\theta}_{B_{1}E_{1}F_{1}}^{a_{1},x_{1}}.
\end{align}
\end{widetext}
In the above, the third equality is the critical one in which we have used the
no-signaling constraint for the assemblage $\{\hat{\kappa}_{B_{2}}%
^{a_{2},x_{2}}\}_{a_{2},x_{2}}$, allowing for the effective removal of
correlation between $X_{1}$ and $X_{2}$. Thus, the above analysis allows for
seeing that the remaining state on $B_{1}E_{1}F_{1}$ conditioned on $\overline
{A}_{1}$ and $X_{1}$ is independent of any of the second system. For the last
equality, we employ the definition of conditional mutual information. Since
the above development holds for all non-signaling extensions of the form in
\eqref{eq:add-ind-ext-omega}, we find that%
\begin{align}
&\!\!\!\!\!\!\!\!\!\inf_{\rho_{\overline{A}_{1}\overline{A}_{2}X_{1}X_{2}B_{1}B_{2}E}}I(\overline{A}_{1}\overline
{A}_{2}X_{1}X_{2};B_{1}B_{2}|E_{1})_{\rho}
\nonumber \\
& \leq\inf_{\omega_{\overline{A}%
_{1}X_{1}B_{1}E_{1}}}I(\overline{A}_{1}X_{1};B_{1}|E_{1})_{\omega}\nonumber\\
&
\qquad+\inf
_{\omega_{\overline{A}_{2}X_{2}B_{2}E_{2}}}I(\overline{A}_{2}X_{2};B_{2}|E_{2})_{\omega
}\\
& \leq\sup_{p_{X_{1}}}\inf_{\omega_{\overline{A}_{1}X_{1}B_{1}E_{1}}}I(\overline{A}%
_{1}X_{1};B_{1}|E_{1})_{\omega}\nonumber\\
&
\qquad+\sup_{p_{X_{2}}}\inf_{\omega_{\overline{A}_{2}%
X_{2}B_{2}E_{2}}}I(\overline{A}_{2}X_{2};B_{2}|E_{2})_{\omega}.
\end{align}

Since the above inequality holds for an arbitrary probability distribution
$p_{X_{1}X_{2}}$, we conclude that%
\begin{multline}
\sup_{p_{X_{1}X_{2}}}\inf_{\rho_{\overline{A}_{1}\overline{A}_{2}X_{1}X_{2}B_{1}B_{2}E}%
}I(\overline{A}_{1}\overline{A}_{2}X_{1}X_{2};B_{1}B_{2}|E)_{\rho}\\
\leq
\sup_{p_{X_{1}}}\inf_{\omega_{\overline{A}_{1}X_{1}B_{1}E_{1}}}I(\overline{A}_{1}%
X_{1};B_{1}|E_{1})_{\omega}\\
+\sup_{p_{X_{2}}}\inf_{\omega_{\overline{A}_{2}%
X_{2}B_{2}E_{2}}}I(\overline{A}_{2}X_{2};B_{2}|E_{2})_{\omega},
\end{multline}
which is equivalent to \eqref{eq:sub-add-ineq-restr}.
\end{proof}

Monogamy of steering has been explored in \cite{Milne2014, Reid2013}. We prove
here that the restricted intrinsic steerability is monogamous in the following
sense: for a tripartite state $\rho_{ABC}$, Alice and Charlie perform
measurements on their systems and steer Bob's system. We see that their
ability to steer Bob's system is limited.

\begin{proposition}[Monogamy]
\label{sec:monogamyr}Let $\{\hat{\rho}_{B}^{a,c,x_{1},x_{2}}\}$ be an
assemblage with classical inputs $x_{1}$ and $x_{2}$\ for Alice and Charlie,
respectively, and classical outputs $a$ and $c$ for Alice and Charlie,
respectively, and obeying the following additional no-signaling constraints:%
\begin{align}
\sum_{c}\hat{\rho}_{B}^{a,c,x_{1},x_{2}}  &  =\sum_{c}\hat{\rho}%
_{B}^{a,c,x_{1},x_{2}^{\prime}}:=\hat{\theta}_{B}^{a,x_{1}}\ \ \ \ \forall
x_{2},x_{2}^{\prime},\\
\sum_{a}\hat{\rho}_{B}^{a,c,x_{1},x_{2}}  &  =\sum_{a}\hat{\rho}%
_{B}^{a,c,x_{1}^{\prime},x_{2}}:=\hat{\kappa}_{B}^{c,x_{2}}\ \ \ \ \forall
x_{1},x_{1}^{\prime},
\end{align}
such that the reduced assemblages are $\{\hat{\theta}_{B}^{a,x_{1}}%
\}_{a,x_{1}}$ and $\{\hat{\kappa}_{B}^{c,x_{2}}\}_{c,x_{2}}$.\ Then the
following monogamy inequality holds%
\begin{equation}
S^{R}(\overline{A}\overline{C};B)_{\hat{\rho}}\geq S^{R}(\overline{A};B)_{\hat{\theta}}%
+S^{R}(\overline{C};B)_{\hat{\kappa}}. \label{eq:monogamy-ineq}%
\end{equation}

\end{proposition}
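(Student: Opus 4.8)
The plan is to mirror the chain-rule argument behind the monogamy of squashed entanglement, adapted to the no-signaling setting, rather than the term-dropping argument of Proposition~\ref{superadditivityr}. The essential structural difference is that here $B$ is a single shared system, so I cannot split Bob's system into $B_1'B_2'$ and extract matching pieces; instead I will use the two-term chain rule that keeps $B$ intact, in which no cross terms appear and nothing has to be discarded.

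First I would restrict the supremum in the definition of $S^{R}(\overline{A}\,\overline{C};B)_{\hat{\rho}}$ to product input distributions $p_{X_{1}}p_{X_{2}}$; this can only lower the quantity, and it is precisely what decouples the two single-party problems at the end. For an arbitrary non-signaling extension $\hat{\rho}_{BE}^{a,c,x_{1},x_{2}}$ of the joint assemblage I form the associated classical--quantum state $\rho_{X_{1}X_{2}\overline{A}\,\overline{C}BE}$ and write
\[
I(X_{1}X_{2}\overline{A}\,\overline{C};B|E)=I(X_{1}\overline{A};B|E)+I(X_{2}\overline{C};B|EX_{1}\overline{A}).
\]
The heart of the argument is to recognize each summand as a conditional mutual information appearing in the definition of one of the reduced quantifiers. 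For the first term, marginalizing $X_{2}\overline{C}$ turns $E$ into a non-signaling extension of $\{\hat{\theta}_{B}^{a,x_{1}}\}$: the correct $B$-marginal is forced by $\sum_{c}\hat{\rho}_{B}^{a,c,x_{1},x_{2}}=\hat{\theta}_{B}^{a,x_{1}}$, while $\sum_{a}(\cdot)=\mathrm{const}$ follows from the \emph{full} no-signaling of the joint extension; hence $I(X_{1}\overline{A};B|E)\geq\inf I(X_{1}\overline{A};B|E')_{\theta}$. For the second term I would take $EX_{1}\overline{A}$ as the extension system for $\{\hat{\kappa}_{B}^{c,x_{2}}\}$, with induced operators $\sum_{x_{1},a}p_{X_{1}}(x_{1})[x_{1}]\otimes[a]\otimes\hat{\rho}_{BE}^{a,c,x_{1},x_{2}}$, whose $B$-marginal collapses to $\hat{\kappa}_{B}^{c,x_{2}}$ via $\sum_{a}\hat{\rho}_{B}^{a,c,x_{1},x_{2}}=\hat{\kappa}_{B}^{c,x_{2}}$. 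Since both lower bounds hold for every $E$, I may take the infimum over $E$ on the left and still bound below by the sum of the two independent infima; taking the supremum over the product input, which factorizes over $p_{X_{1}}$ and $p_{X_{2}}$, then gives $S^{R}(\overline{A};B)_{\hat{\theta}}+S^{R}(\overline{C};B)_{\hat{\kappa}}$, dominated by $S^{R}(\overline{A}\,\overline{C};B)_{\hat{\rho}}$ because product inputs are a special case of general ones.

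The step I expect to be the main obstacle is verifying the no-signaling condition for the \emph{second} extension, i.e.\ that $\sum_{c}\bigl[\sum_{x_{1},a}p_{X_{1}}(x_{1})[x_{1}]\otimes[a]\otimes\hat{\rho}_{BE}^{a,c,x_{1},x_{2}}\bigr]$ is independent of $x_{2}$. This reduces to requiring $\sum_{c}\hat{\rho}_{BE}^{a,c,x_{1},x_{2}}$ to be independent of $x_{2}$, namely the \emph{partial} no-signaling constraint on Charlie's input, which is \emph{not} implied by the single full no-signaling constraint that the literal Definition~\ref{def:reducedsteering-CMI} imposes on the joint extension (only the $B$-marginal of that combination is guaranteed $x_{2}$-independent). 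The clean resolution is to read the joint extension as respecting the genuinely tripartite no-signaling structure of Alice, Charlie and Bob--Eve, so that both partial constraints $\sum_{a}\hat{\rho}_{BE}^{a,c,x_{1},x_{2}}$ indep.\ of $x_{1}$ and $\sum_{c}\hat{\rho}_{BE}^{a,c,x_{1},x_{2}}$ indep.\ of $x_{2}$ also hold for the extension; under this reading the second extension is non-signaling and the argument closes just as in the superadditivity proof. I would flag explicitly that it is this term, and not the chain rule itself, that carries all the subtlety: the first term needs only the full constraint, whereas by symmetry whichever party is conditioned upon in the second slot is precisely the one whose partial no-signaling constraint must be invoked.
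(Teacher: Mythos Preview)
Your approach is essentially identical to the paper's: restrict to product inputs $p_{X_1}p_{X_2}$, apply the two-term chain rule $I(X_1X_2\overline{A}\,\overline{C};B|E)=I(X_1\overline{A};B|E)+I(X_2\overline{C};B|EX_1\overline{A})$, recognize each summand as (at least) the infimum defining the corresponding reduced quantifier, then take $\inf_E$ followed by $\sup$ over inputs. The paper carries out exactly these steps, with the same order of optimizations.

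Where you go further than the paper is in your final paragraph. The paper simply asserts the inequality
\[
I(X_1\overline{A};B|E)_\rho+I(X_2\overline{C};B|E\overline{A}X_1)_\rho\;\geq\;\inf_{\theta}I(X_1\overline{A};B|F)_\theta+\inf_{\kappa}I(X_2\overline{C};B|G)_\kappa
\]
without checking that $E\overline{A}X_1$ is a \emph{bona fide} non-signaling extension of $\{\hat{\kappa}_B^{c,x_2}\}$; in particular it never discusses the partial constraint $\sum_c\hat{\rho}_{BE}^{a,c,x_1,x_2}$ being independent of $x_2$. You are right that the literal Definition~\ref{def:reducedsteering-CMI}, applied to the joint input $(x_1,x_2)$ and joint output $(a,c)$, only forces the \emph{full} sum $\sum_{a,c}\hat{\rho}_{BE}^{a,c,x_1,x_2}$ to be $(x_1,x_2)$-independent, which does not by itself give the partial constraint you need. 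Your proposed reading---that in the tripartite scenario the infimum in $S^R(\overline{A}\,\overline{C};B)$ runs over extensions respecting the full bipartite no-signaling structure between $\{\text{Alice},\text{Charlie},\text{Bob--Eve}\}$---is the natural one (and is consistent with how the same issue is handled implicitly in the superadditivity proof), but the paper does not make this explicit. So your proof is not merely correct and aligned with the paper's; it is more careful on exactly the point where the paper is silent.
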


\begin{proof}
This proof follows from an application of the chain rule for conditional
mutual information, much like the proof of monogamy for the squashed
entanglement \cite{KW04}. First, consider the following classical--quantum
state:%
\begin{align}
&\rho_{X_{1}X_{2}\overline{A}\overline{C}BE}:=\nonumber\\&\sum_{x_{1},x_{2},a,c}p_{X_{1}}%
(x_{1})p_{X_{2}}(x_{2})[x_{1}]\otimes\lbrack x_{2}]\otimes\lbrack
a]\otimes\lbrack c]\otimes\hat{\rho}_{BE}^{a,c,x_{1},x_{2}},
\end{align}
where $\hat{\rho}_{BE}^{a,c,x_{1},x_{2}}$ is a non-signaling extension of
$\hat{\rho}_{B}^{a,c,x_{1},x_{2}}$.\ Let%
\begin{align}
\theta_{X_{1}\overline{A}BF}  &  :=\sum_{x_{1},a}p_{X_{1}}(x_{1})[x_{1}%
]\otimes\lbrack a]\otimes\hat{\theta}_{BF}^{a,x_{1}},\\
\kappa_{X_{2}\overline{C}BG}  &  :=\sum_{x_{2},a}p_{X_{2}}(x_{2})[x_{2}%
]\otimes\lbrack c]\otimes\hat{\kappa}_{BG}^{c,x_{2}},
\end{align}
where $\hat{\theta}_{BF}^{a,x_{1}}$ is a non-signaling extension of
$\hat{\theta}_{B}^{a,x_{1}}$ and $\hat{\kappa}_{BG}^{c,x_{2}}$ is a
non-signaling extension of $\hat{\kappa}_{B}^{c,x_{2}}$. Then we have from the
chain rule for conditional mutual information that%
\begin{align}
&I(X_{1}X_{2}\overline{A}\overline{C};B|E)_{\rho}
\nonumber \\ &\quad  =I(X_{1}\overline{A};B|E)_{\rho}%
+I(X_{2}\overline{C};B|E\overline{A}X_{1})_{\rho}\\
& \quad \geq\inf_{\theta_{X_{1}\overline{A}BF}}I(X_{1}\overline{A};B|E)_{\theta}+\inf
_{\kappa_{X_{2}\overline{C}BG}}I(X_{2}\overline{C};B|G)_{\kappa}.
\end{align}
Since the above inequality holds for all non-signaling extensions $\rho
_{X_{1}X_{2}\overline{A}\overline{C}BE}$, we conclude that%
\begin{multline}
\inf_{\rho_{X_{1}X_{2}\overline{A}\overline{C}BE}}I(X_{1}X_{2}\overline{A}\overline{C};B|E)_{\rho
}\\
\geq\inf_{\theta_{X_{1}\overline{A}BF}}I(X_{1}\overline{A};B|E)_{\theta}+\inf
_{\kappa_{X_{2}\overline{C}BG}}I(X_{2}\overline{C};B|G)_{\kappa}.
\end{multline}
Optimizing the left-hand side with respect to product distributions, we find
that%
\begin{multline}
\sup_{p_{X_{1}},p_{X_{2}}}\inf_{\rho_{X_{1}X_{2}\overline{A}\overline{C}BE}}I(X_{1}%
X_{2}\overline{A}\overline{C};B|E)_{\rho}\\
\geq \inf_{\theta_{X_{1}\overline{A}BF}}I(X_{1}%
\overline{A};B|E)_{\theta}+\inf_{\kappa_{X_{2}\overline{C}BG}}I(X_{2}\overline{C}%
;B|G)_{\kappa}.
\end{multline}
The development holds for any choice of distributions $p_{X_{1}}$ and
$p_{X_{2}}$, and so we conclude that%
\begin{align}
&\sup_{p_{X_{1}},p_{X_{2}}}\inf_{\rho_{X_{1}X_{2}\overline{A}\overline{C}BE}}I(X_{1}%
X_{2}\overline{A}\overline{C};B|E)_{\rho}\nonumber \\  &\geq\sup_{p_{X_{1}}}\inf_{\theta
_{X_{1}\overline{A}BF}}I(X_{1}\overline{A};B|E)_{\theta}+\sup_{p_{X_{2}}}\inf
_{\kappa_{X_{2}\overline{C}BG}}I(X_{2}\overline{C};B|G)_{\kappa}\\
&  =S^{R}(\overline{A};B)_{\hat{\theta}}+S^{R}(\overline{C};B)_{\hat{\kappa}}.
\end{align}
Finally optimizing the left-hand side with respect to all input distributions
$p_{X_{1}X_{2}}$, we conclude \eqref{eq:monogamy-ineq}.
\end{proof}

\section{Operational interpretation\label{sec:operational}}

Let $\psi_{ABE}$ be a pure tripartite
state, $p_{X}$ a probability distribution, and $\{\Lambda_{a}^{(x)}\}_{a}$ a
positive operator-valued measure (POVM) for each $x$.\ Then $\{p_{X}%
(x)\Lambda_{a}^{(x)}\}_{a,x}$ is a POVM\ as well, representing a random choice
of the POVM\ $\{\Lambda_{a}^{(x)}\}_{a}$ according to $p_{X}$, along with
keeping a record $x$ of the choice in addition to the measurement outcome $a$.
Consider the following state resulting from performing the POVM\ on
$\psi_{ABE}$: %
\begin{multline}
\rho_{X\overline{A}BE}:=\sum_{x}|x\rangle\langle x|_{X}\otimes|a\rangle\langle
a|_{\overline{A}}\\
\otimes\operatorname{Tr}_{A}((p_{X}(x)\Lambda_{a}^{(x)}\otimes I_{BE}%
)\psi_{ABE}).
\end{multline}
Here we consider that Alice performs the measurement $\{p_{X}(x)\Lambda
_{a}^{(x)}\}_{a,x}$ on her system $A$, which results in the measurement
outcomes being placed in classical systems $X\overline{A}$. Suppose now that many
copies of the above state $\psi_{ABE}$ are available, and that Alice would
like to perform individual measurements $\{p_{X}(x)\Lambda_{a}^{(x)}\}_{a,x}$
of her systems and send all of the outcomes to Eve, who possesses the $E$
systems. Alice could certainly simply perform the measurements and send the
outcomes to Eve, but if she shares randomness with Eve, then she can simulate
the measurements in such a way as to reduce the number of classical bits she
would need to send to Eve. Furthermore, the simulation can be such that no
external party observing all of the systems could tell the difference between
the scenario in which Alice actually performs the measurements and the one in
which Alice and Eve perform a simulation of the measurements. One of the main
results of \cite{WHBH12} is that the conditional mutual information
$I(X\overline{A};B|E)_{\rho}$ is the optimal rate of classical information that
Alice needs to send to Eve in order to have a successful simulation. The
protocol that achieves this task is called measurement compression with
quantum side information \cite{WHBH12}. Thus, this information-processing task
gives an operational interpretation of the main quantity $I(X\overline
{A};B|E)_{\rho}$\ appearing in the restricted intrinsic steerability. 
% In a
% more limited setting in which Eve's system is uncorrelated or trivial, the
% quantity $I(X\overline{A};B|E)_{\rho}$ was called the ``intrinsic data'' of the
% measurement because it is the amount of information that needs to be conveyed
% over a classical communication channel in order to complete the measurement
% simulation, as opposed to the ``extrinsic data,'' which can be shared as
% randomness beforehand to help in the simulation task \cite{Winter01a}. 
We note
that our setting above, regarding the classical communication cost of
simulating steering, is rather different from the setting considered in
\cite{SABGS16}.

\section{Other possible measures}

We note here that other variations of the intrinsic steerability are possible.
Fix an assemblage $\{\hat{\rho}^{a,x}_B\}_{a,x}$. Let Eve have a non-signaling extension
of this assemblage, and we write the extended assemblage as
$\{\hat{\rho}^{a,x}_{BE}\}_{a,x}$.
Bob applies the quantum instrument consisting of trace-non-increasing completely positive maps $\{\mathcal{K}_y\}_y$, gets the outcome $y$, and publicly announces it. Then, Alice prepares the input $x$ based on $y$, and Eve performs a quantum channel $\kappa_y$ on her system. The state after this scenario is given by
\begin{multline}
\rho_{\overline{A}XB'YE}:= \sum_{x,a,y} p_{X|Y}(x|y)\op{x}{x}_X\otimes \op{a}{a}_{\ahat} \\
\otimes ({\mathcal{K}_y\otimes\kappa_y})(\hat{\rho}_{BE}^{a,x})\otimes \op{y}{y}_Y.
\end{multline}
We could then define a variation of the intrinsic steerability as 
\begin{equation}
\inf_{\rho_{\overline{A}XBYE}}\sup_{\{p(x|y),\{\mathcal{K}_y\}_y\}}I(\overline{A}X;B|EY)_\rho.
\end{equation}
This quantity however is generally larger than the intrinsic steerability, and we suspect that the definition we provided will be more useful in future applications because the definition we gave is analogous to the squashed entanglement of a channel \cite{TGW14IEEE}, which has found a number of applications in quantum information theory.  We note that it is possible to consider other restrictions that result in  a modification of the measure accordingly.
\section{Conclusion}We have introduced a quantifier for quantum steering
based on conditional quantum mutual information. It exploits the Markov-chain
structure of assemblages with a local hidden-state model, measuring the
deviation of a given assemblage from one having a local-hidden-state model.
The intrinsic steerability is a steering monotone and superadditive in
general.
%, and continuous.
This suggests that the intrinsic steerability should find applications in
protocols where steering as a resource is relevant. Also, we looked at a
restricted class of free operations. In this case, the quantity simplifies
considerably and also satisfies
%faithfulness,
additivity and monogamy. The restricted intrinsic steerability could find
applications in protocols where it suffices to consider the restricted class
of free operations.

\section{Acknowledgements}We are grateful to Rodrigo Gallego, Carl Miller,
Marco Piani, Yaoyun Shi, and Masahiro Takeoka for discussions about quantum
steering. EK acknowledges support from the Department of Physics and Astronomy
at LSU. XW and MMW acknowledge support from the NSF under Award
No.~CCF-1350397.
\bibliographystyle{unsrt}
\bibliography{steering}

\begin{thebibliography}{10}

\bibitem{Schroedinger1935}
Erwin Schr\"{o}dinger.
\newblock Discussion of probability relations between separated systems.
\newblock {\em Mathematical Proceedings of the Cambridge Philosophical
  Society}, 31(4):555--563, October 1935.

\bibitem{Einstein1935}
Albert Einstein, Boris Podolsky, and Nathan Rosen.
\newblock Can quantum-mechanical description of physical reality be considered
  complete?
\newblock {\em Physical Review}, 47(10):777--780, May 1935.

\bibitem{Wiseman2006}
Howard~M. Wiseman, S.~J. Jones, and Andrew~C. Doherty.
\newblock Steering, entanglement, nonlocality, and the {EPR} paradox.
\newblock {\em Physical Review Letters}, 98(14):140402, April 2007.
\newblock arXiv:quant-ph/0612147.

\bibitem{Branciard2012}
Cyril Branciard, Eric~G. Cavalcanti, Stephen~P. Walborn, Valerio Scarani, and
  Howard~M. Wiseman.
\newblock One-sided device-independent quantum key distribution: Security,
  feasibility, and the connection with steering.
\newblock {\em Physical Review A}, 85(1):010301, January 2012.
\newblock arXiv:1109.1435.

\bibitem{Law2014}
Yun~Zhi Law, Le~Phuc Thinh, Jean-Daniel Bancal, and Valerio Scarani.
\newblock Quantum randomness extraction for various levels of characterization
  of the devices.
\newblock {\em Journal of Physics A: Mathematical and Theoretical},
  47(42):424028, October 2014.
\newblock arXiv:1401.4243.

\bibitem{Passaro2015}
Elsa Passaro, Daniel Cavalcanti, Paul Skrzypczyk, and Antonio Ac{\'{\i}}n.
\newblock Optimal randomness certification in the quantum steering and
  prepare-and-measure scenarios.
\newblock {\em New Journal of Physics}, 17(11):113010, October 2015.
\newblock arXiv:1504.08302.

\bibitem{Piani2014}
Marco Piani and John Watrous.
\newblock {Einstein}-{P}odolsky-{R}osen steering provides the advantage in
  entanglement-assisted subchannel discrimination with one-way measurements.
\newblock {\em Physical Review Letters}, 114(6):060404, February 2015.
\newblock arXiv:1406.0530.

\bibitem{Gallego2015}
Rodrigo Gallego and Leandro Aolita.
\newblock Resource theory of steering.
\newblock {\em Physical Review X}, 5(4):041008, October 2015.
\newblock arXiv:1409.5804.

\bibitem{Skrzypczyk2013}
Paul Skrzypczyk, Miguel Navascues, and Daniel Cavalcanti.
\newblock Quantifying {Einstein-Podolsky-Rosen} steering.
\newblock {\em Physical Review Letters}, 112(18):180404, May 2014.
\newblock arXiv:1311.4590.

\bibitem{Kaur2016}
Eneet Kaur and Mark~M. Wilde.
\newblock Relative entropy of steering: On its definition and properties.
\newblock arXiv:1612.07152.

\bibitem{Tucci2002}
Robert~R. Tucci.
\newblock Entanglement of distillation and conditional mutual information.
\newblock arXiv:quant-ph/0202144.

\bibitem{Christandl2003}
Matthias Christandl and Andreas Winter.
\newblock ``{Squashed} entanglement'' - an additive entanglement measure.
\newblock {\em Journal of Mathematical Physics}, 45(3):829--840, March 2004.
\newblock arXiv:quant-ph/0308088.

\bibitem{MW99}
Ueli~M. Maurer and Stefan Wolf.
\newblock Unconditionally secure key agreement and the intrinsic conditional
  information.
\newblock {\em IEEE Transactions on Information Theory}, 45(2):499--514, March
  1999.

\bibitem{Hayden2003}
Patrick Hayden, Richard Jozsa, Denes Petz, and Andreas Winter.
\newblock Structure of states which satisfy strong subadditivity of quantum
  entropy with equality.
\newblock {\em Communications in Mathematical Physics}, 246(2):359--374, April
  2004.
\newblock arXiv:quant-ph/0304007.

\bibitem{TGW14IEEE}
Masahiro Takeoka, Saikat Guha, and Mark~M. Wilde.
\newblock The squashed entanglement of a quantum channel.
\newblock {\em IEEE Transactions on Information Theory}, 60(8):4987--4998,
  August 2014.
\newblock arXiv:1310.0129.

\bibitem{LR73}
Elliott~H. Lieb and Mary~Beth Ruskai.
\newblock Proof of the strong subadditivity of quantum-mechanical entropy.
\newblock {\em Journal of Mathematical Physics}, 14:1938--1941, 1973.

\bibitem{W15book}
Mark~M. Wilde.
\newblock {\em From Classical to Quantum Shannon Theory}.
\newblock March 2016.
\newblock arXiv:1106.1445v7.

\bibitem{BBPS96}
Charles~H. Bennett, Herbert~J. Bernstein, Sandu Popescu, and Benjamin
  Schumacher.
\newblock Concentrating partial entanglement by local operations.
\newblock {\em Physical Review A}, 53(4):2046--2052, April 1996.
\newblock arXiv:quant-ph/9511030.

\bibitem{GEW16}
Kenneth Goodenough, David Elkouss, and Stephanie Wehner.
\newblock Assessing the performance of quantum repeaters for all
  phase-insensitive {Gaussian} bosonic channels.
\newblock {\em New Journal of Physics}, 18(6):063005, June 2016.
\newblock arXiv:1511.08710.

\bibitem{Milne2014}
Antony Milne, Sania Jevtic, David Jennings, Howard Wiseman, and Terry Rudolph.
\newblock Quantum steering ellipsoids, extremal physical states and monogamy.
\newblock {\em New Journal of Physics}, 16(8):083017, August 2014.
\newblock arXiv:1403.0418.

\bibitem{Reid2013}
Margaret~D. Reid.
\newblock Monogamy inequalities for the {EPR} paradox and quantum steering.
\newblock {\em Physical Review A}, 88(6):062108, December 2013.
\newblock arXiv:1310.2729.

\bibitem{KW04}
Masato Koashi and Andreas Winter.
\newblock Monogamy of quantum entanglement and other correlations.
\newblock {\em Physical Review A}, 69(2):022309, February 2004.
\newblock arXiv:quant-ph/0310037.

\bibitem{WHBH12}
Mark~M. Wilde, Patrick Hayden, Francesco Buscemi, and Min-Hsiu Hsieh.
\newblock The information-theoretic costs of simulating quantum measurements.
\newblock {\em Journal of Physics A: Mathematical and Theoretical},
  45(45):453001, November 2012.
\newblock arXiv:1206.4121.

\bibitem{SABGS16}
Ana~Bel\'en Sainz, Leandro Aolita, Nicolas Brunner, Rodrigo Gallego, and Paul
  Skrzypczyk.
\newblock Classical communication cost of quantum steering.
\newblock {\em Physical Review A}, 94(1):012308, July 2016.
\newblock arXiv:1603.05079.

\end{thebibliography}

\end{document}